\theoremstyle{theorem}
\newtheorem{thm}{Theorem}
\newtheorem{corl}{Corollary}
\newtheorem{pro}{Proposition}
\newtheorem{lem}{Lemma}
\newtheorem{correspondence}{Correspondence}
\newtheorem*{exce}{Exception}
\newtheorem{claim}{Claim}
\newtheorem*{proposal}{Proposal}
\theoremstyle{definition}
\newtheorem{rem}{Remark}
\newtheorem*{war}{Warning}
\def\Dsl{\,\raise.15ex\hbox{/}\mkern-13.5mu D}
\def\dsl{\,\raise.25ex\hbox{/}\mkern-10.5mu \partial}
\def\bze{\boldsymbol{\zeta}}
\title{Fuchsian ODEs as Seiberg dualities
}
\authors{Sergio Cecotti\footnote{e-mail: {\tt cecotti@sissa.it}, {\tt cecotti@bimsa.cn}}\vskip 9pt

\centerline{Yanqi Lake Beijing Institute of Mathematical Sciences and Applications (BIMSA)\footnote{On leave from SISSA,
via Bonomea 265, Trieste, Italy}}
\centerline{Yanqi Island, Huairou District, Beijing 101408, China}
%\centerline{${}^{\pmb\star}$ Department of Mathematics, Harvard University, Cambridge MA 02138, USA}
%\centerline{${}^\ddagger$ Department of Physics Jefferson Physical Laboratory,}
%\centerline{Harvard University, Cambridge, MA 02138, USA}
}
\abstract{The classical theory of Fuchsian differential equations is largely equivalent to the theory of Seiberg dualities for quiver SUSY gauge theories. In particular: \emph{all} known integral representations of solutions, and
their connection formulae, are immediate consequences of (analytically continued) Seiberg duality in view of the dictionary between linear ODEs and gauge theories with 4 supersymmetries.

The purpose of this divertissement is to explain ``physically'' this remarkable relation in the spirit of Physical Mathematics. The connection goes through a ``mirror-theoretic'' identification of 
 irreducible logarithmic connections on $\mathbb{P}^1$ with would-be BPS dyons of
4d $\cn=2$ $SU(2)$ SYM coupled to a certain Argyres-Douglas ``matter''. When the underlying bundle
is trivial, i.e.\! the log-connection is a Fuchs system, the world-line theory of the dyon simplifies and the action of Seiberg duality on the Fuchsian ODEs becomes
quite explicit. The duality action is best described in terms of Representation Theory of Kac-Moody Lie algebras
(and their affinizations). 
}
\begin{document}
\maketitle

\tableofcontents

%\newpage 

\section{Introduction and Overview}

A \emph{Fuchsian ODE}\footnote{\ For a recent textbook survey see \cite{book}.} is a linear differential equation of order $n$ on the Riemann sphere $\mathbb{P}^1$
\be\label{ODE1}
\sum_{k=0}^n a_k(z)\,\frac{d^k y}{dz^k}=0
\ee
with polynomial coefficients $a_k(z)$
and \emph{regular singularities}  in a finite set of points $S\equiv\{z_1,\cdots\!, z_s\}\subset\mathbb{P}^1$ \cite{deligne}.
It can 
be written\footnote{\ Strictly speaking this can be done if some mild condition is satisfied, e.g.\! if the monodromy representation is irreducible \cite{beauville}. We can always put the ODE in the form \eqref{ODE2} by adding an extra point to the set $S$ with trivial local monodromy (`apparent singularity') \cite{beauville}.} as 
a \emph{Fuchsian system}  i.e.\! a system of $n$ first-order equations of the form
\be\label{ODE2}
\nabla\mspace{1mu} Y(z)=0\quad\text{where}\quad \nabla\equiv \frac{d}{dz}+\sum_{i=1}^s\frac{A_i}{z-z_i},
\ee 
where $Y(z)$ is a $n$ component vector of unknown functions and
the $A_i$'s are constant $n\times n$ matrices which satisfy a certain \emph{non-resonant} condition\footnote{\ Let $w_{i,\ell}$ be the eigenvalues of $A_i$. The non-resonant condition says that if $w_{i,\ell^\prime}=w_{i\ell}\bmod1$ then $w_{i,\ell^\prime}=w_{i,\ell}$.
In the language of \cite{deligne}: the eigenvalues of $A_i$ should belong to a \emph{transversal} $T_i$ of $\C/\Z$.}. 

\subparagraph{Logarithmic connections.} More generally, 
we consider pairs $(E,\nabla)$ where $E\to\mathbb{P}^1$ is a holomorphic vector bundle of rank $n$
and 
\be
\nabla\colon E\to E\otimes\Omega^1(\log S)
\ee
 is a \emph{logarithmic connection} \cite{deligne} on $E$ i.e.\! a connection which locally takes the form \eqref{ODE2}. $\nabla$ is automatically flat in one dimension, $\nabla^2=0$, and the general ODE
 \be\label{ODE3}
 \nabla\Psi=0
 \ee
is integrable.  A pair $(E,\nabla)$ is a classical Fuchsian system iff $\Psi$ is a vector of functions $Y(z)$, that is, if $E$ is the \emph{trivial rank-$n$ bundle} $\co^{\, n}$.
 
The Riemann-Hilbert correspondence (RHC) identifies (up to equivalence)
 the general order-$n$ ODE \eqref{ODE3}
with its \emph{monodromy representation} \cite{deligne}
\be
\varrho\colon\pi_1(\mathbb{P}^1\setminus S,\ast)\to GL(n,\C).
\ee
The image $\Gamma\subset GL(n,\C)$ is the \emph{monodromy group} and its Zariski-closure in the complex algebraic group $GL(n,\C)$ the \emph{differential Galois group} of the ODE \cite{diffGalois}. 
$\varrho$ (resp.\! $\Gamma$) is independent of the chosen base point $\ast$ up to conjugacy,
and we identify monodromy representations (resp.\! groups)
up to conjugation in $GL(n,\C)$. 
When the representation $\varrho$ is \emph{reducible} (that is, preserves a non-trivial subspace of $\C^n$)
 the ODE can be reduced to ODEs of lower order \cite{book}.
When convenient we shall restrict to irreducible ODEs
without essential loss. 
\medskip

We shall say (a bit abusively) that an ODE is ``unitary'' iff the residue matrices $A_i$ are Hermitian $A_i=A_i^\dagger$.
We introduce this notion mainly for didactical reasons: the adjective ``unitary'' here refers to a property of the
physical system associated to the ODE. However in the situations of main interest for the applications this will coincide with the usual notion that $\nabla$ is a unitary connection i.e.\! $\Gamma\subset U(n)$.
The general case where the $A_i$ are arbitrary complex matrices
may be obtained  as the ``complexification'' of the ``unitary'' case (modulo important subtleties).
Our strategy is to understand first the simpler ``unitary'' case and then analytically continue to the general situation.

\medskip

In view of the RHC, a general ODE \eqref{ODE3} is specified by two sets of data:
\begin{itemize}
\item[\bf (I)] the marked points $S=\{z_1,\cdots\!,z_s\}\subset \mathbb{P}^1$ together with the $s\equiv |S|$ conjugacy classes $C_i\subset GL(n,\C)$
of the \emph{local monodromies} $\varrho_i\equiv\varrho(\ell_i)$ along small loops $\ell_i$ encircling the marked points $z_i\in S$. A marked point $z_j$ with trivial local monodromy, $\varrho_j=1$, is called an \emph{apparent singularity:} in this case $A_j=h\!\cdot\!\boldsymbol{1}$ with $h\in\Z$. A marked point
$z_i\in S$ with $\varrho_i\neq 1$ is an \emph{essential} singularity;
\item[\bf (II)] a point $a$ in the moduli space $\mathscr{A}$
of dimension $n$ complex representations of  
\be
\pi_1(\mathbb{P}^1\setminus S)\equiv\langle \ell_1,\cdots\!,\ell_s\;|\; \ell_1\ell_2\cdots\ell_s=1\rangle,
\ee
with prescribed conjugacy classes $C_i\subset GL(n,\C)$ for the generators $\varrho_i\equiv\varrho(\ell_i)$, modulo overall conjugacy:
\be
\mathscr{A}\overset{\rm def}{=}\big\{\varrho_i\in GL(n,\C),\ i=1,\cdots\!,s \;\big|\; \varrho_1\varrho_2\cdots\varrho_s=1\ \text{and }\varrho_i\in C_i\big\}\big/GL(n,\C).
\ee
\end{itemize}

When $\mathscr{A}=\varnothing$ there is no $(E,\nabla)$ pair with the given $C_i$'s.
In facts there is a \emph{topological obstruction} to the existence of a pair $(E,\nabla)$:
the pair exists (i.e.\! $\mathscr{A}\neq\varnothing$) ionly if\,\footnote{\ Eq.\eqref{i987weqq0} is obvious when the connection is unitary ($\Gamma\subset U(n)$). In this case the first Chern class $c_1(E)$ of $E$ is represented by 
$(2\pi i)^{-1}$ times the curvature $(1,1)$-current $\overline{\partial}\mspace{1.5mu}\mathrm{tr}(\Phi^{-1}\partial\Phi)$ where $\Phi$ is a fundamental solution of the ODE. Using the
Poincar\'e-Lelong formula we conclude that $c_1(E)$ is represented by the current $-\sum_i \mathrm{tr}\,A_i\, \delta^{(2)}(z-z_i)$;
integrating over $\mathbb{P}^1$ we get eq.\eqref{i987weqq0}. Exponentiating one gets the identity $1=\prod_i\det(\varrho_i)\equiv\exp(-2\pi i\sum_i\mathrm{tr}\,A_i)$ which follows from the
relation $\ell_1\cdots\ell_s=1$ in $\pi_1(\mathbb{P}^1\setminus S)$. The formula remains true in the general case \cite{CB3}.}
\be\label{i987weqq0}
\deg E+\sum_{i=1}^s \mathrm{tr}\,A_i=0
\ee
(this condition is necessary;\footnote{\ I thank Alexander Soibelman for comments on this issue.}
for the sufficient condition see \S.\,7 of \cite{CB3}).

An irriducible ODE $(E,\nabla)$ has \emph{rigid monodromy} iff $\mathscr{A}$ is a point. When this is the case, we say that the ODE is \emph{rigid.}
The moduli space $\mathscr{A}$ admits a stratification
\be
\mathscr{A}=\coprod\nolimits_{\bze} \ca(\bze),\qquad \ca(\bze)\overset{\rm def}{=}\big\{\Gamma\in\mathscr{A}\;\big|\; \bze(\Gamma)=\bze\big\}
\ee 
where the complex Lie group $\bze(\Gamma)\supset GL(1,\C)$ is the centralized of $\Gamma$ in $GL(n,\C)$.
We write $\ca$ for the \emph{generic} stratum $\ca(\bze_\text{min})$, i.e.\! the open dense domain
with the smallest possible centralizer $\bze_\text{min}$. 
One has \cite{book}:
\be\label{887zzaq123}
\dim_\C \ca= 1+(s-2)n^2-\sum_{i=1}^s \dim_\C \bze(A_i)+\dim_\C \bze_\text{min}\in 2\,\mathbb{N},
\ee 
where $\bze(A_i)\subset GL(n,\C)$ is the subgroup of matrices commuting with the residue matrix $A_i$. The ODE \eqref{ODE3} is \emph{irreducible} along $\ca$ iff $\bze_\text{min}=\C^\times$.
From eq.\eqref{887zzaq123} we see
that in the non-trivial case ($s\geq3$ and $n\geq2$) the only ODE which is rigid for \emph{generic} $C_i$'s is the order-2 hypergeometric.
In the physical applications we are mainly interested in rigidity for \emph{non-generic} $C_i$'s (see \S.\,\ref{s:exam} below).
According to the $19^\text{th}$ century tradition, in the Fuchs case the generic stratum $\ca$ is called the space of
\emph{accessory parameters}. We extend this terminology to the more general ODE \eqref{ODE3}. 

\subparagraph{ODEs of geometric/physical origin.}
The ODEs \eqref{ODE3} which arise from geometry\footnote{\ The typical examples are the Picard-Fuchs equations 
satisfied by the periods of one-parameter families of projective varieties.} (and physics) have special properties (see e.g.\! \cite{peters}):
\begin{itemize}
\item[(A)] the local monodromies $\varrho_i$ are \emph{quasi-unipotent}: $(\varrho_i^{m_i}-1)^{k_i+1}=0$ for some $m_i,k_i\in\mathbb{N}$;
\item[(B)] the monodromy group $\Gamma$
is \emph{semisimple} and $\Gamma\subset SL(n,\C)$. In facts in the geometric/physical set-up we have the stronger condition
$\Gamma\subseteq M(\Z)$ where $M(\R)\subsetneqq SL(n,\R)$ is a semisimple, real, Lie subgroup of Hodge type\footnote{\ That is, a semisimple real Lie group without compact simple factors having a compact maximal torus. More precisely: in the geometric set-up $\Gamma$ is defined over $\mathbb{Q}$ and
$M$ is the \emph{$\mathbb{Q}$-algebraic monodromy} (a.k.a.\! the \emph{Mumford-Tate group} \cite{simpson,griffiths1,griffiths2,periods}) given by the Zariski closure of $\Gamma$ over $\mathbb{Q}$. $M(\R)$ is the real Lie group of the $\R$-valued points of $M$.} \cite{simpson,griffiths1,griffiths2,periods}. 
Without loss $\varrho$ may be assumed to be \emph{irreducible}.
Recall that $\varrho$ is irreducible iff $\dim_\C \boldsymbol{\zeta}(\Gamma)=1$.
\end{itemize}

For logarithmic connections on a curve the Deligne-Simpson conjecture \cite{simpson}
is a theorem \cite{katz}: an irreducible ODE \eqref{ODE3} over $\mathbb{P}^1$ which is rigid and has quasi-unipotent
local monodromies is \emph{motivic}, that is, its solutions are the periods
$\int_\gamma \lambda$ of an \emph{algebraic} differential form $\lambda$. 
Equivalently: the rigid monodromy groups are \emph{integral} i.e.\!
$\Gamma\subset GL(n,\mathfrak{o}_\mathbb{F})$ where $\mathfrak{o}_\mathbb{F}$ is the ring of integers in some number field $\mathbb{F}$.
This fundamental math result is (essentially)
a consequence of Seiberg duality: see  \S.\,\ref{s:solution} below for the explicit $\lambda$ in the
Fuchsian case. 
\medskip

In the geometric/physical set-up the eigenvalues $\xi_{i,\ell}$
of the local monodromies $\varrho_i$ belong to the unit circle. For elegance of the physical interpretation 
 we always assume $|\xi_{i,\ell}|=1$ 
although mathematically the statements hold independently of this assumption.
\medskip

 In the geometric/physical applications $\varrho$ takes values in
$SL(n,\C)$ so $\deg E=0$.
In view of the applications to physical problems, in this divertissement we mainly focus on
the important (and simpler) case where $\deg E=0$. However in section 2 we shall sketch the general story from our ``physical'' standpoint.

\subparagraph{Fuchsian systems.} The Fuchsian systems are 
 the simpler situations where $E$ is the trivial rank $n$ bundle $\co^{\,n}$. Let $\Phi(z)$
 be the \emph{fundamental solution} of \eqref{ODE2}, i.e.\! a $n\times n$ matrix such that $\nabla\Phi(z)=0$
 normalized by the condition $\Phi|_\ast=\boldsymbol{1}_n$ at a base point $\ast\in\mathbb{P}^1\setminus S$.
The Maurier-Cartan 1-form $(d\Phi)\Phi^{-1}$ is then meromorphic on
$\mathbb{P}^1$, so its total residue vanishes
\be\label{Afucchs}
A_1+A_2+\cdots +A_s=0.
\ee
The eigenvalues of $A_i$ are called \emph{(local) exponents} of the ODE. Since we are assuming
$|\xi_{i,\ell}|=1$, for us the exponents are real numbers.

When $E$ is trivial, the accessory parameter space $\ca$
is a holomorphic symplectic manifold with $(2,0)$ form $\Omega$.
If the exponents are real,  we have an \emph{anti}-holomorphic involution
\be
\iota\colon \ca\to\overline{\ca}
\ee
 which acts as $\varrho\mapsto (\varrho^\dagger)^{-1}$ on the monodromy representation and as complex
 conjugation on the 2-form $\Omega$. 
% In facts, combining $\iota$ with one of the several holomorphic involutions
% $\tau\colon \ca\to \ca$ ($\tau^2=\mathsf{Id}$), we construct many anti-involutions 
% \be\label{otherinvol}
% \iota_\tau\equiv\iota\circ\tau\colon  \ca\to\overline{\ca} 
% \ee 
% whose physical meaning will be clarified in \S.\,\ref{s:many}. 

\subparagraph{``Unitary'' Fuchsian ODEs: the SUSY model and its Higgs branch.} It is convenient to consider first ``unitary'' Fuchsian systems with Hermitian residue matrices
since in this set-up the physical picture is pretty standard. The general case will then be obtained by ``complexification'' (in the sense of Cartan's complexification of real-analytic manifolds \cite{cartan}).  A necessary condition for the existence of a ``unitary'' ODE
is that all classes $C_i$ contain unitary elements
\be\label{unittt}
C_i\cap U(n)\neq\varnothing\quad \text{for all $i$.}
\ee
Our first claim is:

\begin{correspondence}\label{corr1} In the Fuchsian set-up, $E\simeq\co^{\,n}$, there is a natural correspondence
\be
(\co^{\,n},\nabla)\longleftrightarrow (\mathscr{Q},v)
\ee
where $(\co^{\,n},\nabla)$ is a Fuchsian system with monodromy $\Gamma\subset U(n)$, $\mathscr{Q}$ is a 4-SUSY quiver quantum mechanical system  {\rm (SQM)} defined\,\footnote{\ While $\mathscr{Q}$ is not unique, the Higgs branches of the several SQMs associated to an ODE
are all isometric, that is, $\mathscr{Q}$ is unique up to a group of \emph{trivial} IR dualities which provide the natural equivalence in the game.} by the ODE datum {\bf (I)}, and $v$ is a point in its space  $\cm(H)$ of 
SUSY-preserving (classical) vacua where the gauge group $G$ is broken to
\be
H\equiv \bze(\Gamma)\cap U(n)\supseteqq U(1).
\ee 
The ODE is \emph{irreducible} iff $H=U(1)$. 
\end{correspondence}

Let $H_\text{min}\equiv \bze_\text{min}\cap U(n)$. We write $\cm$ for
$\cm(H_\text{min})$ and call it the \emph{Higgs branch} of $\mathscr{Q}$.
We say that $\cm$ is \emph{fully Higgsed} iff $H_\text{min}=U(1)$. 
An irreducible ``unitary'' ODE then corresponds to a point $v$ in a fully Higgsed branch.
Since we are free to assume the ODE to be irreducible,
we lose nothing essential if we consider only fully Higgsed branches, forgetting all other branches of the SUSY vacuum space. 
\medskip

 The Lagrangian $L$ of $\mathscr{Q}$ is formally the dimensional reduction to 1d of a 4d $\cn=1$ quiver gauge theory.
The parent 4d QFT may be anomalous and/or non UV-complete, but the 1d SQM has no pathology. The reader may feel more natural to consider the 3d $\cn=2$ version of $\mathscr{Q}$
 which is also a nice QFT. The 3d perspective has the advantage that $\cm$ parametrizes actual \emph{quantum} vacua not classical ones as in the SQM situation. Our emphasis on the 1d viewpoint mainly reflects the author's own prejudices.
 \medskip
 
  In order for \textbf{Correspondence \ref{corr1}} to make sense, SUSY should \emph{not} be spontaneously broken in $\mathscr{Q}$. 
This puts a restriction on the Fayet-Iliopoulos (FI) couplings in the 1d Lagrangian $L$. 
Going back for a moment to the general logarithmic connection \eqref{ODE3},
where a similar (but more complicated) correspondence is expected to hold (cf.\! \S.\,2),
one sees that the condition on the FI couplings for unbroken SUSY
 is \emph{precisely equivalent} to setting to zero
 the topological obstruction \eqref{i987weqq0} to the existence of the pair $(E,\nabla)$. 
 \medskip

The space $\overline{\cm}$ of \emph{all} SUSY vacua of $\mathscr{Q}$ is a \emph{projective complex variety}, and the Higgs branch $\cm\subset \overline{\cm}$ carries a (possibly non-complete) natural K\"ahler metric. By \textbf{Correspondence \ref{corr1}} $\cm$ is
the locus in the accessory space $\ca$ which parametrize ``unitary'' Fuchsian systems with the given local monodromy classes $C_i$ satisfying \eqref{unittt}. 
In facts, when \eqref{unittt} holds,
the Higgs branch $\cm$ (seen as a real-analytic manifold) is the \emph{fixed locus} of $\ca$ under
the complex conjugation $\iota$
and  
\be
\dim_\C\ca=2\dim_\C\cm.
\ee $\Omega|_{\cm}$ is then a real symplectic form to be identified
with the K\"ahler $(1,1)$-form of $\cm$.
In view of these facts, when \eqref{unittt} holds
the accessory parameter space $\ca$ is the natural ``complexification''
of the Higgs branch $\cm$ of the associated SQM $\mathscr{Q}$.

%
%\begin{correspondence}
%The fixed loci $\cm_\tau\subset\ca$ of the other anti-involutions $\iota_\tau$ in eq.\eqref{otherinvol}
%are the Higgs branches of other unitary SQM models $\mathscr{Q}_\tau$
%which are \emph{not} unitarily equivalent to $\mathscr{Q}$ (in general)
%but have
%the same complexification. 
%\end{correspondence}

\subparagraph{Infra-red dualities.} By its very definition, 
an infra-red (IR) duality $\mathscr{Q}\leadsto \mathscr{Q}^{\mspace{1mu}\prime}$
between 
two quiver SQMs yields \emph{K\"ahler isometries} $\phi_H\colon\cm(H)\to\cm(H)^\prime$ between the several branches of its SUSY vacuum space $\overline{\cm}$.
We focus on the isometry $\phi\colon \cm\to\cm^\prime$ between fully Higgsed branches.
The IR dualities for quiver SQM/3d\,QFT 
are compositions of basic \emph{Seiberg dualities} \cite{SEI1,SEI2,SEI3,SEI4}.
\medskip
 
Let $\mathscr{Q}$ be the quiver SQM associated to a ``unitary'' Fuchsian ODE
for a certain datum $\textbf{(I)}$. We say that the Seiberg duality
$\mathscr{Q}\leadsto\mathscr{Q}^{\mspace{1mu}\prime}$ is
 \emph{admissible} iff the dual SQM $\mathscr{Q}^{\mspace{1mu}\prime}$ also corresponds to
 a ``unitary'' Fuchsian ODE for some datum $\textbf{(I)}^\prime$.
 The groupoid of admissible Seiberg dualities acts 
 on the space of ``unitary'' Fuchsian ODEs by
 \be\label{uuuuuy76}
 (\mathscr{Q},v)\leadsto (\mathscr{Q}^{\mspace{1mu}\prime},\phi(v)).
 \ee
 
 Non-trivial IR dualities change the order $n$ and the number $s$ of \emph{essential} regular singularities,\footnote{\ Seiberg dualities act trivially on the set $S\subset\mathbb{P}^1$ of marked points, but they may transform an essential singularity into an apparent one which we happily forget (see \S.\,\ref{s:internal} below).}
thus potentially leading to a simpler differential equation. 
All structural properties of a Fuchsian ODE such as, irreducibility, rigidity, isomonodromic deformations,
etc.,
are preserved by the IR duality.

What is more, Seiberg duality applies functorially
to the space of 
\emph{solutions} of the ODEs: if we know the fundamental solution $\Phi(z)$ of one Fuchsian equation,
we can explicitly write the solution to any other ODE which can be obtained from it
by a chain of admissible Seiberg dualities, using an algorithm originally due to N. Katz called \emph{middle convolution}
\cite{katz} (see also \cite{convolution,convolution1,convolution2}; for a textbook treatment see \cite{book}). In particular, a ``unitary'' Fuchs ODE has rigid monodromy if and only if there
is a chain of admissible Seiberg dualities which maps it to the first order ODE whose associated SQM/3d\,QFT is \emph{free} (hence trivially solvable).

When the monodromy is rigid, this leads to an explicit integral representation of the solutions\footnote{\ See e.g.\! \textbf{Theorem 7.23} of \cite{book}; for more details \cite{AA1}. The story is reviewed in \S.\,\ref{s:star} below.}, and we can write their connection formulae explicitly \cite{AA2,AA2b}.
This consequence of Seiberg duality was already observed by Riemann (using a slightly different language) in his 1851 lecture notes on the hypergeometric equation, and played a central role in the mathematics of the following $1.7$ centuries.

\subparagraph{The general Fuchsian case: ``complexification''.}
When  the eigenvalues $\xi_{i,\ell}$
are \emph{generic} points on the unit circle (satisfying the topological constraint $\prod_{i,\ell}\xi_{i,\ell}=1$)  the condition \eqref{unittt} holds automatically and
there is a nice associated SQM $\mathscr{Q}$. In this situation the extension from $\Gamma\subset U(n)$ to a general monodromy $\Gamma\subset GL(n,\C)$
just requires to replace in  \textbf{Correspondence \ref{corr1}} the Higgs branch $\cm$ with its ``complexification'' $\ca$:
\be\label{uuyyqart}
(\co^{\,n},\nabla)\longleftrightarrow (\mathscr{Q},a),\qquad a\in\ca.
\ee

The ``complexified'' Higgs branch $\ca$ is best seen as the \emph{ordinary} Higgs branch
of a ``complexified'' quiver SQM $\mathscr{Q}^{\mspace{2mu}\C}$ obtained from $\mathscr{Q}$
by the following procedure:
\begin{itemize}
\item[(a)] declare the anti-chiral superfields $X^*_v$ to be independent complex superfields
instead of being the Hermitian conjugates $\overline{X}_v$ of the chiral superfields $X_v$;
\item[(b)] replace the compact gauge group $G$ by its complexification $G^{\mspace{2mu}\C}$, equivalently
make all vector superfields $V_\sigma$ complex.
\end{itemize}
One may consider more general models where also the FI couplings are complex: this will allow for complex exponents in the Fuchsian equation. 
We shall not
use this freedom. 

$\mathscr{Q}^{\mspace{1mu}\C}$ has the anti-holomorphic involution $\iota$
\be
X^*_v \longleftrightarrow \overline{X}_v,\qquad V_\sigma\longleftrightarrow \overline{V}_\sigma.
\ee
The Higgs branch $\ca$ of $\mathscr{Q}^{\mspace{2mu}\C}$ is defined as its space of SUSY preserving classical vacua where the gauge group $G^{\mspace{1mu}\C}$ is broken to
$\bze_\text{min}$.  Again, we may assume $\bze_\text{min}=\C^\times$ with no loss.

The complexified model $\mathscr{Q}^{\mspace{2mu}\C}$ is clearly \emph{non-unitary}.
Non-unitary SQMs make sense in full generality, even when condition \eqref{unittt}
is \emph{not} satisfied. In this case the ``unitary'' locus $\cm=\varnothing$,
but $\ca$ is still a good holomorphic symplectic manifold of dimension \eqref{887zzaq123}.
The only difference with the case \eqref{unittt} is that now $\iota$ has no fixed point in $\ca$
(but only in $\mathscr{A}\equiv\overline{\ca}$).
Then

\begin{correspondence} To a ``unitary'' ODE there corresponds a unitary SQM.
To a ``non-unitary'' OPE there corresponds a \emph{non-unitary} SQM which is the
``complexification'' of a standard 4-SUSY quiver SQM.
The Higgs branch of the non-unitary SQM is the accessory parameter space $\ca$.
Along $\ca$ the complexified gauge group is broken down to $\bze_\text{\rm min}$.
The monodromy is irreducible iff $\bze_\text{\rm min}=\C^\times$.
\end{correspondence} 

The analytically continued Seiberg dualities act on the space of non-unitary quiver SQM
by biholomorphic maps $\ca\to \ca^\prime$. Again, we may use this action to simplify the
ODE. The unbroken gauge group $\bze_\text{\rm min}$
is an invariant of the duality,  so rigid ODEs are mapped into rigid ones.
There is a chain of dualities connecting the given ODE to the $1^\text{st}$ order one (which is trivially rigid) if and only if the original ODE was rigid. If we know the solution to one ODE,
we can write the solutions to any other one in the same complexified Seiberg orbit, 
write explicitly its connection formulae, etc. 
In the non-rigid case, Seiberg duality identifies the \emph{isomonodromic  deformations} of dual ODEs  as we vary the $s$-tuple of marked points $\{z_1,\cdots\!,z_s\}$
\cite{AA3}.

\subparagraph{Why writing this story?} This \emph{divertissement} belongs to the discipline that
Greg Moore calls ``Physical Mathematics'', that is, the art of deducing/explaining/clarifying purely
mathematical results using insights from physics.  In particular this note does not contain any \emph{essentially new}  mathematical theorem.
Its purpose is to give a conceptual ``physical'' explanation of why the theory of Fuchsian ODEs and the theory of IR dualities in 4-supercharge systems ought to be largely one and the same,
and also show how tricky (known) math theorems are obvious from the physical side
although they look surprising (to say the least) from the original mathematical perspective.

 The ``physical'' correspondence is dictated by homological mirror symmetry and 
 goes through the BPS dyons of a certain class of 4d $\cn=2$ gauge theories
constructed in \cite{classification} and further studied in \cite{think,shep}. 

We stress that our ``physical'' story is merely a \emph{reinterpretation} in a suggestive
 field-theoretic language of the rigorous  mathematical approach to the Deligne-Simpson problem \cite{sim1,sim2,sim3,sim4,sim5,sim6}
due to W. Crawley-Boevey \cite{CB1,CB2,CB3,CB4,CB5,CB6}
which builds on the Representation Theory of deformed pre-projective algebras \cite{PP1,PP2,PP3,PP4}.
For a math textbook treatment see \cite{book}.

\subparagraph{Organization.}This note is organized as follows. In section 2 we sketch the very natural
correspondence between ODEs and $\cn=2$ BPS dyons using the BPS quiver approach \cite{bpsqui,4d/2d}  together with
results from \cite{classification,think,galois} and their relation to the Representation Theory of squid algebras \cite{CB3,squid}
and affine extensions of Kac-Moody Lie algebras \cite{alg2,alg3}. 
In section 3 we consider the associated 1d SQM following \cite{classification,bpsqui,think}, describe its complexification and its relation to the modules of the associated deformed pre-projective algebra. Then
we define the admissible Seiberg dualities and construct explicitly the isometry $\phi$ in eq.\eqref{uuuuuy76}.
In section 4 we briefly review how Seiberg duality acts on the \emph{solutions} of the Fuchsian ODEs and their connection coefficients. Section 5 contains a list of examples. We defer a minor technicality to the appendix.

\medskip

\section{Logarithmic connections vs.\! BPS dyons}\label{s:second}

In this section $E$ is any rank-$n$ holomorphic bundle over $\mathbb{P}^1$
endowed with a non-resonant logarithmic connection $\nabla$. We assume $|\xi_{i,\ell}|=1$
although it is not strictly necessary.
\medskip

 Homological mirror symmetry associates to the pair $(E,\nabla)$
 a would-be BPS  dyon of a 4d $\cn=2$ gauge theory. 
 The numerical invariants of the
 ODE become the quantum numbers of the would-be dyon. 
 The correspondence is natural in several ways.
In particular:
 \begin{correspondence}\label{kkkkjjjjq}
When the ODE $(E,\nabla)$ reduces to a set of equations of smaller order,
 the would-be BPS dyon becomes marginally
 unstable against decaying into several would-be BPS dyons carrying the quantum numbers of
the smaller order ODEs.
\emph{Stable} BPS dyons correspond to
ODEs with irreducible monodromy $\varrho$.
 \end{correspondence}
 
 In this section we describe as the correspondence arises physically.

\subsection{From a \emph{log}-connection to a 4d $\cn=2$ gauge model}

The first step is to associate a 4d $\cn=2$ gauge theory with gauge group $SU(2)$
to the pair $(E,\nabla)$.
As mentioned before, we restrict to 
non-resonant $\nabla$'s such that the eigenvalues $\xi_{i,\ell}$ of $\varrho_i$ have unit norm.
Since $[\varrho_i]\equiv [\exp(-2\pi i\, A_i)]\subset GL(n,\C)$,
the last condition means that the eigenvalues of the residue matrices $A_i$
are \emph{real} numbers in a transversal of $\R/\Z$. 
We write $E_z$ for the fiber $\simeq \C^n$ of the bundle $E\to\mathbb{P}^1$ over the point $z\in\mathbb{P}^1$.
\medskip

Taken by itself, the holomorphic bundle $E$ is an object in the Abelian category $\mathsf{Coh}\,\mathbb{P}^1$
 of \emph{coherent
sheaves over $\mathbb{P}^1$.} Its Groethendick class $[E]\in K^0(\mathsf{Coh}\,\mathbb{P}^1)\simeq\Z^2$
is determined by two integers: its \emph{rank} $\mathrm{rank}\,E\equiv n$ and its \emph{degree} $\deg E$, cf.\! eq.\eqref{i987weqq0}.

Homological mirror symmetry identifies the derived category $D^b\mathsf{Coh}\,\mathbb{P}^1$ with the BPS category of 
4d $\cn=2$ \emph{pure}  $SU(2)$ SYM \cite{classification,complete}: by this we mean that the BPS particles of $SU(2)$ SYM are
 irreducible families of objects of $D^b\mathsf{Coh}\,\mathbb{P}^1$ which are \emph{stable} with respect to a Bridgeland stability condition \cite{bridgeland}
which depends on the point $u$ in the Coulomb branch. In the QFT language $\mathrm{rank}\,E$ (resp.\! $-\deg E$)
becomes the $SU(2)$ magnetic charge (resp.\! the $SU(2)$ electric charge\footnote{\ Here normalized so that the $W$-boson has electric charge $1$.}).
\medskip

The datum of a logarithmic connection $\nabla$
endows $E$ with additional structure encoded in the conjugacy classes
$c_i\equiv [A_i]\subset \mathfrak{gl}(n,\C)$ of the residue matrices $A_i$ at the marked points $z_i\in S$. 
It is natural to see $E$ as an object
in an \emph{enriched} Abelian category which accounts also for this extra structure.
The standard construction is as follows (see e.g.\! \cite{CB3}). For each $i=1,\cdots\!,s$ choose
a polynomial $P_i(w)\in\C[w]$ which is solved by the residue matrix $A_i$, i.e.\! such that 
$P_i(A_i)=0$. The most economical choice will be the \emph{minimal polynomial}
of $A_i$ \cite{gantmaker}, but we allow for general choices. Factorize $P_i(w)$ into linear factors and choose
some order between them
\be\label{juyqw123}
P_i(w)=\prod_{\ell=1}^{p_i}(w-w_{i,\ell}),\qquad p_i\overset{\rm def}{=}\deg P_i(w),
\ee
and set 
\be
E_{i,k}\overset{\rm def}{=}\prod_{\ell=1}^k(A_i-w_{i,\ell})E_{z_i},\qquad
N_{i,k}\overset{\rm def}{=}\dim E_{i,k},\quad i=1,\cdots\!,s,\ k=0,\cdots\!,p_s.\label{filttt}
\ee 
For elegance of the physical interpretation we assume the
$w_{i,\ell}$ to be real, but everything remains true if they are complex numbers. Likewise, while the order of the factors of $P_i(w)$ is actually irrelevant,
at this stage we order them in a non-decreasing order
\be\label{iiii876123}
w_{i,\ell}\leq  w_{i,\ell+1}\quad \text{for all }i\ \text{and }\ell.
\ee
Mathematically it is pretty obvious that the order of the linear factors in \eqref{juyqw123} should be immaterial. 
However changing their order looks a subtle operation from the physical side (a ``BPS wall-crossing''
\cite{W1,W2,W3}\!\!\cite{4d/2d}) so we prefer to start with the order which makes things simpler from that perspective.
\medskip
   
Eq.\eqref{filttt} defines a length $p_i$ filtration of the fiber $E_{z_i}$ at the market point $z_i\in S$.
We see the filtration as a sequence of \emph{injective} maps (inclusions of $\C$-spaces)
\be\label{filtration}
0\equiv E_{i,p_i}\to E_{i,p_i-1}\to\cdots\cdots\to E_{i,1}\to E_{i,0}\equiv E_{z_i}\simeq \C^n.
\ee
By design this sequence encodes the size and multiplicities of the Jordan blocks of $A_i$
for each distinct eigenvalue.
The non-negative integers $N_{i,\ell}$ are non-increasing
\be
N_{i,\ell}\leq N_{i,\ell-1}.
\ee
When $N_{i,\ell}=N_{i,\ell-1}$ the arrow $E_{i,\ell}\to E_{i,\ell-1}$ is an isomorphism and we may
identify the two spaces, shortening the sequence from length $p_i$ to $p_i-1$: the corresponding factor
$(w-w_{i,\ell})$ in  \eqref{juyqw123} is redundant and we can omit it, decreasing the degree of $P_i(w)$ by 1. 
Conversely we may make the filtration longer by inserting isomorphisms in the sequence.
 Note that
\be\label{ttthissw}
\mathrm{tr}\,A_i=\sum_{k=1}^{p_k} w_{i,k}(N_{i,k-1}-N_{i,k})\quad\Rightarrow\quad
-\deg E=
n\sum_{i=1}^s w_{i,1}+\sum_{i=1}^s\sum_{\ell=1}^{p_i-1} N_{i,\ell}(w_{i,\ell+1}-w_{i,\ell}).
\ee
\begin{rem}\label{ttwist}
When the filtration of $E_{z_i}$ has length $p_i=1$ one has $(A_i-w_{i,1})=0$ i.e.\! the residue matrix $A_i$
is proportional to the identity. Twisting $(E,\nabla)$ by a line bundle $\cl$ with connection
\be
\nabla^\cl=d+\sum_i \frac{a_i}{z-z_i}\ \text{with }\sum_i a_i=-\deg\cl\ \text{and }
p_i=1\ \Rightarrow\ a_i=-w_{i,1}
\ee 
we get rid of all marked points with $p_i=1$. Thus we may assume with no loss
that all lengths satisfy $p_i\geq2$. We may also see an unmarked point $z_0\not\in S$
as a marked point of length $p_0=1$ (an apparent singularity).
\end{rem}
\medskip

A holomorphic bundle $E\to\mathbb{P}^1$ equipped with filtrations of lengths $p_1,\cdots\!, p_s$
of the fibers $E_{z_i}$ over the marked points $z_i\in S$ is said to have
a \emph{parabolic structure}\footnote{\ In the language of \cite{seh} this is a mere \emph{quasi-parabolic structure}.} of type $\boldsymbol{p}\equiv(p_1,\cdots\!,p_s)$.
Parabolic bundles are objects in the
Abelian (hereditary) category $\mathsf{Coh}\,\mathbb{X}(\boldsymbol{p})$ 
of coherent sheaves
over a \emph{quasi}-commutative projective curve\footnote{\ The curve $\mathbb{X}(\boldsymbol{p})$ depends on the set $S$ of marked points (modulo projective equivalence). To simplify the notation, we omit the set $S$ from the symbol of the curve, leaving it implicit.} $\mathbb{X}(\boldsymbol{p})$
whose underling space
is still the sphere $\mathbb{P}^1$ but whose sheaves are locally modified at the 
 points $z_i\in S$ \cite{lenzing}. More precisely $\mathbb{X}(\boldsymbol{p})$ is a \emph{weighted projective line 
of type $\boldsymbol{p}$} in the sense of Geigle and Lenzing \cite{weightPL}.\footnote{\ The Representation Theory of weighted projective lines
has been reviewed in the $\cn=2$ QFT context in \cite{think} and in more detail in \cite{shep}.
For further mathematical literature on coherent sheaves over weighted projective lines see e.g.\!
\cite{line1,line2,line3,line4,line5}.}

\begin{correspondence}\label{kkkiii888} To the logarithmic connection $\nabla\colon E\to E\otimes \Omega^1(\log S)$
we associate the 4d $\cn=2$ QFT with Seiberg-Witten ``curve'' the mirror of the quasi-commutative curve
$\mathbb{X}(\boldsymbol{p})$ such that $E\in\mathsf{Coh}\,\mathbb{X}(\boldsymbol{p})$
and the SUSY central charge\footnote{\ Once given the ``curve'', the datum of the central charge is equivalent
to a Seiberg-Witten differential.}
$Z\colon K^0(\mathsf{Coh}\,\mathbb{X}(\boldsymbol{p}))\to \C$
given below.
\end{correspondence}

The resulting 4d $\cn=2$ QFT was constructed in \cite{classification} and studied in \cite{think,shep}.\footnote{\ Their generalizations for arbitrary simply-laced Lie groups were constructed in \cite{infinite1,infinite2}; see also \cite{DelZotto:2015rca}.} It consists of $\cn=2$ $SU(2)$
SYM gauging the diagonal $SU(2)$ flavor symmetry of a $s$-tuple of Argyres-Douglas (AD) systems \cite{AD1,AD2}
in one-to-one correspondence with the marked points $z_i\in S\subset\mathbb{P}^1$.
The AD system associated to the point $z_i$ has Dynkin type $D_{p_i}$ \cite{AD2} where $p_i$
is the length of the filtration \eqref{filtration}. In addition to the $SU(2)$ charge which gets gauged, $\boldsymbol{q}_{i,0}$,
the $i$-th AD system carries other $p_i-1$ integrally quantized charges $\boldsymbol{q}_{i,\ell}$,
$\ell=1,\cdots, p_i-1$. The space of parameters $t_{i,\ell}$ of the $i$-th system (couplings, masses,
Coulomb branch coordinates) has complex
dimension $p_i$ \cite{classification} and the central charge of the $i$-th AD system, taken in isolation, has the form
$Z_i=\sum_{\ell=0}^{p_i-1} t_{i,\ell}\,\boldsymbol{q}_{i,\ell}$.
\medskip

\textbf{Correspondence \ref{kkkiii888}}
 states that the BPS particles of the 4d $\cn=2$ QFT are the objects
of the triangle category $D^b \mathsf{Coh}\,\mathbb{X}(\boldsymbol{p})$ which are \emph{stable}
 for the appropriate SUSY central charge $Z\colon K^0(\mathsf{Coh}\,\mathbb{X}(\boldsymbol{p}))\to \C$.
Its Groethendick group $K^0\equiv K^0(\mathsf{Coh}\,\mathbb{X}(\boldsymbol{p}))$
is identified with the lattice of QFT charges and the antisymmetric part of the Euler form
$K^0\times K^0\to \Z$ with the Dirac electro-magnetic pairing \cite{4d/2d,classification,think}.  
 The dictionary between 
$K^0(\mathsf{Coh}\,\mathbb{X}(\boldsymbol{p}))$ and the physical charges is \cite{think,shep}:
\begin{itemize}
\item $SU(2)$ electric charge $e=-\deg E$;
\item $SU(2)$ magnetic charge $m=\mathrm{rank}\, E$, so $m\equiv n$ the order of the ODE \eqref{ODE1};
\item internal charges of the several AD systems $\boldsymbol{q}_{i,\ell}=N_{i,\ell}\equiv\dim E_{i,\ell}$.
\end{itemize}

\begin{war} The ``topological'' charge $m$ is canonically defined (it is the \emph{defect} in the sense of Ringel \cite{ringel}\!\!\cite{think}). Instead the ``Noether'' charges $e$, $\boldsymbol{q}_{i,\ell}$
are defined only up to linear redefinitions which preserve the Dirac pairing. The above definitions are then conventional.
We stress that they are not -- in general -- the standard ones used in QFT. For instance, when
$p_i=2$ for all $i$, the $\cn=2$ QFT is a Lagrangian model i.e.\! $SU(2)$ SYM coupled to $N_f\equiv s$ fundamentals.
The physical electric charge $e_\text{ph}$ and the Cartan charges $f_i$ of the  $Spin(2N_f)$ flavor symmetry
are\footnote{\ In $D^b\mathsf{Coh}\,\mathbb{X}(\boldsymbol{p})$ the $N_i$ can take negative values.} 
\be
e_\text{ph}=-2\,\deg E-\sum_i N_{i,1},\qquad f_i = N_{i,1},
\ee 
in a normalization where the $W$-boson has $e_\text{ph}=2$ and $f_i=0$.
\end{war}

The resulting $SU(2)$ gauge theory is UV complete iff \cite{classification,think}
\be\label{uuucuv}
\chi(\mathbb{X}(\boldsymbol{p}))\overset{\rm def}{=}2-\sum_{i=1}^s\left(1-\frac{1}{p_i}\right)\geq0.
\ee
Here $\chi(\mathbb{X}(\boldsymbol{p}))\in\mathbb{Q}$ is the \emph{Euler characteristics} of $\mathbb{X}(\boldsymbol{p})$ \cite{weightPL,line1,line2,line3,line4,line5}.
Even when the condition \eqref{uuucuv} is not met, we can still give a meaning to the 4d $\cn=2$ system 
as a subsector of some UV complete model.

\subsection{The BPS quiver: the squid algebra} 
\label{s:squid}

It is well known that the category $D^b\mathsf{Coh}\,\mathbb{X}(\boldsymbol{p})$
is equivalent to the derived category $D^b\mspace{1.5mu}\mathsf{mod}\mspace{2.5mu}\cs(\boldsymbol{p})$ of modules of the \emph{squid algebra}\footnote{\ Again we leave the dependence of $\cs(\boldsymbol{p})$ on the points $\{z_i\}$ implicit.} $\cs(\boldsymbol{p})$ of type $\boldsymbol{p}$ \cite{squid,ringel}, 
namely the path algebra of the squid quiver $\boldsymbol{Q}(\boldsymbol{p})$ over the solid arrows in the figure
\be\label{biquiver}
\begin{gathered}
\xymatrix{&&& \bullet_{1,1}\ar[ldd]^(0.23){\alpha_1} &\cdots \ar[l] &\bullet_{1,p_1-1}\ar[l]\\
&&& \bullet_{2,1}\ar[ld]^(0.36){\alpha_2} &\cdots \ar[l] &\bullet_{2,p_2-3}\ar[l]&\bullet_{2,p_2-2}\ar[l]&\bullet_{2,p_2-1}\ar[l]\\
\boldsymbol{Q}(\boldsymbol{p})\colon &\circ\ar@{..>}@/^2pc/[uurr]^(0.5){\beta_1}\ar@{..>}@/^2pc/[urr]^(0.5){\beta_2}\ar@{..>}@/_2.7pc/[rr]_(0.4){\beta_3} \ar@{..>}@/_2pc/[ddrr]_(0.5){\beta_s} & \ar@<0.6ex>[l]^B\ar@<-0.6ex>[l]_A\star & \bullet_{3,1}\ar[l]^(0.5){\alpha_3} &\cdots \ar[l] &\bullet_{3,p_3-2}\ar[l]&\bullet_{3,p_3-1}\ar[l]\\
&&& \vdots & \vdots & \vdots &\vdots\\
&&& \bullet_{s,1}\ar[luu]_(0.23){\alpha_s} &\cdots \ar[l] &\bullet_{s,p_s-1}\ar[l]}
\end{gathered}
\ee
bounded by the \emph{squid relations}
\be\label{squid}
(b_i\mspace{2mu} A- a_i\mspace{2mu} B)\alpha_i=0\quad \text{where }\ z_i\equiv (a_i:b_i)\in S\subset \mathbb{P}^1
\ee
which we represent, as customary, by dashed lines $\beta_i$  ($i=1,2\cdots\!,s$) going in the inverse direction.

 The Coxeter form of the BPS quiver (with superpotential),  $(\boldsymbol{Q}(\boldsymbol{p})_\textsc{BPS},\cw)$, of the $\cn=2$ QFT (in the sense of \cite{bpsqui})
is the completion \cite{kellerCYc} of the above squid quiver 
 obtained by making solid the dashed inverse arrows $\beta_i$
 and equipping it with the cubic superpotential
\be\label{uyqertii}
 \cw=\sum_i \mathrm{tr}[\beta_i (b_i\mspace{2mu} A- a_i\mspace{2mu} B)\alpha_i]
\ee
whose derivatives $\partial_{\beta_i}\cw$ reproduce the squid relations \eqref{squid}.
The modules of the squid algebra are precisely the representation of $(\boldsymbol{Q}(\boldsymbol{p})_\textsc{BPS},\cw)$ with vanishing arrows $\beta_i$.
\medskip

A module $Y$ of the squid algebra assigns a vector space\footnote{\ \textbf{Notations:} If $Q$ is any quiver, we write $Q_0$ (resp.\! $Q_1$) for the set of its nodes (resp.\! arrows).\\ The index $\sigma$ (resp.\! $v$)
takes values in the set $\boldsymbol{Q}(\boldsymbol{p})_0\equiv\{\circ,\star,\bullet_{i,\ell}\colon 1\leq i\leq s,\ 1\leq \ell\leq p_i-1\}$ (resp.\! $\boldsymbol{Q}(\boldsymbol{p})_1\equiv\{A,B, \alpha_i, \bullet_{i,\ell}\to \bullet_{i,\ell-1}\colon 1\leq i\leq s,\ 1\leq \ell\leq p_i-1\}$ with the convention that $\bullet_{i,0}\equiv\star$ for all $i$).} $Y_\sigma$ (resp.\! a linear map $Y_v$)
 to each node node $\sigma$
of the quiver (resp.\! to each arrow $v$). The maps are required to
satisfy the squid relations  $(b_i\mspace{2mu} Y_A- a_i\mspace{2mu} Y_B)Y_{\alpha_i}=0$.
 The dimensions of the spaces $Y_\sigma$ give
the conserved charges of the associated BPS state
\be\label{ccchargess}
e=\dim Y_\circ, \qquad m=\dim Y_\star-\dim Y_\circ\equiv n,\qquad \boldsymbol{q}_{i,k}=\dim Y_{\bullet_{i,k}}.
\ee

An indecomposable module $Y$ describes an ODE $(E,\nabla)$ iff the following three conditions hold (see e.g.\! \textbf{Lemma 5.5} of \cite{CB3}):
\begin{itemize}
\item[\it(i)] the maps $a\mspace{2mu} Y_B- b\mspace{2mu} Y_A\colon Y_\star\to Y_\circ$ are surjective for all $(a\colon b)\in\mathbb{P}^1$;
\item[\it(ii)] for $v\neq A,B$ the maps $Y_v$  are injective;
\item[\it(iii)] $\lambda(Y)=0$ where
\be
\lambda(Y)\overset{\rm def}{=}-\dim Y_\circ+(\dim Y_\star-\dim Y_\circ)\sum_{i=1}^s w_{i,1}+\sum_{i=1}^s\sum_{\ell=1}^{p_i-1} (w_{i,\ell+1}-w_{i,\ell})\dim Y_{\bullet_{i,\ell}}.
\ee
\end{itemize}
The function $\lambda\colon K^0(\mathsf{mod}\,\cs(\boldsymbol{p}))\to \R$
 defines a controlled (full) subcategory\footnote{\ \label{iuyqwrtt7}See e.g.\! \textbf{Proposition 1.4} in \cite{keller1}.} $\mathsf{M}\subset\mathsf{mod}\,\cs(\boldsymbol{p})$. 
The objects of $\mathsf{M}$ are the objects $Y\in \mathsf{mod}\,\cs(\boldsymbol{p})$ with
$\lambda(Y)=0$ such that for all subobjects $V$ of $Y$ one has $\lambda(V)\leq 0$. In particular, if $Y\in \mathsf{M}$, all
its indecomposable direct summands $Y_{(a)}$ of $Y$ must also satisfy $\lambda(Y_{(a)})=0$.
\medskip

\noindent\textbf{Notation.} If $Y$ is the squid module associated to the
log-connection $(E,\nabla)$ we write $E_{i,\ell}$ for  the $\C$-space $Y_{\bullet_{i,\ell}}$
and reserve the notation $Y_{i,\ell}$ for the injective arrow $Y_{i,\ell}\colon E_{i,\ell}\to
E_{i,\ell-1}$ with source the node $\bullet_{i,\ell}$ ($\ell\geq1$).

\begin{rem}\label{trivialtwist} Replacing $A_i\to A_i+ a_1\cdot\boldsymbol{1}$ with
$\sum_i a_i=0$ does not change the quiver $\boldsymbol{Q}(\boldsymbol{p})$, the dimension vector $\mathbf{dim}\,Y=(\dim Y_\sigma)$, nor the function $\lambda$. Indeed the replacement
$A_i\to A_i+ a_1\cdot\boldsymbol{1}$ amounts to twisting $E$ by a line bundle, $E\to \cl\otimes E$,
with $\deg\cl=-\sum_ia_i$.
When $\sum_ia_i=0$, $\cl$ is the trivial line bundle $\co$ and the twist acts as the identity. 
\end{rem}

\begin{rem} The representations of the squid algebra form a full subcategory of the representations of the BPS quiver with superpotential $(\boldsymbol{Q}(\boldsymbol{p})_\textsc{BPS},\cw)$.
The stable objects of the second category are the BPS states of the would-be 4d $\cn=2$ QFT,
hence the stable modules of the squid algebra form a subsector of the BPS spectrum.
One may ask whether there is a BPS chamber where this subsector is the full BPS spectrum.
 BPS chambers with this property, dubbed \emph{triangular chambers}, were defined and studied in ref.\!\cite{galois}.
Of course, the question of existence of a triangular chamber for a squid algebra
makes sense only for squid algebras associated to UV-complete 4d QFTs, i.e.\! for squid algebras  satisfying eq.\eqref{uuucuv}.
While a rather special case, these ``physical'' algebras describe some of the most important Fuchsian equations for the
applications: hypergeometric ODEs of order $\leq 4$, the Heun equation, Goursat ODEs of all types but the Appel and the Pochhammer ones, and many others.  
The squid algebras corresponding to UV-complete 4d QFTs are derived equivalent to either affine path algebras (for the asymptotic-free case) or tubular algebras (when the Yang-Mills beta-function vanishes).  
Using the methods of \cite{galois} and the explicit description of the module category
for these two nicer classes of algebras, one can show that there is a \emph{fine-tuned} asymptotic limit in the space of stability functions (which is a zero-coupling limit in a suitable duality frame) where all stable BPS states
are described by modules of the squid algebra \cite{think}. We stress that, typically, this is just an asymptotic limit \emph{not} an actual BPS chamber. By this we mean that one can find a \emph{sequence} of fine-tuned BPS chambers such that the mass of the lightest BPS particle which is \emph{not} a squid module goes to infinity.
Simple examples of this story may be found in ref.\cite{galois}.  Heuristically: differential equations
capture only a (particular) classical limit of the QFT. 
 \end{rem}

\subsection{Affinization of Kac-Moody Lie algebras}
To understand the representation theory of the squid algebra,
we may consider its Brenner matrix \cite{brenner}, namely the \emph{generalized}
symmetric Cartan matrix $B$ with 2's on the main diagonal and non-zero off-diagonal entries
\be
\begin{aligned} 
&B_{\circ,\star}=B_{\star,\circ}=-2, &&B_{\circ,\bullet_{i,1}}=B_{\bullet_{i,1},\circ}=+1,\\
&B_{\star,\bullet_{i,1}}=B_{\bullet_{i,1},\star}=-1,&\quad& B_{\bullet_{i,\ell},\bullet_{i,\ell+1}}=
B_{\bullet_{i,\ell+1},\bullet_{i,\ell}}=-1.
\end{aligned}
\ee 
$B$ is the \emph{Cartan matrix} of the simply-laced Slodowy (a.k.a.\! GIM) Lie algebra
\cite{slodowy} $\mathbb{L}(\boldsymbol{p})$ whose
Dynkin bi-graph is obtained from $\boldsymbol{Q}(\boldsymbol{p})$ by forgetting the orientation of the arrows but keeping the distinction
solid vs.\! dashed. We write $\alpha_\sigma$ for the simple root of $\mathbb{L}(\boldsymbol{p})$ with support at the $\sigma$-th node: $\alpha_\sigma$ stands for the class $[S_\sigma]\in K^0(\mathsf{mod}\,\cs(\boldsymbol{p}))$ of the simple module $S_\sigma$ with
$\dim (S_\sigma)_{\sigma^\prime}=\delta_{\sigma,\sigma^\prime}$.

The Brenner matrix $B$ defines an integral symmetric bilinear pairing $\langle-,-\rangle$
and a quadratic \emph{Tits form} $q(-)$ in $K^0(\mathsf{mod}\,\cs(\boldsymbol{p}))\simeq \oplus_\sigma\Z \mspace{1.5mu}\alpha_\sigma$ 
\be
\langle \xi, \eta\rangle = \xi^t B \eta\in\Z, \qquad q(\eta)\equiv \frac{1}{2}\langle \eta,\eta\rangle,\quad\text{for } \xi,\, \eta\in\bigoplus_{\sigma\in\boldsymbol{Q}(\boldsymbol{p})_0}\Z\mspace{1.5mu} \alpha_\sigma.  
\ee

A \emph{root} of the Slodowy Lie algebra is an element $\eta\in\oplus_\sigma\Z \alpha_v$ with connected support in $\boldsymbol{Q}(\boldsymbol{p})$ and $q(\eta)\leq1$.
If the equality is saturated we say that the root is \emph{real}.
A general result, valid for all \emph{triangular algebras} \cite{delapena}\!\!\cite{galois}, says that
the dimension vector $[Y]\equiv\mathbf{dim}\, Y\equiv (\dim Y_\sigma)_{\sigma\in \boldsymbol{Q}(\boldsymbol{p})_0}$ of an indecomposable module $Y$ is a \emph{positive root},
while the dimension of a rigid indecomposable module is a \emph{real} root. 

In the present
situation we have a more precise result \cite{alg3}.
The Slodowy Lie algebra 
$\mathbb{L}(\boldsymbol{p})$ is isomorphic to the \emph{affinization of the Kac-Moody Lie algebra} $\mathsf{L}(\boldsymbol{p})$
\cite{kac} whose Dynkin graph $\Lambda(\boldsymbol{p})$
 is obtained from the squid quiver $\boldsymbol{Q}(\boldsymbol{p}) $ by omitting the node $\circ$, the arrows inciding on it, and all edge orientations. Explicitly, $\mathbb{L}(\boldsymbol{p})$ is a central extension of the \emph{loop Lie algebra} $\cl\mathsf{L}(\boldsymbol{p})$ of $\mathsf{L}(\boldsymbol{p})$
($\equiv$ the Lie algebra of maps $S^1\to \mathsf{L}(\boldsymbol{p})$) \cite{alg2,alg3}.
\medskip

Let $\Delta(\boldsymbol{p})$ be the set of roots of the Kac-Moody algebra $\mathsf{L}(\boldsymbol{p})$.
The set $\widehat{\Delta}(\boldsymbol{p})$ of roots  of its affinization $\mathbb{L}(\boldsymbol{p})$ is 
\be
\widehat{\Delta}(\boldsymbol{p})=\Big\{\alpha+r\delta\;\Big|\; \alpha\in \Delta(\boldsymbol{p}),\ r\in\Z\Big\}\bigcup\Big\{r\delta\;\Big|\; 0\neq r\in\Z\Big\}.
\ee
The \emph{real} roots of $\mathbb{L}(\boldsymbol{p})$ have the form $\alpha+r\delta$ with $\alpha$ a \emph{real root} of $\mathsf{L}(\boldsymbol{p})$.
The Cartan symmetric bilinear form $\langle-,-\rangle$ on $\Delta(\boldsymbol{p})$ is extended to  
$\widehat{\Delta}(\boldsymbol{p})$ by setting $\langle \delta,-\rangle=0$. The isomorphism $\widehat{\Delta}(\boldsymbol{p})\simeq
\oplus_\sigma\Z\mspace{1.5mu}\alpha_\sigma\equiv K^0(\mathsf{mod}\,\cs(\boldsymbol{p}))$ is
\be
\delta= \alpha_\circ+\alpha_\star,\qquad\Delta(\boldsymbol{p})\equiv \bigoplus_{\sigma\neq\circ} \Z\mspace{2mu} \alpha_\sigma.
\ee
The isotropic root $\delta$ has the quantum numbers of the $SU(2)$ $W$-boson (cf.\! eq.\eqref{ccchargess}).  

A positive root $\eta= \sum_\sigma N_\sigma\mspace{2mu}\alpha_\sigma\in \widehat{\Delta}(\boldsymbol{p})^+$ is called \emph{strict} iff
\be
N_\star \geq N_{\bullet_{i,1}}\geq N_{\bullet_{i,2}}\geq \cdots\geq N_{\bullet_{i,p_i-1}}\geq0\quad\text{for all }i.
\ee
The dimension vectors
of indecomposable parabolic bundles $E$ correspond to strict roots \cite{alg3}
\be\label{0000nvb}
[E]\equiv -\deg E\;\alpha_\circ +(\mathrm{rank}\,E+\deg E)\,\alpha_\star+\sum_{i,\ell} \dim E_{i,j}\;\alpha_{\bullet_{i,\ell}}\in \widehat{\Delta}(\boldsymbol{p})^+.
\ee
The bundle $E$ is rigid iff its root \eqref{0000nvb} is real. The indecomposable parabolic bundle $E$ admits a logarithm connection $\nabla$
whose residue matrices $A_i$ solve the polynomials $P_i(w)\equiv\prod_{\ell=1}^{p_i}(w-w_{i,\ell})$ iff, in addition,
$\lambda([E])=0$. The twist $E\to \cl\otimes E$ by a line bundle makes
\be
[E]\to [E]+ n\deg \cl\;\delta\in K^0(\mathsf{Coh}\,\mathbb{X}(\boldsymbol{p})),
\ee
 so the value of the electric charge matters only mod $n\equiv$ the magnetic charge.
This is just the Witten effect \cite{Witten:1979ey}.

\begin{rem} When $\chi(\mathbb{X}(\boldsymbol{p}))>0$ the Lie algebra $\mathbb{L}(\boldsymbol{p})$ is isomorphic to an ordinary (simply-laced)
affine Lie algebra. If $\chi(\mathbb{X}(\boldsymbol{p}))=0$, $\mathbb{L}(\boldsymbol{p})$ is the affinization of an affine Lie algebra called
an \emph{elliptic}\footnote{\ See \cite{classification} for a discussion in the present context.}
\cite{saito} (or toroidal \cite{alg1}) Lie algebra,
namely an EALA \cite{eala1,eala2} of nullity 2.
\end{rem}

\subsection{The important special case with $c_1(E)=0$}

The most important case, both conceptually and for the applications to physics/geometry,
is when $c_1(E)=0$ (equivalently $\deg E=0$). 
As explained by Simpson \cite{simpson} this is the most interesting case in non-Abelian Hodge theory,
and hence for all related topics such as: variations of Hodge structure, $tt^*$ equations, Seiberg-Witten geometry, etc.
It is also the classical case which leads to Fuchsian equations (the topic of the present paper) as contrasted to more general log-connections.
In particular restricting to $c_1(E)=0$ leads to a much simpler and elegant theory.
The general case $c_1(E)\neq	0$ can also be studied along the same lines,
but the results are considerably more intricate, and they do lead to useful simplifications
of the computations.

In this case
 $(E,\nabla)$ corresponds to a squid module $Y$ with $Y_\circ=Y_A=Y_B=0$
which can be identified
with a representation of the (hereditary) path algebra of the acyclic star-shaped quiver
$Q(\boldsymbol{p})$
\be\label{starquiver}
\begin{gathered}
\xymatrix{&& \bullet_{1,1}\ar[ldd]^(0.23){\alpha_1} &\cdots \ar[l] &\bullet_{1,p_1-1}\ar[l]\\
&& \bullet_{2,1}\ar[ld]^(0.36){\alpha_2} &\cdots \ar[l] &\bullet_{2,p_2-3}\ar[l]&\bullet_{2,p_2-2}\ar[l]&\bullet_{2,p_2-1}\ar[l]\\
Q(\boldsymbol{p}):&\star & \bullet_{3,1}\ar[l]^(0.5){\alpha_3} &\cdots \ar[l] &\bullet_{3,p_3-2}\ar[l]&\bullet_{3,p_3-1}\ar[l]\\
&& \vdots & \vdots & \vdots &\vdots\\
&& \bullet_{s,1}\ar[luu]_(0.23){\alpha_s} &\cdots \ar[l] &\bullet_{s,p_s-1}\ar[l]}
\end{gathered}
\ee
obtained by orienting all edges of the star-shaped 
Kac-Moody Dynkin graph $\Lambda(\boldsymbol{p})$ towards the central node
$\star$.
The squid algebra $\cs(\boldsymbol{p})$ gets replaced by the hereditary algebra $\mathcal{H}(\boldsymbol{p})\equiv\C Q(\boldsymbol{p})$
and the Slodowy Lie algebra $\mathbb{L}(\boldsymbol{p})$ by a more familiar Kac-Moody (KM) algebra
$\mathsf{L}(\boldsymbol{p})$.
Indecomposable ODEs correspond to positive roots of the KM algebra 
and rigid indecomposable ones to real positive roots \cite{kac1,kac2}. 

A module $Y$ of the algebra $\ch(\boldsymbol{p})$ assigns a vector space $Y_\sigma$ to each 
node $\sigma\in Q(\boldsymbol{p})_0$
and a $\dim Y_{t(v)}\times \dim Y_{s(v)}$ complex matrix $Y_v$ to
each arrow $v\in Q(\boldsymbol{p})_1$ with source $s(v)$ and target $t(v)$. Two sets of matrices $\{Y_v\}_{v\in Q(\boldsymbol{p})_1}$
give isomorphic representations iff they are in the same orbit of $G^{\mspace{1mu}\C}\equiv \prod_\sigma GL(\dim Y_\sigma,\C)$.
The subgroup $(\mathsf{End}\,Y)^\times$ acts trivially, so the dimension of the moduli space
of representations with dimension vector $\boldsymbol{d}\equiv (\dim Y_\sigma)_{\sigma\in Q(\boldsymbol{p})_0}$ is
\be
\dim \cm_{\boldsymbol{d}}= \sum_{v\in Q(\boldsymbol{p})_1} d_{t(v)} d_{s(v)}- \sum_{\sigma\in \in Q(\boldsymbol{p})_0} d_\sigma^2+\dim \mathsf{End}\,Y
=\dim\mathsf{End}\, Y- q(\boldsymbol{\dim}\, Y)
\ee
where $q(\boldsymbol{d})= \tfrac{1}{2} \boldsymbol{d}^t C\boldsymbol{d}$ is the Tits form of the Kac-Moody
algebra $\mathsf{L}(\boldsymbol{p})$ with Cartan matrix $C$. When $Y$ is a brick, $\mathsf{End}\, Y\simeq \C$, so
\be\label{uuuu786x}
p(Y)\overset{\rm def}{=} \dim\cm_{\boldsymbol{\dim}\, Y}= 1- q(\boldsymbol{\dim}\, Y)\qquad \textbf{($Y$ a brick)}.
\ee
We may assume the support of $Y$ to be connected. Then $p(Y)$ is non-negative precisely iff the dimension vector of $Y$ is a positive root
of $\mathsf{L}(\boldsymbol{p})$
\be
[Y]\equiv\mathbf{dim}\,Y\in \Delta(\boldsymbol{p})^+,
\ee 
and is
positive iff the root is imaginary. In conclusion:

\begin{correspondence} An \emph{irreducible} logarithmic connection $(E,\nabla)$ with vanishing first 
Chern class, $c_1(E)=0$, corresponds to a
module $Y$ of the hereditary algebra $\ch(\boldsymbol{p})$ with  injective maps
whose
dimension vector 
\be
\alpha\equiv [Y]\overset{\rm def}{=} N_\ast\; \alpha_\star +\sum_{i,\ell\geq1}N_{i,\ell}\,\alpha_{\bullet_{i,\ell}}\equiv\mathrm{rank}\,E\;\alpha_\star+\sum_{i,\ell\geq1}\dim E_{i,\ell}\;\alpha_{\bullet_{i,\ell}}
\ee
 is a 
\emph{strict} positive root $\alpha\in\Delta(\boldsymbol{p})^+$  which satisfies
\be
\lambda(\alpha)\overset{\rm def}{=}N_\star\sum_i w_{i,1}+\sum_i\sum_{\ell=1}^{p_i-1} N_{{i,\ell}}(w_{i,\ell+1}-w_{i,\ell})=0.
\ee
The accessory parameter space $\ca\subset \mathscr{A}$ of the \emph{irreducible} connection
has dimension
\be
\dim_\C \ca= 2\,p(\alpha)\equiv 2-\langle \alpha,\alpha\rangle.
\ee
\end{correspondence}

\subsection{The sign of the Yang-Mills $\beta$-function}

The 4d $\cn=2$ QFT is asymptotically free, superconformal, or IR free iff
the Euler characteristic $\chi(\mathbb{X}(\boldsymbol{p}))$ is, respectively,
positive, zero, or negative: 
indeed  \cite{classification,think}
\be
\beta(g_\text{YM})=-2\, \chi(\mathbb{X}(\boldsymbol{p}))\,g^3_\text{YM}.
\ee
 Equivalently \cite{think,ringel}: 
\begin{itemize}
\item the 4d $\cn=2$ QFT is asymptotically free if and only if
the Kac-Moody algebra $\mathsf{L}(\boldsymbol{p})$ is of \emph{positive-type}
i.e.\! finite-dimensional. In this case the finitely many roots are all real;
\item the 4d $\cn=2$ QFT is a (relevant deformation of a) SCFT if and only
if the Kac-Moody algebra $\mathsf{L}(\boldsymbol{p})$ is of \emph{zero-type}
i.e.\! affine. The imaginary roots have the form $m\delta$ ($m\in\Z$);
\item 4d $\cn=2$ QFT is IR free if and only
if the Kac-Moody algebra $\mathsf{L}(\boldsymbol{p})$ is of \emph{negative-type}. 
\end{itemize}

\begin{rem} This relation between the type of the Kac-Moody algebra and UV-completeness
has a simple physical explanation. By Ringel theorem \cite{ringelthm} the negative simply-laced Kac-Moody Lie algebras are characterized by the fact that their Dynkin graphs have Coxeter elements with spectral radii $>1$.
That is, the number of BPS states \emph{at fixed electric charge} grows exponentially. This contradicts the entropy bound which holds in all UV-complete $d$-dimensional QFT
\be
S(T)\leq \mathrm{const.}\, T^{d-1}\quad \text{(for large temperature $T$).}
\ee
Thus a negative KM algebra should correspond to a non UV-complete $SU(2)$ gauge theory. 
One may be more precise: since $\Lambda(\boldsymbol{p})$ is a star, the characteristic
polynomial $Q(z)$ of its Coxeter elements  is known in closed form. In particular
 \cite{dela}
 \be
 Q(1)= \chi(\mathbb{X}(\boldsymbol{p}))\prod_i p_i,
 \ee
 so 
when $\chi(\mathbb{X}(\boldsymbol{p}))<0$  the Coxeter element has
an odd number of real eigenvalues $>1$.
\end{rem}

\subsection{The stability function}\label{s:stability}

To identify irreducible log-connections $(E,\nabla)$
with \emph{stable} dyons of $SU(2)$ SYM coupled to $s$ AD systems
as claimed in \textbf{Correspondence \ref{kkkkjjjjq}} we need to specify
a stability function $Z$ (SUSY central charge) consistent with
the properties \textit{(i)-(iii)} which characterize the parabolic bundles (cf.\! \S.\,\ref{s:squid}).
We assume that the $w_{i,\ell}$'s are generic real numbers.

\medskip

The simple roots $\alpha_\sigma$'s form a $\Z$-basis of $K^0(\mathsf{mod}\,\cs(\boldsymbol{p}))$,
so to specify a central charge $Z$ it suffices to give the complex numbers $Z_\sigma\equiv Z(\alpha_\sigma)$.
While to find the exact $Z$ for all pairs $(E,\nabla)$ may be hard,
there is a natural \emph{asymptotic}
stability function which works with some limitations to be discussed in a moment.
We take 
\be\label{regime}
Z_\sigma= \Lambda_\sigma\, e^{i\theta+i\, \lambda(\alpha_\sigma)/\Lambda_\sigma} \qquad \text{with }\Lambda_\sigma \gg1\ \text{and }0 <\theta <\pi/2,
\ee
where the $\Lambda_\sigma$'s are positive real parameters which may be chosen at will as long as they are large enough.
Then
\be\label{iiiiiccccc445}
\begin{split}
Z(Y)\equiv&\sum_\sigma\, Z_\sigma\, \dim Y_\sigma\approx\\
&\approx e^{i\theta}\sum_\sigma \Lambda_\sigma\,\dim Y_\sigma +i\mspace{2mu} e^{i\theta}\sum_\sigma \lambda(e_\sigma)\,\dim Y_\sigma \equiv e^{i\theta}\big(\Lambda(Y)+i \lambda(Y)\big).
\end{split}
\ee
 $\Lambda(Y)\gg1$ for all non-zero squid module $Y$. Then
\be\label{iiii887c}
\arg Z(Y)-\theta \approx \frac{\lambda(Y)}{\Lambda(Y)}\ll1.
\ee
A squid module $Y$ is \emph{stable} for the asymptotic stability condition \eqref{iiiiiccccc445} iff for all non-zero proper sub-module $V\subset Y$
one has $\arg Z(V)<\arg Z(Y)$.
If $Y$ is a module with $\lambda(Y)=0$ and we are in the regime \eqref{regime}, $Y$ is \emph{semi}-stable precisely
when  $\lambda(V)\leq 0$ for all submodules $V\subset Y$, i.e.\! when $Y\in \mathsf{M}$
and stable if the inequality is strict. 
We stress that stable modules are \emph{bricks}, i.e.\! $\mathsf{End}(Y)\simeq\C$ (cf.\! footnote \ref{iuyqwrtt7}).

\subparagraph{Limitations.} From the way we got it, the above stability function looks more a first term in an asymptotic expansion at infinite mass  
than an exact expression. To understand its range of validity, we focus on the simpler $c_1(E)=0$ case.
Rescaling the parameters $w_{i,\ell}\to t\,w_{i,\ell}$ with $t>0$ will not change
the asymptotic stability condition since $t$ can be absorbed in $\Lambda(Y)$.
Eq.\eqref{iiiiiccccc445} then looks as a first term in an expansion in powers of $t$,
and
one expects that the actual stability function  contains also higher order terms.
In other words, physical intuition suggests that when the
eigenvalues of the $A_i$'s are small enough \eqref{iiii887c} is reliable. On the other hand stability is an open condition,
and we can safely replace the actual $Z$ by any ``continuous'' deformation of it, where
``continuous'' means that we do not cross walls of marginal instability.
Writing $A_i=t B_i$, we have $\varrho_i = 1+2\pi i t \tilde B_i+O(t^2)$
with $[\tilde B_i]=[B_i]$, and the relation $\varrho_1\cdots\varrho_s=1$ 
yields $\sum_i \tilde B_i+O(t)=0$. As $t\approx 0$ the monodromy representation of $\nabla\equiv d+\sum_i t\, A_i\, d\log(z-z_i)$
agrees asymptotically  with the one for $d+\sum_i t\,\tilde{B}_i\, d\log(z-z_i)$
 which is a Fuchsian ODE with $E\simeq \co^{\,n}$. 
Therefore it is reasonable to expect
that the asymptotic stability function \eqref{iiii887c} is reliable for \emph{all} Fuchsian ODEs
since they are the ones continuously connected to the trivial connection (the topological class of the bundle 
$E$
is obviously invariant under continuous deformations).
In the next section we shall provide more convincing evidence that this is indeed the correct picture.
We conclude this subsection by stating our

\begin{proposal} A Fuchsian system $(\co^{\,n},\nabla)$ corresponds to a would-be
BPS monopole of the $SU(2)$ SYM coupled to AD matter with quantum numbers
$\alpha\equiv [E]\in K^0(\mathsf{mod}\,\C Q(\boldsymbol{p}))\simeq\oplus_{\sigma\neq \circ}\Z\,\alpha_\sigma$ which is semistable for the asymptotic stability function $Z$. 
If the Fuchsian system is irreducible, the BPS monopole is stable.
\end{proposal}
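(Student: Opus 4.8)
The plan is to deduce both claims from the two facts assembled in \S\ref{s:stability}: first, the computation $\lambda(Y)=-\deg E$ implicit in \eqref{ttthissw}; second, the criterion that, in the regime \eqref{regime}, a module with $\lambda(Y)=0$ is semistable for the asymptotic central charge \eqref{iiiiiccccc445} exactly when it lies in $\mathsf{M}$, i.e.\ when $\lambda(V)\le0$ for every submodule $V\subset Y$, and stable exactly when the inequality is strict on all proper nonzero $V$. A Fuchsian system has $E\cong\co^{\,n}$, so $\deg E=0$; the attached squid module has $Y_\circ=Y_A=Y_B=0$, injective structure maps, and $\lambda(Y)=0$ by \eqref{ttthissw}. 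What remains is therefore entirely a statement about the sign of $\lambda$ on proper submodules.

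First I would prove semistability, which I claim is nothing but the triviality of the underlying bundle. By Grothendieck's splitting theorem $\co^{\,n}$ is the unique degree-zero bundle on $\mathbb{P}^1$ admitting no subbundle of positive degree. A submodule $V\subset Y$ picks out a subspace $V_\star\subseteq E_z\simeq\C^n$ together with its intersections $V_\star\cap E_{i,\ell}$ with the flags at the marked points, hence a parabolic subbundle $F\subseteq E$; applying the bookkeeping of \eqref{ttthissw} to $F$ identifies $\lambda(V)$ with (minus) the parabolic degree of $F$. The statement that $E$ is trivial is precisely Crawley-Boevey's semistability of this parabolic bundle \cite{CB3}---with the non-decreasing ordering of weights \eqref{iiii876123} built in---so $\lambda(V)\le0$ for all $V$. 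Thus $Y\in\mathsf{M}$ and $Y$ is semistable, which is the first assertion.

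Next I would upgrade to stability under irreducibility. The submodules that saturate the bound, $\lambda(V)=0$, are exactly the Fuchsian subsystems: an invariant subspace $W\subseteq E_z$ gives a sub-log-connection whose residues satisfy $\sum_iA_i|_W=0$ (the restriction of \eqref{Afucchs}), so its underlying subbundle is again trivial and $\lambda(V)=0$; conversely a $\lambda=0$ subobject of $Y$ inside the abelian category $\mathsf{M}$ corresponds, by \cite{CB3}, to a sub-local-system, i.e.\ a proper subrepresentation of $\varrho$. Hence irreducibility of the Fuchsian system---equivalently $\dim_\C\bze(\Gamma)=1$, equivalently the brick condition $\mathsf{End}(Y)\simeq\C$---rules out every proper submodule with $\lambda(V)=0$. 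The semistability inequality is then strict on all proper nonzero $V$, so $Y$ is stable; this is the module-theoretic form of \textbf{Correspondence \ref{kkkkjjjjq}}.

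The main obstacle is the identification used in the semistability step: that $\lambda(V)\le0$ for all submodules is genuinely equivalent to the triviality of $E$, for arbitrary---not merely $\nabla$-invariant---subbundles. For invariant $F$ the vanishing $\lambda(V)=0$ is immediate from \eqref{Afucchs}; the real content is the general case, where one must match $\lambda(V)$ with the parabolic degree of a possibly non-invariant $F$ and control its sign, which is exactly the parabolic-bundle theorem of \cite{CB3}. A secondary point is to confirm that the leading asymptotic form \eqref{iiii887c} of $Z$ already detects (semi)stability; this is handled as in the ``Limitations'' discussion by rescaling the weights $w_{i,\ell}\to t\,w_{i,\ell}$ into the small-$t$ regime and using that every Fuchsian ODE is continuously connected to the trivial connection without crossing a wall of marginal stability, so that semistability---and, for bricks, stability---is preserved along the way.
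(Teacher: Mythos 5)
Your overall reduction — everything rests on the sign of $\lambda$ on submodules, semistability for the asymptotic $Z$ of \S\,\ref{s:stability} is membership in the controlled subcategory $\mathsf{M}$, and the $\lambda(V)=0$ subobjects are the $\Gamma$-invariant ones — is the route the paper intends (the paper itself offers only the heuristic of \S\,\ref{s:stability} plus the citation of Lemma 5.5 of \cite{CB3}). The reduction to saturated submodules $V_{i,\ell}=V_\star\cap E_{i,\ell}$ is also fine, since the ordering \eqref{iiii876123} makes the coefficients $w_{i,\ell+1}-w_{i,\ell}$ nonnegative. The genuine gap is the source you assign to the inequality $\lambda(V)\le 0$. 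Grothendieck splitting controls only the \emph{ordinary} degree of subsheaves, and that input is vacuous here: the subsheaf of $\co^{\,n}$ generated by any $V_\star\subseteq H^0(\co^{\,n})$ is $\co\otimes V_\star$, of degree exactly $0$, so the whole content of $\lambda(V)\le 0$ sits in the weight terms $\sum_i\sum_\ell w_{i,\ell}\bigl(\dim(V_\star\cap E_{i,\ell-1})-\dim(V_\star\cap E_{i,\ell})\bigr)$, which see nothing of the splitting type of $E$. Parabolic semistability of $(\co^{\,n},\text{flags},\boldsymbol{w})$ is an extra condition on the flags and weights; it is neither equivalent to nor implied by triviality of $E$.

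That no argument using only the bundle, flags and ordered weights can close this is shown by an order-$2$ example with $s=3$: take $w_{1,\bullet}\approx(-10,10)$ and $w_{2,\bullet},w_{3,\bullet}\approx(-\epsilon,\epsilon)$ (perturbed so as to be non-resonant with $\sum_i\mathrm{tr}\,A_i=0$), with $A_2,A_3$ non-Hermitian triangular matrices with large off-diagonal entries so that $A_1=-(A_2+A_3)$ has eigenvalues $\approx\pm 10$ and the triple is irreducible. The submodule with $V_\star=V_{1,1}=E_{1,1}$ and $V_{2,1}=V_{3,1}=0$ has $\lambda(V)=w_{1,2}+w_{2,1}+w_{3,1}\approx 10>0$. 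So the inequality can fail for a genuine irreducible Fuchsian system with non-Hermitian residues, and a fortiori does not follow from $E\simeq\co^{\,n}$. What actually forces $\lambda(V)\le 0$ is the \emph{unitary} structure: compressing the $D$-term relations \eqref{dddtermsX} (equivalently \eqref{0009zzam} in the gauge $X_{v^*}=X_v^\dagger$) to a $\C Q(\boldsymbol{p})$-submodule and taking traces gives $\lambda(V)=-\sum_v\bigl(\|P_{t(v)}X_v\|^2-\|X_vP_{s(v)}\|^2\bigr)\le 0$, with equality iff $V$ is also preserved by the $X_v^\dagger$, i.e.\ iff $V_\star$ is $\Gamma$-invariant — which is precisely the fact your stability paragraph needs as well. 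This momentum-map positivity (Crawley-Boevey's lemma \cite{PP2,CB1}, the actual content of the result you defer to in \cite{CB3}) is the paper's mechanism, carried by \S\,\ref{s;L}; it is invisible to a Grothendieck-splitting argument, and in the non-unitary case the Proposal must be read through the ``complexification'' of \S\,\ref{s;L} rather than through naive semistability.
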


The case $E\not\simeq \co^{\,n}$ should be ``morally'' similar, but we cannot neglect the (unknown) higher order effects.

 \subsection{``Predictions'' from asymptotic freedom}
 
 The purpose of this subsection is to highlight the power of physical intuition
on log-connections
 by showing that deep (true) math theorems can be inferred by heuristic arguments
 so \emph{crude} that they are largely insufficient even for the  \emph{weakest}  
  standard of physicists' rigor.
 
 Suppose the Yang-Mills $\beta$-function of the 4d $\cn=2$ QFT is negative. 
 We are interested in the IR
 physics, and  in this regime the YM electric coupling gets strong. Dually, the magnetic coupling
 becomes small. Hence the gauge coupling of the effective theory living on the world-line
 of a BPS magnetic monopole is asymptotically small in the IR. However this theory has also other couplings
 proportional to the eigenvalues $w_{i,\ell}$ of the $A_i$'s (cf.\! \S.\,\ref{s;L}). 
 Suppose that these couplings also vanish. This automatically forces $e=0$ which (by gauge invariance)
 sets to zero all superpotential interactions.
  Then the effective world-line theory is asymptotically the free theory and therefore $E$ should be the trivial bundle $\co^{\,n}$. In the language of ODEs $\beta<0$ translates in $\mathsf{L}(\boldsymbol{p})$ being a finite-dimensional Lie algebra, and $w_{i,\ell}=0$ in the statement that the local monodromies are
  unipotent. This gives the following special case of Hilbert's $21^\text{st}$ problem:
  
  \begin{pro}[Crawley-Bovey \cite{CB6}]
  Let $(E,\nabla)$ be a logarithmic connection on $\mathbb{P}^1$ whose associated Kac-Moody
  Lie algebra $\mathsf{L}(\boldsymbol{p})$ is finite-dimensional while the local monodromies $\varrho_i$
  are unipotent. Then $E\simeq \co^{\,n}$.
  \end{pro}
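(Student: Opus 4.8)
The plan is to reduce the statement to a semistability property of the underlying bundle and then invoke Grothendieck's splitting theorem on $\mathbb{P}^1$. First I would unwind the hypotheses. Non-resonance places the eigenvalues of each $A_i$ in a fixed transversal of $\R/\Z$, while unipotency of $\varrho_i=\exp(-2\pi i\,A_i)$ forces all of these eigenvalues into the class of $0$; choosing the transversal to contain $0$, every $A_i$ is nilpotent and all weights $w_{i,\ell}$ vanish. The topological obstruction $\deg E+\sum_i\mathrm{tr}\,A_i=0$ then gives $\deg E=0$. Writing $(E,\nabla)$ as a direct sum of indecomposable logarithmic connections (each summand still has nilpotent residues, degree $0$, and filtration data whose star-shaped graph is a subgraph of the finite-type $\Lambda(\boldsymbol{p})$, hence again finite-type), it suffices to treat the indecomposable case. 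For finite-type $\mathsf{L}(\boldsymbol{p})$ every indecomposable is exceptional, hence a brick ($\mathsf{End}\simeq\C$), so $\bze(\Gamma)=\C^\times$ and the monodromy $\varrho$ is irreducible; moreover $[E]$ is a real root, so the connection is rigid ($\ca$ is a point).

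The heart of the matter is the claim that an irreducible logarithmic connection on $\mathbb{P}^1$ with nilpotent residues and $\deg E=0$ has trivial underlying bundle. By Grothendieck $E\simeq\bigoplus_j\co(d_j)$ with $\sum_j d_j=0$, and $E\simeq\co^{\,n}$ if and only if $E$ is semistable. So I would assume $E$ is unstable and derive a contradiction. Let $E_1\subset E$ be the maximal destabilizing subbundle, so $E_1$ is semistable with $\mu(E_1)>0$ and every Harder--Narasimhan slope of $E/E_1$ is $<\mu(E_1)$. The obstruction to $E_1$ being $\nabla$-invariant is the $\co$-linear composite $\bar\nabla\colon E_1\hookrightarrow E\xrightarrow{\ \nabla\ }E\otimes\Omega^1(\log S)\twoheadrightarrow(E/E_1)\otimes\Omega^1(\log S)$, a global section of $\mathrm{Hom}_{\co}(E_1,E/E_1)\otimes\Omega^1(\log S)$. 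If $\bar\nabla=0$ then $E_1$ is a proper nonzero $\nabla$-invariant subbundle, contradicting irreducibility of $\varrho$; hence the whole game is to prove $\bar\nabla=0$.

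Here the finite-type hypothesis enters through the combinatorics: the only star-shaped graphs $\Lambda(\boldsymbol{p})$ of finite type have at most three legs, so $s\le 3$ and $\deg\Omega^1(\log S)=s-2\le 1$. Since $E_1$ is semistable and $E/E_1$ has strictly smaller slopes, $\mathrm{Hom}_{\co}(E_1,E/E_1)$ has all Harder--Narasimhan slopes $<0$. For $s\le 2$ this gives $\mu_{\max}\!\big(\mathrm{Hom}_{\co}(E_1,E/E_1)\otimes\Omega^1(\log S)\big)<0$, so there are no global sections and $\bar\nabla=0$ immediately. For $s=3$ one still has $H^0\!\big(\mathrm{Hom}_{\co}(E_1,E/E_1)\otimes\Omega^1_{\mathbb{P}^1}\big)=0$, so by the residue sequence $\bar\nabla$ is determined by its residues $\mathrm{Res}_{z_i}\bar\nabla$, which are exactly the off-diagonal blocks of $A_i$ relative to $(E_1)_{z_i}\subset E_{z_i}$. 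At this borderline the pure slope count is no longer strict, and I would force these blocks to vanish by combining the nilpotency of the $A_i$ with the explicit shape of the real root $[E]$ (in particular the small value of $\mathrm{rank}\,E_1$ dictated by the finite-type Dynkin data $\{D_n,E_6,E_7,E_8\}$). Once $\bar\nabla=0$, the bundle is semistable of degree $0$, hence $E\simeq\co^{\,n}$.

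I expect the $s=3$ case to be the main obstacle, and it is genuinely where finite-type must do real work: the inequality $s\le 3$ alone cannot suffice, since affine weight systems also realize $s=3$ with $\deg\Omega^1(\log S)=1$ yet admit more general (non-rigid) connections, so the conclusion cannot follow from a slope estimate that ignores rigidity. One therefore has to feed in both the nilpotency of the residues and the real-root numerics of $[E]$ to kill the polar part of $\bar\nabla$; concretely this is a Fuchsian-weight/degree estimate bounding how far $\nabla$ can push $E_1$ out of itself through log poles with nilpotent residues. An alternative route is to argue directly from rigidity --- since $\ca$ is a point, one would try to show that the unique bundle in the family is the balanced one $\co^{\,n}$ --- but with the eigenvalues pinned to $0$ there is no freedom to shift them by integers as in a Deligne canonical extension, so triviality is not formal and again reduces to the same nilpotent degree estimate.
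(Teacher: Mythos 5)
There is a genuine gap, and it sits exactly where the theorem lives. First, for context: the paper does not prove this Proposition at all --- it quotes it from Crawley-Boevey \cite{CB6} and ``derives'' it only by the heuristic that an asymptotically free magnetic world-line theory with all couplings $w_{i,\ell}$ switched off must flow to the free theory, hence see a trivial bundle. So any rigorous argument is necessarily your own, and your plan (Harder--Narasimhan destabilizing subbundle $E_1$, the obstruction class $\bar\nabla\in H^0\bigl(\mathrm{Hom}(E_1,E/E_1)\otimes\Omega^1(\log S)\bigr)$, slope estimates, residue sequence) is a reasonable one. You correctly get $\deg E=0$ from unipotency plus the topological constraint, correctly note that finite-type stars force $s\le 3$ essential singularities, and correctly dispose of $s\le 2$. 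The problem is that for $s\le2$ the statement is essentially vacuous ($\pi_1$ is trivial or $\Z$, and the slope bound is automatic), while every finite-type star other than $A_n$ --- i.e.\ $D_n$, $E_6$, $E_7$, $E_8$, which is where all the content of the Proposition resides --- has exactly three legs. At $s=3$ your slope estimate degenerates, the residue sequence reduces the question to showing that each $A_i$ preserves the fibre $(E_1)_{z_i}$, and at precisely this point you write that you ``would force these blocks to vanish by combining the nilpotency of the $A_i$ with the explicit shape of the real root'' --- but you never do it. You even diagnose correctly that no slope count ignoring the root-theoretic input can work (affine weight types also realize $s=3$ and do admit nontrivial bundles). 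So the proposal proves the Proposition only in the trivial cases and announces, rather than supplies, the proof of the real theorem.

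A secondary but repairable issue: the reduction ``indecomposable $\Rightarrow$ exceptional $\Rightarrow$ brick $\Rightarrow$ irreducible monodromy'' is false in this setting. An indecomposable logarithmic connection with unipotent local monodromies can perfectly well have reducible monodromy (a rank-2 unipotent local system given by a nontrivial extension of the trivial local system by itself has $\mathsf{End}\simeq\C[N]/(N^2)$ and is not a brick, even though its graph is $D_4$ and its dimension vector is a real root); the paper's own \textbf{Exception} at the end of \S.5 records exactly this failure of simplicity at $\boldsymbol{\lambda}=0$. The correct identification of indecomposables with (real) roots here passes through the affinization $\mathbb{L}(\boldsymbol{p})$ and the constraint $\lambda([E])=0$, not through Gabriel's theorem for the finite-type star quiver. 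This particular step can be patched --- if $E_1$ turns out to be $\nabla$-invariant you do not need irreducibility for a contradiction, only induction on the rank (a $\nabla$-invariant $E_1$ is itself a log connection with nilpotent residues and finite-type graph, so by induction $E_1\simeq\co^{\,k}$, contradicting $\mu(E_1)>0$) --- but the central $s=3$ vanishing argument still has to be carried out, and as written it is not.
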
 
  
The condition that the associated 4d gauge theory is asymptotically-free is essential for the validity of the statement. When the 4d gauge theory has vanishing $\beta$-function, the effective 1d couplings will not go to zero,
and we cannot say anything without making detailed computations. To illustrate the difference between
log-connections associated to 4d $\cn=2$ gauge field theories with $\beta_\textsc{ym}<0$ and $\beta_\textsc{ym}\geq 0$, 
 we present simple examples of irreducible log-connections on $\mathbb{P}^1$
 with $E$ \emph{non-trivial,} whose local monodromies are all unipotent, which means that all other 1d effective couplings associated to the eigenvalues $w_{i,\ell}$ vanish) so that the only non-trivial coupling on the BPS world-line arises from the 4d Yang-Mills interaction. The above \textbf{Proposition} predicts that  the corresponding 
4d gauge theory should have $\beta_\textsc{ym}\geq0$.

As a first example, we consider the rational elliptic surface\footnote{\ We hope no confusion arises between the symbol $E$, which stands for a holomorphic vector bundle over $\mathbb{P}^1$, and the symbol $\ce$
which stands for an elliptic surface, i.e.\! a compact complex manifold of dimension 2 together with
a holomorphic fibration $\pi_1\colon \ce\to\mathbb{P}^1$ whose generic fiber is a smooth
elliptic curve.}
$\ce$ \cite{kunicho1,kunicho2,miranda,mwbook} at a very generic point of its complex moduli. $\ce$ is an elliptic fibration over $\mathbb{P}^1$ with
  12 exceptional fibers all of Kodaira type $I_1$ \cite{fibers}, that is, the Picard-Fuchs equation $\nabla\Pi=0$ for the periods
  $\Pi$ of the elliptic fibers
  is an irreducible log-connection of order 2 with 12 regular singularities whose local monodromies are all unipotent. 
 Since the exceptional fibers are irreducible, the N\'eron and the Weierstrass models coincide \cite{mwbook}, i.e.\! the Weierstrass model
  \be
  y^2=x^3+a x +b
  \ee 
  is smooth. $x$, $y$, $a$, $b$, and the discriminant $-(4a^3+27b^2)$ are (respectively) sections of $\cl^3$, $\cl^2$, $\cl^4$, $\cl^6$ and $\cl^{12}$ for some line bundle $\cl\to\mathbb{P}^1$. Since $\chi(\ce)=12$, $\deg\cl=1$.
The fiber of  $E$ is spanned by $dx/y$ and $x\, dx/y$ so that
\be
E\simeq \cl^{-1}\oplus \cl=\co(-1)\oplus \co(1)\not\simeq \co^{\mspace{2mu}2},
\ee
Our crude argument then says that the $\beta$-function of the corresponding 4d $\cn=2$
QFT should be non-negative. In facts this log-connection is associated to the non-UV-complete $\cn=2$ $SU(2)$ SQCD with $N_f=12$ ($\beta_\textsc{ym}>0$). 
This example may be criticized since the corresponding 4d QFT is not well defined. To make a more convincing case for our claim, we look for examples
of log-connections with unipotent local monodromies which are associated to \emph{bona fide}
4d QFTs and have $E$ non-trivial. The prediction from the crude argument is that all such theories should have $\beta_\textsc{ym}=0$. Before going to that let us generalize the previous (``unphysical'')
example:

\begin{corl}\label{c:4} Let $\ce\to\mathbb{P}^1$ be a $\chi(\ce)\neq0$ elliptic surface with 
section whose fibers are all semi-stable.\footnote{\ I.e.\!  of Kodaira type
$I_{m}$ for $m\geq0$.} The number of exceptional fibers is $\geq4$. When the inequality is saturated
$\ce$ is rational and the corresponding 4d $\cn=2$ gauge theory has $\beta_\text{YM}=0$.
\end{corl}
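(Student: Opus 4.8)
The plan is to translate the geometry of the semistable fibration into the combinatorial type $\boldsymbol p$, then to bound the number $s$ of singular fibres by a Riemann--Hurwitz analysis of the $j$-map. First I would record the classical input: since every fibre is of Kodaira type $I_{m_v}$ (with $m_v\geq1$ at a singular fibre and $m_v=0$ at a smooth one), the topological Euler number is $\chi(\ce)=\sum_v m_v$, while Noether's formula together with $K_\ce^2=0$ gives $\chi(\ce)=12\,\chi(\co_\ce)=12k$ with $k\geq1$ because $\chi(\ce)\neq0$. Each singular $I_{m_v}$ fibre produces one \emph{essential} regular singularity of the order-$2$ Picard--Fuchs system whose local monodromy $\varrho_i$ is a single unipotent Jordan block of size $2$; hence the residue matrix $A_i$ is a single nilpotent $2\times2$ block, its minimal polynomial has degree $2$, and the filtration \eqref{filtration} has length $p_i=2$. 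The associated weighted line therefore has type $\boldsymbol p=(2,\dots,2)$ with $s$ entries, so that
\[
\chi(\mathbb{X}(\boldsymbol p))=2-\sum_{i=1}^s\Bigl(1-\tfrac12\Bigr)=2-\tfrac{s}{2}.
\]

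The key step is to bound $s$ from below. Because $\ce$ carries singular fibres, the $j$-invariant $j\colon\mathbb{P}^1\to\mathbb{P}^1$ is non-constant, of degree $N=\sum_v m_v=\chi(\ce)=12k$ (its poles sit exactly at the $I_{m_v}$ fibres, with order $m_v$). Semistability forbids additive fibres, which forces the ramification of $j$ over $j=0$ (resp.\ over $j=1728$) to have indices divisible by $3$ (resp.\ by $2$); writing $r_0,r_{1728}$ for the numbers of preimages of $0,1728$ one gets $r_0\leq N/3$ and $r_{1728}\leq N/2$, while the $s$ poles contribute $N-s$ to the ramification divisor. I would then substitute into Riemann--Hurwitz, $\deg R=2N-2$, to obtain
\[
(N-r_0)+(N-r_{1728})+(N-s)+R'=2N-2,\qquad R'\geq0,
\]
whence $s=N+2+R'-r_0-r_{1728}\geq N/6+2=2k+2\geq4$. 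This simultaneously gives the stated inequality $s\geq4$ and pins the saturation: $s=4$ forces $k\leq1$, hence $k=1$, i.e.\ $\chi(\ce)=12$ and $\chi(\co_\ce)=1$; together with $q=p_g=0$ this is exactly the numerical characterization of a \emph{rational} elliptic surface. Finally, inserting $s=4$ into the formula for $\chi(\mathbb{X}(\boldsymbol p))$ gives $\chi(\mathbb{X}(\boldsymbol p))=2-2=0$, so $\beta(g_\text{YM})=-2\,\chi(\mathbb{X}(\boldsymbol p))\,g_\text{YM}^3=0$, as claimed.

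In the spirit of the preceding \textbf{Proposition}, the lower bound $s\geq4$ also admits a purely ``physical'' derivation that I would note: for $\deg\cl=k\geq1$ the Hodge/de Rham bundle is $E\simeq\cl^{-1}\oplus\cl=\co(-k)\oplus\co(k)\not\simeq\co^{\,2}$, a \emph{non-trivial} bundle whose local monodromies are unipotent, so by the \textbf{Proposition} the Kac--Moody algebra $\mathsf{L}(\boldsymbol p)$ cannot be finite-dimensional; equivalently $\chi(\mathbb{X}(\boldsymbol p))\leq0$, i.e.\ $s\geq4$. This reproduces the inequality but, unlike the $j$-map argument, does not by itself separate the marginal case $\chi(\mathbb{X}(\boldsymbol p))=0$ from $\chi(\mathbb{X}(\boldsymbol p))<0$, nor does it prove rationality; both require the Riemann--Hurwitz count above.

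The main obstacle I expect is the classical but delicate bookkeeping that semistability (only $I_m$ fibres) is equivalent to the divisibility constraints on the local ramification indices of $j$ over $0$ and $1728$ --- i.e.\ the translation between Kodaira's fibre classification and the behaviour of the $j$-map --- together with ruling out the isotrivial case, which is automatic here since a single singular fibre forces $j$ to attain $\infty$ and hence to be non-constant. Everything else reduces to the short Riemann--Hurwitz computation and the elementary identity $\chi(\mathbb{X}(\boldsymbol p))=2-s/2$.
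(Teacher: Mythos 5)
Your proof is correct, but it takes a genuinely different route from the paper's. The paper bounds the number $s$ of singular fibres by a Picard-number argument: the zero section, the general fibre, and the $\sum_a(m_a-1)$ fibre components missing the zero section give $\rho(\ce)\geq 2+\chi(\ce)-s$, while the Lefschetz $(1,1)$ bound gives $\rho(\ce)\leq b_2(\ce)-2\mspace{1mu}p_g(\ce)=\chi(\ce)-2-2\mspace{1mu}p_g(\ce)$; combining the two yields $s\geq 4+2\mspace{1mu}p_g(\ce)$, and saturation forces $p_g(\ce)=0$, hence rationality. You instead run Riemann--Hurwitz for the $j$-map of degree $N=\chi(\ce)=12\,\chi(\co_{\ce})$, using the dictionary by which semistability forces the ramification indices over $j=0$ (resp.\ $j=1728$) to be divisible by $3$ (resp.\ $2$), obtaining $s\geq N/6+2=2\,\chi(\co_{\ce})+2$. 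Since $p_g(\ce)=\chi(\co_{\ce})-1$ (as $q=0$), the two bounds are numerically identical; your argument is essentially Beauville's, which the paper itself cites in the remark immediately following the corollary, while the paper's is the Arakelov-inequality route. Yours is more elementary in that it avoids Hodge theory and the Shioda--Tate setup, at the price of the Kodaira-type versus $j$-ramification bookkeeping you flag; the paper's version generalizes verbatim to the inequality $2\mspace{1mu}p_g(\ce)\leq s-4$ stated right after the corollary. The physics part ($p_i=2$ for unipotent $I_m$ monodromy, hence $\chi(\mathbb{X}(2,2,2,2))=0$ and $\beta_\text{YM}=0$) is handled the same way in both. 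One small caveat: ``$q=p_g=0$'' is not by itself a numerical characterization of rational surfaces (Enriques surfaces also satisfy it); what closes the argument is that a relatively minimal elliptic surface over $\mathbb{P}^1$ with section and $\chi(\co_{\ce})=1$ has $K_\ce=\pi^*\co(-1)$ non-effective, hence Kodaira dimension $-\infty$, and Castelnuovo with $q=0$ then gives rationality --- the paper is equally terse on this point, so this is a presentational quibble rather than a gap.
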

When the inequality is saturated the Picard-Fuchs equation for the fibers' periods
 has KM algebra $\mathsf{L}(2,2,2,2)\equiv \widetilde{D}_4$ of affine type,
and $[Y]$ is its minimal imaginary root $\delta$. The associated 4d QFT is $SU(2)$ SYM with $N_f=4$ which is \emph{superconformal} ($\beta_\text{YM}=0$).
 Let us show that a ``superconformal'' semi-stable elliptic surface
(with section) $\ce\to\mathbb{P}^1$ is automatically rational. Let $\{I_{m_1},I_{m_2}, I_{m_3}, I_{m_4}\}$
be the list of its singular fibers. The Picard number satisfies 
$\rho(\ce)\geq 2+\sum_a(m_a-1)$ since we have at least that many linear independent
divisor classes (namely the zero section, the generic fiber, and the components of the exceptional fibers which do not cross the zero section) \cite{miranda,mwbook}.\footnote{\ For a review of the geometry of elliptic surfaces from a physical standpoint see \cite{flavor}.} The Euler characteristic is $\chi(\ce)=\sum_a m_a$ and $b_1(\ce)=0$. Hence the second Betti number $b_2(\ce)$ is
\be\label{juqwrt671}
b_2(\ce)=\chi(\ce)-2+b_1(\ce)= \sum_a m_a-2\leq \rho(\ce)\leq b_2(\ce)-2\, p_g(\ce)
\ee
and the geometric genus is $p_g(\ce)\equiv\tfrac{1}{12}\chi(\ce)-1=0$.
In the Persson list of fiber configurations for the rational elliptic surface \cite{fibers}
there are six such ``superconformal'' fiber configurations
\be\label{ffffibe}
\{I_9,3I_1\},\quad \{I_8,I_2,2I_1\},\quad \{I_6,I_3,I_2,I_1\},\quad \{2I_5,2I_1\},\quad \{2I_4,2I_2\},\quad \{4I_3\}.
\ee

More generally eq.\eqref{juqwrt671} yields $2\mspace{1.5mu}p_g(\ce)\leq (s-4)$
for \emph{semi-stable} elliptic surfaces $\ce\to\mathbb{P}^1$ where $s$ is the number of exceptional fibers.
E.g.\! a semi-stable elliptic K3 has at least 6 fibers of type $I_{m_a}$
(cf.\! tables of K3 fibers configurations in \cite{K3}). 
%Note also that for all Picard-Fuchs ODEs arising from elliptic surfaces $\ce\to \mathbb{P}^1$
%one has $p(\alpha)=s-3$ in agreement with the dimension of the moduli space of elliptic surfaces with a given fiber configuration.\footnote{\ The dimension is $1/2$ the dimension of $\ca$ since
%the monodromy is required to be real.}

\begin{rem} \textbf{Corollary \ref{c:4}} of asymptotic freedom is a special instance of
the \emph{generalized Arakelov inequality} \cite{ara}, see the \textbf{example} in \S.\,13.4 of \cite{periods}. See also  the main theorem in \cite{beau2}. Eq.\eqref{ffffibe} is the main theorem of \cite{beau3}.
\end{rem}

\subsection{The root set $\Sigma_\lambda$}

Following \cite{CB1,CB2,CB3,CB4} we define
$R_\lambda$ to be the set of positive roots $\alpha\in\Delta(\boldsymbol{p})^+$ of the Kac-Moody
Lie algebra $\mathsf{L}(\boldsymbol{p})$
such that $\lambda(\alpha)=0$. 
We write $\Sigma_\lambda$
for the subset of $\alpha\in R_\lambda$ with the property that 
\be
p(\alpha)> p(\beta_1)+p(\beta_2)+\cdots
\ee
for all decomposition $\alpha=\beta_1+\beta_2+\dots$
 as a sum of two or more elements of $R_\lambda$.
 
\begin{correspondence}\label{coorrr997}
If $(\co^{\,n},\nabla)$ is an irreducible Fuchsian system,
the corresponding BPS monopole has quantum numbers
$[Y] \in\Sigma_\lambda$.
\end{correspondence}

\section{The 4-SUSY SQM $\mathscr{Q}$ and its complexification $\mathscr{Q}^{\,\C}$}
\label{s;L}

We return for a moment to the case of a general logarithmic connection $(E,\nabla)$
which we assume to be ``unitary'' (i.e.\! with Hermitian residues).
Once we have associated to $(E,\nabla)$
a would-be BPS dyonic particle of our $\cn=2$ gauge system
with quantum numbers $(\dim Y_\sigma)_{\sigma\in \boldsymbol{Q}(\boldsymbol{p})_0}$,
we can consider the effective SQM $\mathscr{Q}$ living on its \emph{world-line} $\gamma$.

\subsection{The effective Lagrangian of $\mathscr{Q}$}
Our first problem is to determine the 1d effective Lagrangian $L$ governing the
light degrees of freedom living on $\gamma$. On general grounds \cite{bpsqui} it should be
a quiver 1d gauge theory, invariant under 4-SUSY,
which is based on the 4d BPS quiver $\boldsymbol{Q}(\boldsymbol{p})_\textsc{BPS}$:
each node $\sigma\in \boldsymbol{Q}(\boldsymbol{p})_{\textsc{BPS},0}$
represents a gauge group of rank $\dim Y_\sigma$ and each arrow $v\in \boldsymbol{Q}(\boldsymbol{p})_{\textsc{BPS},1}$ a bifundamental chiral superfield.
 The exact Lagrangian $L$ is
expected to be non-polynomial in the fields and their derivatives, akin to a (supersymmetric) non-Abelian version of
a gauge $+$ matter Born-Infeld Lagrangian. However the common lore says that
for most purposes we may replace the unknown  
non-polynomial Lagrangian $L$ by its lower-dimensional terms which, when written 
in  terms of 4-SUSY superfields, take the schematic form
\be\label{iuytertqwe}
\begin{split}
L_\text{trun}= &\sum_i \int d^4\theta\,\overline{X}_i\, e^{T^a V_a} X_i+\sum_a \lambda_a\mspace{-4mu}\int d^4\theta\, V_a+\\
&\quad+\left(\int d^2\theta\; W(X_i)_\text{cubic} +\text{h.c.}\right)+\int d^2\theta\; \sum_a\mathrm{tr}(W_a^\alpha W_a^\beta)\varepsilon_{\alpha\beta}.
\end{split}
\ee
\emph{A priori} replacing the full non-polynomial Lagrangian $L$ by its 
lower-dimensional truncation $L_\text{trun}$ is appropriate only asymptotically
 in the weak-field
regime, where the chiral scalar fields\footnote{\ Since we are interested in the Higgs phase,  we may assume the vector supermultiplets to be naturally small with no loss.} $X_i$ (and their gradients)
 are small enough to justify neglecting their
higher powers; however it is generally believed that supersymmetry extends the validity of the truncation well beyond the weak-field regime. The standard claim says
that, as long as we are interested only
in SUSY-protected quantities,
we can safely use the truncated Lagrangian $L_\text{trun}$ and still get the \emph{exact} result.
The space of SUSY-preserving vacua is SUSY-protected in a 4-SUSY theory,
so the truncation $L\leadsto L_\text{trun}$  looks ``morally'' justified. However we should be a little more careful. 
The standard claim is based on the fact that the SUSY-protected quantities
are invariant under \emph{continuous deformations} of the Lagrangian
 which preserve SUSY.
We conclude that, \textit{as long as our quantum system $\mathscr{Q}$ is continuously related to a 4-SUSY theory
whose SUSY vacua belong to the parametrically small chiral superfield regime,}
we can use the truncated Lagrangian, \emph{even if the fields $X^i$ are \textbf{not} small,}
and get the \emph{exact} vacuum geometry.
To understand this condition, we rescale all parameters by a common factor $t$
\be
w_{i,\ell}\to t\mspace{2mu} w_{i,\ell}.
\ee
As we shall see momentarily, in the Higgs phase $|X_i|^2=O(t)$ for $t$ small.
 Hence the truncated Lagrangian reproduces the correct Higgs branch
in situations continuously connected to the regime of parametrically small $w_{i,\ell}$.
We already explained in \S.\,\ref{s:stability} that in this regime the bundle $E$ is topologically trivial,
and (of course) the topology of the bundle is invariant under continuous deformations.
We conclude that the truncation $L\leadsto L_\text{trun}$ is legittimate even when the $w_{i,\ell}$ are \textbf{\emph{not}} small  
\emph{provided} the logarithmic
 connection $(E,\nabla)$ is a Fuchsian system.

The present discussion in terms of the physics on the world-line $\gamma$
and the one in \S.\,\ref{s:stability} in terms of the 4d stability function are parallel:
both say that the asymptotic treatment is valid only for situations continuously connected
to the trivial one, i.e.\! for Fuchsian systems. The agreement between the two viewpoints
gives further confidence in the scenario. 
%\medskip
%
%Now consider the non-Fuchsian situation. We do not know the Lagrangian $L$.
%Suppose that it has a Higgs branch $\cm$ of positive dimension. Choose a point
%$m\in \cm$ which corresponds to a (unitary) logarithmic connection with $E$ non-trivial. 
%Consider nearby points $m+\delta m\in\cm$. $\delta m$ are identified with massless scalars
%in the vacuum $m$. The low-energy low-field dynamics of these massless fields
%is again governed by a 4-SUSY effective Lagrangian, and now we may safely 
%truncate it to the leading terms, getting a Lagrangian of the general class \eqref{iuytertqwe}.
%However now the gauge group is replaced by the unbroken subgroup $H$ in the vacuum $m$
%and the Fayet-Illiopoulos (FI) $\lambda_a$ are replaced by the effective 
%FI couplings in the vacuum $m$, $\lambda_a\to \lambda_a(m)$. 

\subsection{The truncated effective Lagrangian}
The truncated low-energy effective theory along the BPS particle world-line
 was described in \cite{classification,bpsqui,denef}.
 The SQM of \cite{classification,bpsqui,denef} is a 1d 4-supercharge quiver gauge theory (SQM) over the BPS quiver
$\boldsymbol{Q}(\boldsymbol{p})_\textsc{BPS}$  of the $\cn=2$ theory, that is, over the completion of the 
quiver $\boldsymbol{Q}(\boldsymbol{p})$ in figure \eqref{biquiver}
where now all arrows are drawn solid,
and:
\begin{itemize}
\item the $\sigma$-th node represents a SYM sector with gauge group $U(N_\sigma)$  where $N_\sigma\equiv \dim Y_\sigma$;
\item each arrow $v\in Q_1$ represents a bifundamental chiral superfield $X_v$  in the representation 
\be
\boldsymbol{N}_{t(\sigma)}\times\boldsymbol{\overline{N}}_{s(v)}\quad\text{of}\quad 
U(N_{t(\sigma)})\times U(N_{s(v)}),
\ee
\item the $U(1)$ factor of the gauge group at node $\sigma$ has Fayet-Iliopoulos (FI) coupling $\lambda_\sigma$;
\item the superpotential $\cw(X_v)$ in the one in eq.\eqref{uyqertii}
\be
\cw=\sum_{i=1}^s b_i \,\mathrm{tr}\Big(X_A X_{\alpha_i}X_{\beta_i}\Big)-\sum_{i=1}^s a_i\,\mathrm{tr}\Big(X_B X_{\alpha_i}X_{\beta_i}\Big).
\ee
\end{itemize}

 Half-BPS states, being invariant under 4 supersymmetries, correspond to ground states of $\mathscr{Q}$ which preserves all 
 4 SUSY. The chiral superfield configurations should satisfy the 
 $D$-term conditions
 \be\label{dddterms}
 \sum_{v\colon t(v)=\sigma} X_{v}X^\dagger_{v}- \sum_{v\colon s(v)=\sigma}X_{v}^\dagger X_{v}+\lambda_\sigma +\cdots=0,\qquad
 \text{for all }\sigma\in Q(\boldsymbol{p})_0.
 \ee
 
The overall $U(1)$ group (i.e.\! the diagonal subgroup in the product of all $U(1)$ factors
 at the nodes) is a decoupled \emph{free} 1d vector supermultiplet.
 If the FI coupling of this free subsector
 is non zero, SUSY would be spontaneously broken, and no BPS state would exist. The requirement
 of vanishing overall FI term is most easily obtained by taking the trace of \eqref{dddterms} in the space $\oplus_\sigma Y_\sigma$; one gets\footnote{\ The omitted terms $\cdots$ in eq.\eqref{dddterms} arise from the (unknown) higher order couplings in $L$ which (by definition) do not involve the free $U(1)$ sector.}
 \be
 0=\sum_\sigma \lambda_\sigma\dim Y_\sigma \equiv \lambda(Y)=0,
 \ee  
which is exactly the equation eq.\eqref{i987weqq0} (cf.\! eq.\eqref{ttthissw})
 expressing the topological condition for the existence of the ODE.
When $c_1(E)=0$ the electric charge $-\deg E$ is zero,
 the BPS quiver reduces to $Q(\boldsymbol{p})$
 i.e.\! to the Kac-Moody Dynkin graph with the ingoing orientation, and no superpotential is allowed by gauge invariance.

\subparagraph{Fuchsian ODEs.} We stress again that the above truncated Lagrangian is reliable only in the Fuchsian case, 
 i.e.\! when
 $E\simeq \co^{\,n}$ is trivial. From now on we focus on this case.
 We have automatically $c_1(E)=0$, and the truncated Lagrangian $L_\text{trun}$
 simplifies dramatically. Now eq.\eqref{dddterms}  is the only condition 
 a field configuration should satisfy in order
 to be a \emph{classical} SUSY-preserving vacuum along the Higgs branch (the 
 vector-multiplet scalars being set to zero), and we may safely forget the
 extra terms $\cdots$. We remain with
  \be\label{dddtermsX}
 \sum_{v\colon t(v)=\sigma} X_{v}X^\dagger_{v}- \sum_{v\colon s(v)=\sigma}X_{v}^\dagger X_{v}+\lambda_\sigma =0,\qquad
 \text{for all }\sigma\in Q(\boldsymbol{p})_0.
 \ee

\subsection{The Higgs branch}

 The model $\mathscr{Q}$ has a compact K\"ahler manifold $\mathscr{M}$ of classical vacua. 
More precisely $\mathscr{M}$ in the space of solutions to
\eqref{dddtermsX} modulo the action of the gauge group $G\equiv \prod_\sigma U(\dim Y_\sigma)$).
 At low-energy $\mathscr{Q}$
 effectively reduces to a 1d $\sigma$-model with target space $\mathscr{M}$. 
 When $\mathscr{M}$ is regular enough, the space of \emph{quantum}
 SUSY vacua is  isomorphic \cite{witten} to the
 space of harmonic forms $\mathbb{H}^{\bullet,\bullet}(\mathscr{M})$ of dimension $\sum_{p,q} h^{p,q}(\mathscr{M})$.
$\mathbb{H}^{\bullet,\bullet}(\mathscr{M})$ is the space of 4d BPS states, the Lefshetz $SU(2)$ action on it
 gets identified with spacetime spin rotations \cite{bpsqui} and the action of the Deligne torus\footnote{\ For the definition of the Deligne torus
 see \cite{periods}.} with R-symmetry \cite{michy}. 
 \medskip
 
 For our purposes we are interested directly in the space of classical vacua
 rather than in its cohomology.  
The chiral field configuration
 $\{X_v\}_{v\in Q(\boldsymbol{p})_1}$ breaks the gauge group $G$ down to some subgroup $H$ which contains the overall $U(1)$ which cannot be broken since it is free.
 A configuration ($\equiv$ classical vacuum) with $H=U(1)$ will be called \emph{fully Higgsed.}
 In general the space of fully Higgsed vacua is a submanifold $\cm\subset\mathscr{M}$
 which, \emph{if non-empty,} is open and dense in $\mathscr{M}=\overline{\cm}$.
 Let us compute the complex dimension of $\cm$ in a fully Higgsed phase
  \begin{equation}\label{ddim}
\begin{split}
\dim_\C\cm
=\frac{1}{2}\!
\left(\overbrace{2\sum_{v\in Q(\boldsymbol{p})_1} N_{s(v)} N_{t(v)}}^{\#\ \text{real fields}}- \overbrace{\Big(\sum_{\sigma\in Q(\boldsymbol{p})_0}N_\sigma^2-1\Big)}^{D\ \text{terms eqs.}} -
\overbrace{\Big(\sum_{\sigma\in Q(\boldsymbol{p})_0}N_\sigma^2-1\Big)}^\text{``eaten'' Goldstones}\right)
=1-q(\boldsymbol{N}).
\end{split}
 \end{equation}
 in agreement with eq.\eqref{uuuu786x}.

 \subsection{The complexified quantum system $\mathscr{Q}^{\,\C}$}

 According to \textbf{Correspondence 1} a point in the Higgs branch $\cm$
 with $H=U(1)$ represents an irreducible ``unitary''
  Fuchsian ODE.
 To get the general case where $\Gamma\subset GL(n,\C)$ we have to consider the complexification $\mathscr{Q}^{\,\C}$
 of the SQM model $\mathscr{Q}$ obtained by making the anti-chiral superfields
 to be independent complex superfields and the vectors superfields to be complex.
 The $D$-term equations become
  \be\label{dddterms2}
 \sum_{v\colon t(v)=\sigma} X_{v}X_{v^*}- \sum_{v\colon s(v)=\sigma}X_{v^*} X_{v}+\lambda_\sigma =0,\qquad
 \text{for all }\sigma\in Q(\boldsymbol{p})_0.
 \ee
 where now $X_{v^*}$ are independent $s(v)\times t(v)$ complex matrices,
 and we identify configurations which differ by 
 a $G^{\,\C}\equiv\prod_\sigma GL(n_\sigma, \C)$ gauge rotation. The space of all 
 complex fields $\{X_v,X_{v^*}\}$ now is a holomorphic symplectic manifold with $(2,0)$ form
 \be
 \Omega=i\sum_v \mathrm{tr}(dX_v\wedge dX_{v^*}).
 \ee
 
 The counting of the moduli dimensions in the fully Higgsed  branch with unbroken gauge subgroup $GL(1,\C)\simeq \C^\times$ is the same one as in eq.\eqref{ddim} without the overall factor $1/2$
 \be
 \dim\ca= 2(1-q(\alpha))\equiv 2\,p(\alpha)=2-\langle\alpha,\alpha\rangle.
 \ee
 
 To see that this Lie-theoretic formula agrees with the well-known one in the theory of Fuchsian
 ODEs (for irreducible monodromy), eq.\eqref{887zzaq123},
  consider the $A_{p_i}$ full
 subquiver along  the $i$-th branch of the star-shaped quiver $Q(\boldsymbol{p})$
 \be
 \xymatrix{\star &\bullet_{i,1}\ar[l] &\bullet_{i,2}\ar[l]  &\cdots\cdots\ar[l] &
 \bullet_{i,p_i-1}\ar[l]}
 \ee
 We write $C_i$ for the Cartan matrix of $A_{p_i}$ and 
 $\boldsymbol{N}_i\equiv \alpha|_{A_{p_i}}$ for the restriction of the dimension vector $\alpha\equiv[Y]$ to the
 $A_i$ subquiver.
 Then ($N_\star\equiv n$)
 \be
 n^2+\dim_\C \boldsymbol{\zeta}(A_i)\equiv n^2+\sum_{\ell=0}^{p_i-1}(N_{i,\ell}-N_{i,\ell+1})^2= \boldsymbol{N}^t_i \,C_i\, \boldsymbol{N},
 \ee 
 and 
 \be
 2\,q(\alpha)=\sum_i \boldsymbol{N}^t_i \,C_i\, \boldsymbol{N}-(2s-2)N_\star^2\equiv 
 \sum_i\dim_\C \boldsymbol{\zeta}(A_i)-(s-2)N_\star^2.
 \ee
Note that the chiral superfields $X_v$, seen as linear maps, are injective
in a stable BPS particle.

 \subsection{The deformed preprojective algebra}
 
 A classical SUSY vacuum of the complexified model $\mathscr{Q}^{\,\C}$
 is nothing else than a module $X$ of the \emph{deformed preprojective
 algebra} $\Pi^{\boldsymbol{\lambda}}(\boldsymbol{p})\equiv\Pi^{\boldsymbol{\lambda}}(Q(\boldsymbol{p}))$ \cite{PP1,PP2,PP3,PP4}\footnote{\ \textbf{Warning.} Our convention on the sign of the FI couplings $\lambda_\sigma$ is opposite to the one used in  \cite{PP1,PP2,PP3,PP4}. This is of no consequence since the algebras $\Pi^{\boldsymbol{\lambda}}(\boldsymbol{p})$
 and $\Pi^{-\boldsymbol{\lambda}}(\boldsymbol{p})$ are clearly isomorphic.} of our star-shaped quiver $Q(\boldsymbol{p})$.
A module $X$ of $\Pi^{\boldsymbol{\lambda}}(\boldsymbol{p})$ is a representation of the
\emph{double quiver} $\overline{Q}(\boldsymbol{p})$ of $Q(\boldsymbol{p})$, obtained by replacing each arrow $v\in Q(\boldsymbol{p})_1$ with a pair
of opposite arrows $v,v^*\in\overline{Q}(\boldsymbol{p})_1$
 \begin{equation}
\xymatrix{s(v)\ar@<0.5ex>[rr]^v && t(v)}\quad\leadsto\quad \xymatrix{s(v)\ar@<0.5ex>[rr]^v && \ar@<0.5ex>[ll]^{v^*} t(v).}
 \end{equation} 
A module $X$ of $\Pi^{\boldsymbol{\lambda}}(\boldsymbol{p})$ assigns a vector
space $X_\sigma$ to each node $\sigma\in \overline{Q}(\boldsymbol{p})_0$, and two linear maps (or, concretely, matrices)
 for each arrow $v\in \overline{Q}(\boldsymbol{p})_1$
 \begin{equation}
 X_v\colon X_{s(v)}\to X_{t(v)},\qquad X_{v^*}\colon X_{t(v)}\to X_{s(v)}.
 \end{equation}
 
In the physical set-up, unbroken SUSY requires the $D$-term of each factor gauge group
 to vanish; for the gauge group at the $\sigma$-th node the condition reads  
 \begin{equation}\label{0009zzam}
\sum_{t(v)= \sigma} X_v X_{v^*}-\sum_{s(v)=\sigma}X_{v^*} X_v+ \lambda_{\sigma}\cdot\textbf{Id}_{X_{\sigma}}=0,
 \end{equation}
 
Eq.\eqref{0009zzam} for all $\sigma\in \overline{Q}(\boldsymbol{p})_0$ are the defining relations of the deformed
 preprojective algebra of $\Pi^{\boldsymbol{\lambda}}(\boldsymbol{p})$, and a module $X$ 
 is concretely given by a set of matrices $\{X_v, X_{v^*}\}$ satisfying these constraints. Modules are identified modulo isomorphism,
 i.e.\! modulo the action of $G^{\,\C}\equiv\prod_\sigma GL(\dim X_\sigma,\C)$. 

We conclude that a fully Higgsed vacuum configuration of $\mathscr{Q}^{\mspace{1mu}\C}$
is the same thing as 
 a \emph{simple} module of $\Pi^{\boldsymbol{\lambda}}(\boldsymbol{p})$. Now (cf.\! \textbf{Correspondence \ref{coorrr997}}):

 \begin{thm}[Crawley-Boevey \cite{CB1,PP2}]\label{ttthmm}
 There is a simple representation of $\Pi^{\boldsymbol{\lambda}}(\boldsymbol{p})$ of dimension vector $\alpha$ if and only if $\alpha\in\Sigma_\lambda$. If $\alpha$ is a real root, the simple representation is unique up to isomorphism, and is the only representation of $\Pi^{\boldsymbol{\lambda}}(\boldsymbol{p})$ of dimension vector $\alpha$. If $\alpha$ is an imaginary root there are infinitely many non-isomorphic simple representations.
 \end{thm}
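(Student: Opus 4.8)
The plan is to realise modules of $\Pi^{\boldsymbol{\lambda}}(\boldsymbol{p})$ geometrically and read off the three assertions from the dimensions of strata of a moment‑map fibre. Write $\mathrm{Rep}(\alpha)$ for the space of representations of the double quiver $\overline{Q}(\boldsymbol{p})$ of dimension vector $\alpha$, on which $G^{\mspace{1mu}\C}=\prod_\sigma GL(\alpha_\sigma,\C)$ acts by conjugation, and introduce the moment map
\[
\mu\colon \mathrm{Rep}(\alpha)\longrightarrow\prod_\sigma\mathfrak{gl}(\alpha_\sigma,\C),
\qquad
\mu(X)_\sigma=\sum_{t(v)=\sigma} X_vX_{v^*}-\sum_{s(v)=\sigma} X_{v^*}X_v .
\]
By \eqref{0009zzam} the $\Pi^{\boldsymbol{\lambda}}(\boldsymbol{p})$‑modules of dimension $\alpha$ are exactly the points of $\mu^{-1}(-\lambda)$, and a module is \emph{simple} iff it is a brick whose $G^{\mspace{1mu}\C}$‑orbit is closed with stabiliser the scalars $\C^\times$. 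Taking the total trace of \eqref{0009zzam} gives at once the first necessary condition $\lambda(\alpha)=0$, so any dimension vector carrying a module already lies in the $\lambda$‑annihilator that underlies $R_\lambda$.

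The workhorse is a dimension count. The fibre $\mu^{-1}(-\lambda)$ is cut out in $\mathrm{Rep}(\alpha)$ by the $\dim G^{\mspace{1mu}\C}-1$ independent (total‑trace‑constrained) equations \eqref{0009zzam}, so each irreducible component has dimension at least
\[
\dim\mathrm{Rep}(\alpha)-\big(\dim G^{\mspace{1mu}\C}-1\big)=\dim G^{\mspace{1mu}\C}+2\,p(\alpha)-1 .
\]
Wherever the simple locus is non‑empty it is open, carries a free $G^{\mspace{1mu}\C}/\C^\times$‑action, and descends to the smooth moduli space of dimension $2\,p(\alpha)=\dim_\C\ca$; hence the component through a simple module has \emph{exactly} the expected dimension. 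Pinning this equality down globally is Crawley‑Boevey's flatness theorem for $\mu$, whose proof—an induction on $\alpha$ via reflection functors together with an analysis of the fibres over semisimple representations—is the technical heart. I would construct the reflection functors by adapting the Bernstein–Gelfand–Ponomarev functors to the deformed relations: for each vertex $i$ with $\lambda_i\neq0$ one gets an equivalence $\mathsf{mod}\,\Pi^{\boldsymbol{\lambda}}(\boldsymbol{p})\overset{\sim}{\longrightarrow}\mathsf{mod}\,\Pi^{r_i\boldsymbol{\lambda}}(\boldsymbol{p})$, sending a module of dimension $\gamma$ to one of dimension $r_i\gamma=\gamma-\langle\gamma,\alpha_i\rangle\alpha_i$, preserving simplicity and carrying $\Sigma_\lambda$ onto $\Sigma_{r_i\lambda}$. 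These let me move $\alpha$ freely within the orbit of the subgroup of the Weyl group generated by reflections at nonzero‑$\lambda$ vertices.

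Both directions then follow by comparing $2\,p(\alpha)$ with the dimensions of the loci of non‑simple representations. For \emph{necessity}, a simple module of dimension $\alpha$ forces $\alpha\in R_\lambda$ (total trace, plus $p(\alpha)\ge0$ with connected support and the reflection reduction forcing $\alpha$ to be a positive root, by the root characterisation of Kac \cite{kac1,kac2}); then for any decomposition $\alpha=\beta_1+\cdots+\beta_k$ ($k\ge2$) into elements of $R_\lambda$, the locus of representations whose composition factors have dimensions $\beta_j$ is smaller than $\mu^{-1}(-\lambda)$ by $2\big(p(\alpha)-\sum_j p(\beta_j)\big)$, a count built from the moduli of the factors and the $\mathrm{Ext}^1$‑gluing data governed by $\langle-,-\rangle$. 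Since a simple (hence generic) module avoids every such proper locus, this must be a \emph{strict} drop, i.e.\ $\alpha\in\Sigma_\lambda$. For \emph{sufficiency} the same estimate runs backwards: $\alpha\in\Sigma_\lambda$ makes every decomposition locus of strictly positive codimension (and $\mu^{-1}(-\lambda)\neq\varnothing$ since $\alpha\in R_\lambda$), so a generic point of a top‑dimensional component is simple. If $\alpha$ is a \emph{real} root then $p(\alpha)=0$, $\mu^{-1}(-\lambda)$ is a single orbit; reflecting $\alpha$ down to a coordinate vector $\alpha_i$ (forced to satisfy $\lambda_i=\lambda(\alpha_i)=0$, whose only module is the one‑dimensional simple at $i$) and transporting back by the equivalences shows the simple is unique and is the \emph{only} module of dimension $\alpha$. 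If $\alpha$ is \emph{imaginary} then $p(\alpha)\ge1$, the moduli space has dimension $2\,p(\alpha)\ge2$, and its generic points supply infinitely many non‑isomorphic simples.

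I expect the main obstacle to be the interlocking pair consisting of the flatness/complete‑intersection property of $\mu$ and the precise codimension bounds on the decomposition strata: together they convert the purely combinatorial definition of $\Sigma_\lambda$ into the geometric fact that non‑simple representations are negligible, and getting the $\mathrm{Ext}^1$‑gluing count to yield exactly the defect $p(\alpha)-\sum_j p(\beta_j)$ is delicate. The second sensitive point is the bookkeeping at vertices with $\lambda_i=0$, where the reflection functor degenerates; there one must either argue directly or arrange the reflection sequence to use only vertices with $\lambda_i\neq0$, which is possible precisely because $\alpha\in\Sigma_\lambda$ controls how $\alpha$ sits relative to the walls $\lambda_i=0$.
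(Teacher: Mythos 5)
First, a point of comparison: the paper does not prove this statement at all. It is imported verbatim as a theorem of Crawley--Boevey (citing \cite{CB1,PP2}), consistent with the paper's stated policy of containing no new mathematical theorems. So there is no internal proof to measure you against; what can be assessed is whether your sketch is a viable reconstruction of the cited proof. On that score the architecture is right and is essentially Crawley--Boevey's: realize $\Pi^{\boldsymbol{\lambda}}(\boldsymbol{p})$-modules as the fibre $\mu^{-1}(-\boldsymbol{\lambda})$ of the moment map on representations of the double quiver, get $\lambda(\alpha)=0$ from the total trace, use the lower bound $\dim G^{\mspace{1mu}\C}-1+2p(\alpha)$ on components from Krull plus smoothness of $\mu$ at simple points, stratify by representation type, and use reflection functors at vertices with $\lambda_\sigma\neq0$ to reduce real roots to coordinate vectors (where uniqueness is immediate) and to normalize the imaginary case. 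The treatment of the real and imaginary cases at the end is correct given the earlier claims.

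The gaps are in the two places you yourself flag, and they are genuine rather than routine. (1) \emph{Necessity.} The inference ``a simple module avoids every decomposition locus, hence the drop is strict'' does not follow: the simple locus is open in \emph{its} irreducible component, but a decomposition stratum of dimension $\geq \dim G^{\mspace{1mu}\C}-1+2p(\alpha)$ could sit in a \emph{different} component of $\mu^{-1}(-\boldsymbol{\lambda})$ without meeting any simple point, so no contradiction arises from openness alone. Crawley--Boevey's actual argument for $p(\alpha)>\sum_j p(\beta_j)$ goes through the homological formula $\dim\mathrm{Ext}^1(X,Y)=\dim\mathrm{Hom}(X,Y)+\dim\mathrm{Hom}(Y,X)-(\,[X],[Y]\,)$ and a degeneration-to-the-semisimplification argument; you need that input, not just the stratification. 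Relatedly, your codimension ``$2\big(p(\alpha)-\sum_j p(\beta_j)\big)$'' is the moduli-space count; at the level of the fibre the strata carry different stabilizers and the bookkeeping changes the constant. (2) \emph{Sufficiency for imaginary roots.} Once $\alpha\in\Sigma_\lambda$ is reflected into the fundamental region, the dimension count only shows that \emph{if} a top-dimensional component exists its generic point is simple; nonemptiness of $\mu^{-1}(-\boldsymbol{\lambda})_\alpha$ and the existence of a component of exactly the expected dimension containing no decomposition stratum require the separate inductive construction (lifting from the undeformed algebra / rank-one perturbations) that constitutes the bulk of \cite{PP2}. As a roadmap to the literature proof your proposal is accurate; as a standalone proof it leaves precisely these two load-bearing steps unproven.
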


\subsection{From $\mathscr{Q}^{\mspace{2mu}\C}$ to the Fuchsian ODE}

To complete the correspondence between points in the Higgs branch of
$\mathscr{Q}^{\,\C}$ and irreducible Fuchsian ODEs with the numerical invariant
$\alpha\in \Delta(\boldsymbol{p})^+$ we have to show how one
recovers the explicit differential equation from the data $(\mathscr{Q}^{\,\C},a\in\ca)$.
We label the arrows in $Q(\boldsymbol{p})$ by the same symbol $(i,\ell)$ as their source node
and use the notation $E_{i,\ell}$, $X_{i,\ell}$, and $X^*_{i,\ell}$ for 
$X_{\bullet_{i,\ell}}$, $X_{v_{i,\ell}}$, and $X_{v^\ast_{i,\ell}}$ respectively (the notation is consistent since we are identifying
$X_{\bullet_{i,\ell}}$ with the space $E_{i,\ell}$ in the filtration of the fiber $E_{z_i}$ of $E$).
As always in this paper, the points $\{z_1,\cdots\!,z_s\}\subset \mathbb{P}^1$ are fixed once and for all.
\medskip

Let $\{X_{i,\ell},X^*_{i,\ell}\}$ be a configuration 
of the complexified model $\mathscr{Q}^{\,\C}$ which represents the fully Higgsed vacuum $a\in\ca$.
The configuration is a simple module $X$ of the deformed preprojective algebra $\Pi^{\boldsymbol{\lambda}}(\boldsymbol{p})$.
In a simple module all maps $X_{i,\ell}$ are injective and all maps $X^*_{i,\ell}$ are surjective. 
From such a configuration we can reconstruct a Fuchsian ODE by setting
\be\label{jjjuqqqz12z}
A_i=X_{i,1}X_{i,1}^*+w_{i,1} 
\ee
for the residue matrix at $z_i$.
The relation at the node $\star$ gives the Fuchsian relation $\sum_i A_i=0$,
while the relations at the nodes on the $i$-branch (together with the fact that the maps are injective/surjective)
imply that they satisfy $P_i(A_i)=0$, i.e.\! that they belong to the proper conjugacy class $C_i$.
\medskip

Next we show the converse: starting from an irreducible
 ``unitary'' Fuchsian ODE
 with get a point in $\ca$ fixed by the anti-holomorphic involution $\iota$,
 i.e.\! a genuine Higgs vacuum of a physical ($\equiv$ unitary) SQM model.
  
We can rewrite \eqref{jjjuqqqz12z} as the equality of $n\times n$ matrices 
\be
X_{i,1}X_{i,1}^*=A_i-w_{i,1},
\ee
where in the unitary case the matrix $A_i$ is Hermitian.
In our chosen order \eqref{iiii876123} the Hermitian matrix on the \textsc{rhs} has a non-negative spectrum
and rank $N_{i,1}$.
Hence we can find a $n\times N_{i,1}$  matrix $X_{i,1}$ such that $X_{i,1}X_{i,1}^\dagger =A_i-w_{i,1}$
as required for a physical (unitary) vacuum.
Then we have the equality of $N_{i,1}\times N_{i,1}$ matrices
\be\label{iunnnbcv}
X_{i,2}X^*_{i,2}= X_{i,1}^\dagger X_{i,1}-(w_{i,2}-w_{i,1}).
\ee
Again the \textsc{rhs} is a Hermitian matrix of rank $N_{i,2}$ with non-negative spectrum by
eq.\eqref{iiii876123}. We can find a $N_{i,1}\times N_{i,2}$ matrix $X_{i,2}$
such that $X_{i,2}X_{i,2}^\dagger$ is equal to the \textsc{rhs} of \eqref{iunnnbcv}.
Proceeding by induction on $\ell$, we see that we can choose the $G^{\C}$ gauge so that
$X^*_{i,\ell}=X_{i,\ell}^\dagger$, i.e.\! the module is the field configuration of the unitary model $\mathscr{Q}$. 

\begin{rem}\label{rmm4} By a complex gauge rotation we can always set the arrows of
the $X\in\mathsf{mod}\,\Pi^{\boldsymbol{\lambda}}(\boldsymbol{p})$ which describes
a Fuchsian system $(\co^{\,n},\nabla)$ to be the linear maps
which define the parabolic structure on $E$ in eq.\eqref{filtration}, that is,
 \be
\begin{aligned}
& X_{i,\ell}\colon E_{i,\ell}\to E_{i,\ell-1}\ \text{inclusion},\\
& X_{i,\ell}^*=(A_i-w_{i,\ell})\big|_{E_i,\ell-1}\colon E_{i,\ell-1}\to E_{i,\ell}.
\end{aligned}\label{defgauge}
\ee
\end{rem}

\subsection{Reflection functors $\equiv$ Seiberg dualities}\label{s:sseibd}

The fundamental role played by the roots of the Kac-Moody algebra $\mathsf{L}(\boldsymbol{p})$
suggests that the Weyl reflections should be
``symmetries'' of the problem. In facts only a subgroup of the Weyl group
is relevant for our present purposes. 
This is a fundamental aspect of the story which requires some words of  explanation. The Weyl reflections of the Kac-Moody root lattice
correspond physically to Seiberg dualities for the quiver supersymmetric quantum mechanics on the
BPS particle world-line, while from the viewpoint of quivers with superpotentials
they correspond to \emph{reflection functors} between the representation categories of the
relevant quivers with superpotential. These reflection functors are 
explicitly known in full generality \cite{PP4}: however in general they are quite intricate.
Even worse: when a general reflection functor is applied to the representations of a quiver which is describes
a Fuchsian ODE according to our dictionary, it may produce the representations of a quiver with no intepretation in terms of ODEs. 
This representation-theoretical phenomenon reflects the physical fact that a
general Seiberg duality will not only change the ranks of the gauge groups at the nodes
of the quiver but will also add ``mesonic'' degrees of freedom and superpotential
interactions. While the representation-theoretic interpretation of these additional couplings is understood \cite{PP4},
they have no known interpretation in terms of the monodromy representation a Fuchsian ODE.
In this paper we are interested only in dualities which map a 1d SUSY model associated to a 
Fuchsian ODE to a different 1d SUSY model also associated to a Fuchsian ODE. The idea is that
a duality between two physical systems both of which describe Fuchsian ODEs
may be seen as a duality between the differential equations themselves, and
may be used to replace a difficult ODE with a simpler one. Only a subgroup
of the Seiberg dualities has this interpretation of being ``dualities between Fuchsian ODEs'',
and these are the only ones relevant for the present paper. Stated more crudely:
we are only interested in Seiberg dualities which may \emph{simplify} our differential equations,
while the ones which make the equation even nastier or replace it with some
cumbersome still-to-be-understood mathematical entity are of no use for us.
The same logic motivates our restriction to $c_1(E)=0$ (the ``pure monopole'' case):
while there is every reason to believe that the present story has a natural generalization to
$c_1(E)\neq0$ (and Crawley-Boevey did a good job in this direction \cite{CB2,CB3,CB4,CB5,CB6}),
here we focus on combinatoric algorithms which may \emph{explicitly}
simplify (and solve) the ODEs, and when $c_1(E)\neq0$ the would-be
simplifications are not powerful enough to lead to
effective methods for actual computations.  

The Seiberg dualities which preserve the class of SUSY theories with a
simple relation to Fuchsian differential equations are called
\emph{admissible dualities} (or \emph{admissible reflections}).

We say that the reflection at the simple root $\alpha_\sigma$ is \emph{admissible} iff
$\lambda_\sigma\neq0$. We consider only elements of the Weyl group given by
a product of admissible simple reflections.
\medskip

The admissible Weyl reflections $s_\sigma\colon \Delta(\boldsymbol{p})\to \Delta(\boldsymbol{p})$
at a simple root $\alpha_\sigma$ with $\lambda_\sigma\neq0$
 correspond to \emph{reflection functors} \cite{PP2}
 \begin{equation}
\mathfrak{s}_\sigma\colon \mathsf{mod}\,\Pi^{\boldsymbol{\lambda}}(\boldsymbol{p})\to \mathsf{mod}\,\Pi^{\mspace{1.5mu}r_\sigma(\boldsymbol{\lambda})}(\boldsymbol{p}),
 \end{equation}
between the Abelian category of modules of the algebra $\Pi^{\boldsymbol{\lambda}}(\boldsymbol{p})$ defined
by the relations \eqref{0009zzam} with FI couplings $\boldsymbol{\lambda}\equiv (\lambda_\tau)$ and the 
category of modules of the algebra $\Pi^{\mspace{1.5mu}r_\sigma(\boldsymbol{\tau})}(\boldsymbol{p})$
with the reflected FI
couplings $r_\sigma(\boldsymbol{\lambda})\equiv (r_\sigma(\boldsymbol{\lambda})_\tau)$. 
The reflection functors $\mathfrak{s}_\sigma$, and their compositions $\mathfrak{s}_{\sigma_1}\mathfrak{s}_{\sigma_2}\cdots\mathfrak{s}_{\sigma_\ell}$,
are \emph{equivalences} between the ($\C$-linear, Krull-Schmidt, Hom-finite) Abelian categories
$\mathsf{mod}\,\Pi^{\boldsymbol{\lambda}}(\boldsymbol{p})$ and  $\mathsf{mod}\,\Pi^{r_\sigma(\boldsymbol{\lambda})}(\boldsymbol{p})$.  

The reflection functor at a simple root $\alpha_\sigma$ with
$\lambda_\sigma\neq0$ acts on
the dimension vector $\boldsymbol{N}\equiv (N_\tau)$
and the FI coupling $\boldsymbol{\lambda}\equiv (\lambda_\tau)$ as
 \begin{gather}
\boldsymbol{N}\leadsto s_\sigma(\boldsymbol{N})=\boldsymbol{N}-\langle \boldsymbol{N},\alpha_\sigma\rangle\,\alpha_\sigma,\\
\lambda_\tau\leadsto r_\sigma(\lambda_\tau)=\lambda_\tau- \langle \alpha_\tau,\alpha_\sigma\rangle\,\lambda_\sigma.\label{seitransf}
 \end{gather}
 
Next we describe how the reflection functor at $\alpha_\sigma$ acts on the representation $\boldsymbol{X}\in\mathsf{mod}\,\Pi^{\boldsymbol{\lambda}}(\boldsymbol{p})$
given by a vector space $X_\tau$
for each node $\tau$ and a linear map $X_v\colon X_{s(v)}\to X_{t(v)}$
for each arrow in the double quiver $\overline{Q}(\boldsymbol{p})$. 
Since the orientation of $Q(\boldsymbol{p})$ is immaterial, we redefine it
so that $\sigma$ is a \emph{sink} node ($\equiv$ all incident arrows are ingoing).
%This amount to replacing $\mathscr{Q}\equiv\mathscr{Q}_\varnothing$ with some
%$\mathscr{Q}_I$ with the same complexification.
We define
 \begin{equation}
X_\oplus= \bigoplus_{v\colon t(v)=\sigma} X_{s(v)},\ \qquad \left\{
\begin{aligned}
&\mu_v\colon X_{s(v)}\to X_\oplus &&\text{canonical inclusion}\\
&\pi_v\colon X_{\oplus}\to X_{s(v)} &&\text{canonical projection,}
\end{aligned}\right.
 \end{equation}
and
 \begin{equation}
\mu=\sum_{v\colon t(v)=\sigma} \mu_v\, X_{v^*}, \qquad \pi =-\frac{1}{\lambda_\sigma}\sum_{v\colon t(v)=\sigma} X_v\, \pi_v\qquad\Rightarrow\qquad
\pi\mu=\boldsymbol{1}_{X_\sigma}.
 \end{equation}
The reflection functor at simple root $\alpha_\sigma$ sends the representation $\boldsymbol{X}\in\mathsf{mod}\,\Pi^{\boldsymbol{\lambda}}(\boldsymbol{p})$
 to the representation $\boldsymbol{X^\prime}\in \mathsf{mod}\,\Pi^{r_\sigma(\boldsymbol{\lambda})}(\boldsymbol{p})$ 
 with vector spaces
 \begin{equation}
 X^\prime_\tau= \begin{cases} \mathsf{Im}(\boldsymbol{1}-\mu\pi) & \tau=\sigma\\
 X_\tau & \tau\neq \sigma,
 \end{cases}
 \end{equation}
 and linear maps
 \begin{align}
 X_v^\prime &= \begin{cases} X_v & t(v)\neq \sigma\\
 \lambda_i(1-\mu\pi)\mu_v\colon X^\prime_{s(v)}\to X^\prime_\sigma & \text{otherwise}
 \end{cases}\\
  X_{v^\ast}^{\prime} &= \begin{cases} X_{v^\ast} & t(v)\neq \sigma\\
\pi_v\big|_{X^\prime_\sigma}\colon X^\prime_\sigma\to X^\prime_{s(v)} & \text{otherwise.}
 \end{cases}
 \end{align}
One checks that $\mathfrak{s}_\sigma$ maps simples into simples, i.e.\! it induces an isomorphism of fully Higgsed (complexified) branches
 \begin{equation}
\ca\simeq \ca^\prime
 \end{equation}

We consider two kinds of reflections:
\begin{itemize}
\item reflections at a node $\bullet_{i,\ell}$ ($\ell\geq1$) along the $i$-th branch. We see these reflections as
IR dualities which replace one state of the $i$-th AD matter system with another one.
These reflections are ``internal dualities'' of the $i$-th matter system.
\item the reflection at the node $\star$. This is the \emph{central Seiberg duality}.
\end{itemize}

%\begin{rem} Note that an admissible Seiberg duality at the $\sigma$-th node
%maps the relevant model $\mathscr{Q}_I$ into the model $\mathscr{Q}_{I^\prime}$
%(with the same complexification) where $I^\prime$ is obtained from $I$ by flipping the orientation of all
%arrows incident on the $\sigma$-th node.
%\end{rem}

\subsubsection{Internal IR dualities}
\label{s:internal}

Internal IR dualities of the 
$i$-th matter AD system do not change the ODE but only the way we represent
it in terms of BPS dyons and modules of $\Pi^{\boldsymbol{\lambda}}(\boldsymbol{p})$.
A general internal duality is the composition of a chain of \emph{elementary} ones.

We distinguish 3 kinds of elementary internal duality:
\begin{itemize}
\item[\bf(a)] internal Seiberg duality: the reflection at a simple root $\alpha_{\bullet_{i,\ell}}$
with $\lambda_{\bullet_{i,\ell}}\neq0$ ($\ell\geq1$) along the $i$-th branch of $Q(\boldsymbol{p})$.
See \S.\,\ref{s:uu77776c|} for more properties;
\item[\bf(b1)]adding or deleting a string of zeros at the end of the $i$-th branch
$$
\begin{gathered}
\xymatrix{\ar@{..}[dd] & \ar@{..}[ddl]\\
& \ar@{..}[dl]\\
 N_\star &\ar[l] \cdots &\ar[l] N_{i,p_i-1}}
 \end{gathered}\ \boldsymbol{\leftrightsquigarrow}\ 
\begin{gathered}
\xymatrix{\ar@{..}[dd] & \ar@{..}[ddl]\\
& \ar@{..}[dl]\\
 N_\star &\ar[l] \cdots &\ar[l] N_{i,p_i-1} & \ar[l] 0  &\ar[l]\cdots\cdots &\ar[l] 0}
 \end{gathered}
$$  
\item[\bf(b2)] adding (resp.\! deleting) a new branch to the quiver $Q(\boldsymbol{p})$ with $N_{s+1,\ell}=0$ for all $\ell\geq1$.
This is just the special case of {\bf(b1)} for $p_{s+1}=1$. Up to the twist in \textbf{Remarks \ref{ttwist}, \ref{trivialtwist}} this duality has the effect
of adding a new special point $z_{s+1}$ in the ODE with residue matrix $A_{s+1}=0$.
In other words, we introduce (resp.\! forget) an apparent singularity at $z_{s+1}$ with no effect on the ODE or its solutions.
\end{itemize}

By definition the internal dualities do not change the order $n$ of the ODE. {\bf(a)}, {\bf(b1)}
also preserve the number $s$ of marked points $z_i$ (which may be actual singularities or just apparent ones), while {\bf(b2)} makes $s\to s\pm1$ by adding or deleting an apparent singularity.
\medskip

The main use of internal dualities is to set the AD matter systems in their simpler form,
that is, to give the most economical description of our Fuchsian  ODE. To pursue that goal,
whenever possible, we cancel
all nodes $\tau$ of $Q(\boldsymbol{p})$ with $N_\tau=0$ (\emph{zero-cancellation}).

\subsubsection{Internal Seiberg dualities}
\label{s:uu77776c|}

Recall that we did not require the polynomial $P_i(w)$ to be the minimal one for $A_i$, while we stated
that we can order the linear factors in any way we please (although we chose a particular order for convenience).

\begin{lem} The reflection ($\equiv$ Seiberg duality) at node $\bullet_{i,\ell}$ (with
$\lambda_{i,\ell}\equiv \lambda_{\bullet_{i,\ell}}\neq0$) has the effect
of interchanging the order of the two factors $(w-w_{i,\ell})$ and $(w-w_{i,\ell+1})$ of $P_i(w)$.
\end{lem}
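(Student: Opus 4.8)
The plan is to compute the action of the reflection functor $\mathfrak{s}_{\bullet_{i,\ell}}$ on every piece of data attached to a Fuchsian module --- the FI couplings, the dimension vector, and the maps themselves --- and to check that in each case the result is exactly what one obtains by transposing the factors $(w-w_{i,\ell})$ and $(w-w_{i,\ell+1})$ in $P_i(w)$ while keeping the residue matrix $A_i$ fixed. The starting point is the dictionary between FI couplings and eigenvalue gaps: writing $\lambda(Y)=\sum_\sigma\lambda_\sigma\dim Y_\sigma$ and comparing with the explicit formula for $\lambda$ one reads off $\lambda_{i,\ell}=w_{i,\ell+1}-w_{i,\ell}$ for $\ell\ge1$ and $\lambda_\star=\sum_i w_{i,1}$. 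Note that the admissibility hypothesis $\lambda_{i,\ell}\neq0$ is \emph{precisely} the condition $w_{i,\ell}\neq w_{i,\ell+1}$, i.e.\ the transposition is nontrivial.

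First I would verify the FI part. By \eqref{seitransf} the reflection at $\bullet_{i,\ell}$ sends $\lambda_\tau\mapsto\lambda_\tau-\langle\alpha_\tau,\alpha_{\bullet_{i,\ell}}\rangle\lambda_{i,\ell}$; since $\mathsf{L}(\boldsymbol{p})$ is simply laced and $\bullet_{i,\ell}$ is adjacent only to its neighbours on the $i$-th branch (one of which is $\star$ when $\ell=1$), the only couplings that move are $\lambda_{i,\ell}\mapsto-\lambda_{i,\ell}$ together with a shift of $+\lambda_{i,\ell}$ on each neighbouring coupling. Translating through the dictionary, a direct check shows these are exactly the new gaps $w'_{i,k+1}-w'_{i,k}$ for the tuple obtained from $(\dots,w_{i,\ell},w_{i,\ell+1},\dots)$ by interchanging the two middle entries; the case $\ell=1$ works as well, the shift of $\lambda_\star=\sum_j w_{j,1}$ accounting for the replacement of $w_{i,1}$ by $w_{i,2}$. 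Thus at the level of eigenvalues the reflection \emph{is} the transposition.

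Next the dimension vector. From $s_\sigma(\boldsymbol{N})=\boldsymbol{N}-\langle\boldsymbol{N},\alpha_\sigma\rangle\alpha_\sigma$ only $N_{i,\ell}$ moves, to $N'_{i,\ell}=N_{i,\ell-1}+N_{i,\ell+1}-N_{i,\ell}$ (with $N_{i,0}\equiv N_\star$ and $N_{i,p_i}\equiv0$). I must match this with the dimension $\dim(A_i-w_{i,\ell+1})E_{i,\ell-1}$ of the level-$\ell$ space in the reordered filtration. Setting $M=A_i$, $W=E_{i,\ell-1}$ (an $M$-invariant subspace, common to both orderings), $a=w_{i,\ell}$ and $b=w_{i,\ell+1}$, this reduces to the identity
\[
\dim(M-a)W+\dim(M-b)W=\dim W+\dim(M-a)(M-b)W .
\]
I would prove it by rank--nullity: the one subtle input is that $\ker(M-a)|_W\subseteq(M-b)W$, which holds because for $x\in W$ with $Mx=ax$ one has $x=(M-b)\bigl(\tfrac{1}{a-b}x\bigr)$, using $a\neq b$ (again the admissibility hypothesis). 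This containment forces $\dim\bigl(\ker(M-a)\cap(M-b)W\bigr)=\dim\ker(M-a)|_W$, and the identity follows. Hence $N'_{i,\ell}$ is exactly the level-$\ell$ dimension for the swapped polynomial, while the remaining $N_{i,k}$ ($k\neq\ell$) are unchanged because the relevant products of factors are unaffected.

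Finally I would upgrade this numerical match to a statement about the module. Since $A_i=X_{i,1}X^*_{i,1}+w_{i,1}$ and, for $\ell\ge2$, the reflection touches neither $X_{i,1},X^*_{i,1}$ nor $w_{i,1}$, the matrix $A_i$ (and every $A_j$, $j\neq i$) is literally unchanged, so the reflected simple module describes the same connection $(\co^{\,n},\nabla)$ with $P_i$ merely refactored --- and the two steps above identify that refactoring as the transposition. For $\ell=1$ I would feed the explicit reflection-functor formulas for $X'_{i,1},(X'_{i,1})^*$ into $A'_i=X'_{i,1}(X'_{i,1})^*+w_{i,2}$ and verify $A'_i=A_i$; in the gauge of Remark~\ref{rmm4}, where the arrows become the inclusions and the operators $(A_i-w_{i,k})$, this is a short computation. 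The main obstacle is exactly this map-level check: for an imaginary root $\alpha$ the fully Higgsed branch is positive-dimensional, so matching $(\boldsymbol{\lambda},\boldsymbol{N})$ does not by itself pin down the module. There one must invoke that $\mathfrak{s}_{\bullet_{i,\ell}}$ is an equivalence $\mathsf{mod}\,\Pi^{\boldsymbol{\lambda}}(\boldsymbol{p})\to\mathsf{mod}\,\Pi^{r_\sigma(\boldsymbol{\lambda})}(\boldsymbol{p})$ taking simples to simples and compatible with the reconstruction $X\mapsto(\co^{\,n},\nabla)$, so that it must carry the ODE to itself with the factorization reordered; making this compatibility fully precise, especially the $\ell=1$ bookkeeping, is the part I expect to require the most care.
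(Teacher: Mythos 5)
Your proposal is correct and covers the same two checks as the paper's proof (which lives in the Appendix, as the proof of the more general statement that different choices of $P_i$ and of factor orderings are related by inner IR dualities), but it differs in one step and goes beyond the paper in another. For the dimension-vector check the paper counts Jordan blocks: writing $n(w)$ for the number of Jordan blocks of $A_i|_{E_{i,j-1}}$ with eigenvalue $w$, it computes $N_{i,j}=\dim E_{i,j-1}-n(w_{i,j})$, $N_{i,j+1}=\dim E_{i,j-1}-n(w_{i,j})-n(w_{i,j+1})$ and likewise for the swapped order, and reads off $N'_{i,j}=N_{i,j-1}+N_{i,j+1}-N_{i,j}$. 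Your rank--nullity identity
$\dim(M-a)W+\dim(M-b)W=\dim W+\dim(M-a)(M-b)W$,
resting on the containment $\ker(M-a)|_W\subseteq(M-b)W$ for $a\neq b$, proves exactly the same thing without invoking the Jordan form; both arguments use the admissibility hypothesis $w_{i,\ell}\neq w_{i,\ell+1}$ at the same spot, and your version is marginally more self-contained. The FI-coupling check is identical in both. Where you genuinely diverge is the third step: the paper stops at matching the numerical data $(\boldsymbol{N},\boldsymbol{\lambda})$ and concludes, whereas you correctly observe that for an imaginary root this does not determine the module, so one should also verify that the reflection functor carries the module of $(\co^{\,n},\nabla)$ to the module of the \emph{same} connection with the refactored $P_i$. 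Your observation that for $\ell\geq2$ the functor does not touch $X_{i,1},X^*_{i,1}$, so $A_i=X_{i,1}X^*_{i,1}+w_{i,1}$ is literally unchanged, disposes of most cases cleanly; the remaining $\ell=1$ computation, which you flag but do not carry out, is the only piece left open --- and it is a piece the paper's own proof also leaves implicit (for real roots it is covered by the uniqueness of the simple module of $\Pi^{\boldsymbol{\lambda}}(\boldsymbol{p})$ with given dimension vector). So your write-up is, if anything, more careful than the printed argument on this point.
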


Here we check the statement at the level of the FI couplings. For a complete proof  see \textsc{Appendix}. The new FI couplings are
 \begin{equation}
 \begin{aligned}
 \lambda^\prime_{i,\ell-1}&= \lambda_{i,\ell-1}+\lambda_{i,\ell}= w_{i,\ell+1}-w_{i,\ell-1}\\
 \lambda^\prime_{i,\ell}&= -\lambda_{i,\ell}= w_{i,\ell}-w_{i,\ell+1}\\
 \lambda^\prime_{i,\ell+1}&= \lambda_{i,\ell+1}+\lambda_{i,\ell}= w_{i,\ell+2}-w_{i,\ell}\\
 \lambda^\prime_{i,j}&= \lambda_{i,j}\qquad \text{for }j\neq \ell-1,\ell,\ell+1,
 \end{aligned}
 \end{equation}

\begin{corl}\label{corrr}
Suppose $Y\in\mathsf{mod}\,\Pi^{\boldsymbol{\lambda}}(\boldsymbol{p})$ 
is a simple module of dimension vector
  $\alpha=\sum N_\sigma\alpha_\sigma\in \Sigma_\lambda$ with $N_{i,\ell+1}=N_{i,\ell}$
 \emph{(so that the arrow $Y_{i,\ell+1}\to Y_{i,\ell}$ is an isomorphism).}
By a chain of admissible reflections and zero-cancellations we can replace the $i$-th AD system of type $D_{p_i}$
with the AD system of type $D_{p_i-1}$ where the two nodes with equal dimension ($\equiv$ charge) are
contracted into a single node with a FI coupling which is the sum of the FI couplings of the
contracted nodes. 
\end{corl}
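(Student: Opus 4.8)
The strategy is to realize the removal of the redundant linear factor of $P_i(w)$ entirely through the elementary internal dualities of \S.\,\ref{s:internal}, reducing everything to the preceding \textbf{Lemma}. The hypothesis $N_{i,\ell+1}=N_{i,\ell}$ says exactly that the injective arrow $Y_{i,\ell+1}\to Y_{i,\ell}$ is an isomorphism, i.e.\! that the factor $(w-w_{i,\ell+1})$ sitting at position $\ell+1$ of $P_i(w)$ is redundant in the sense of \eqref{filtration}. Since the \textbf{Lemma} shows that the admissible reflection at $\bullet_{i,k}$ merely transposes the two factors at positions $k$ and $k+1$, the plan is: (1) use a chain of such reflections to slide the redundant factor from position $\ell+1$ outward to the tail of the $i$-th branch; (2) at the tail, the terminal reflection collapses the last node to dimension zero; (3) a zero-cancellation of type \textbf{(b1)} deletes that node, turning the $D_{p_i}$ branch into a $D_{p_i-1}$ one.

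For step (1) I would reflect successively at $\bullet_{i,\ell+1},\bullet_{i,\ell+2},\dots,\bullet_{i,p_i-1}$. Using the reflection law $s_\sigma(\boldsymbol{N})=\boldsymbol{N}-\langle\boldsymbol{N},\alpha_\sigma\rangle\,\alpha_\sigma$ on an interior node one computes
\be
N'_{i,k}=N_{i,k-1}+N_{i,k+1}-N_{i,k},
\ee
so that reflecting at $\bullet_{i,k}$ when $N_{i,k}=N_{i,k-1}$ (equal pair at $(k-1,k)$) produces $N'_{i,k}=N_{i,k+1}$, i.e.\! it carries the equal pair one step outward to $(k,k+1)$, exactly mirroring the factor-transposition of the \textbf{Lemma}. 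Inducting along the branch with the convention $N_{i,p_i}=0$ --- forced by $E_{i,p_i}=P_i(A_i)E_{z_i}=0$ in the filtration \eqref{filtration} --- the same formula at the one-neighbour terminal node $\bullet_{i,p_i-1}$ gives $N'_{i,p_i-1}=N_{i,p_i-2}-N_{i,p_i-1}=0$. All these reflections are admissible: by the \textbf{Lemma} every intermediate FI coupling is the difference of two distinct generic eigenvalues $w_{i,\cdot}$, hence nonzero.

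It remains to record the three outputs. Simplicity is automatic: each $\mathfrak{s}_\sigma$ is a category equivalence carrying simples to simples, and the type-\textbf{(b1)} zero-cancellation merely forgets a zero space, so the image is again a simple module, now over $\Pi^{\boldsymbol{\lambda}'}(\boldsymbol{p}')$ with $p_i$ lowered by one. For the FI couplings I would track \eqref{seitransf} along the chain, or equivalently use the dictionary $\lambda_{i,k}=w_{i,k+1}-w_{i,k}$ already read off in the \textbf{Lemma}: deleting the factor $w_{i,\ell+1}$ leaves $w_{i,\ell}$ and $w_{i,\ell+2}$ adjacent, so the surviving contracted node carries
\be
w_{i,\ell+2}-w_{i,\ell}=(w_{i,\ell+1}-w_{i,\ell})+(w_{i,\ell+2}-w_{i,\ell+1})=\lambda_{i,\ell}+\lambda_{i,\ell+1},
\ee
which is precisely the asserted sum of the FI couplings of the two contracted nodes.

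The main obstacle is not any single computation but keeping the bookkeeping honest: one must verify that the ``equal adjacent pair'' is genuinely preserved and transported by each reflection (this is where the factor-transposition content of the \textbf{Lemma}, proven in the \textsc{Appendix}, is essential) and that admissibility is never lost en route. The delicate input is therefore the \textbf{Lemma} itself; granting it, the \textbf{Corollary} is a finite induction along the branch together with the two linear identities displayed above.
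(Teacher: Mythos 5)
Your proof is correct and follows essentially the same route as the paper's: the Appendix establishes exactly the factor-transposition Lemma you invoke and obtains the Corollary by sliding the redundant linear factor of $P_i(w)$ to the tail of the branch through admissible reflections, reflecting the terminal node to dimension zero, and zero-cancelling. Your FI-coupling bookkeeping, $\lambda_{i,\ell}+\lambda_{i,\ell+1}=w_{i,\ell+2}-w_{i,\ell}$, matches the paper's, and the only loose point (adjacent equal eigenvalues, where the swap is the identity rather than an admissible reflection) is handled equally briefly in the paper itself.
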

For the proof see \textsc{Appendix}.
\medskip

We conclude that the SQM $\mathscr{Q}$ is independent of the choice of $P_i(w)$ up to internal IR dualities.
We can always reduce to the minimal polynomial $P_i(w)_\text{min}$ of $A_i$.
From now on $p_i$ stands for the degree of the minimal polynomial.
In this case 
 the dimensions along the $i$-th branch of $Q(\boldsymbol{p})$ are positive and strictly decreasing
\be\label{min1}
n>N_{i,1}>N_{i,2}>\cdots > N_{i,p_i-1} >0.
\ee
 
 It is convenient to rearrange the order of the factors in the minimal polynomial so that
 the dimensions $N_{i,\ell}$ take the minimal possible value. This is the condition
 \be\label{min2}
 0\geq \langle \alpha_{\bullet_{i,\ell}},\mathbf{dim}\,Y\rangle \equiv 2N_{i,\ell}-N_{i,\ell-1}-N_{i,\ell+1}\quad \text{for }\ell=1,2,\cdots,p_i-1
 \ee
 (with $N_{i,0}\equiv n$). When eqs.\eqref{min1},\eqref{min2} hold we say that the $i$-th matter system (branch) is
 in the \emph{minimal form} i.e.\! the conjugacy class $c_i$ of $A_i$ is described in the most economic way.
 \medskip
 
 A root $\beta\equiv \sum_\sigma N_\sigma\mspace{1mu} \alpha_\sigma\in \Delta(\boldsymbol{p})^+$
 with all branches set in minimal form will be called \emph{AD-minimal}.
 By construction an AD-minimal $\beta$ is \emph{sincere} (i.e.\! its support is the full quiver)
 and satisfies eqs.\eqref{min1},\eqref{min2} for all $i$.
\medskip

Internal dualities do not modify the special points $z_i$ nor the residue matrices $A_i$ except for forgetting or adding points $z_j$ with $A_j\equiv 0$. Hence internal dualities leave the Fuchsian ODE and its solutions invariant.

\subsubsection{Seiberg duality at the central node $\star$}\label{s:star}

Seiberg duality $\mathfrak{s}_\star$ at $\star$ makes 
\be
N_\star\to N^\prime_\star\equiv \sum_i N_{i,1}-N_\star
\ee
so it changes the order $\equiv N_\star$ of the ODE from $N_\star$ to $N_\star^\prime$. The new root
$s_\star(\beta)$ is not AD-minimal in general, so one has to reduce it to the AD-minimal form by
a suitable sequence of internal Seiberg dualities and zero-cancellations. 
\medskip

At the level of the ODE $\mathfrak{s}_\star$ corresponds to
a transformation called \emph{middle convolution} \cite{katz,convolution,convolution1,convolution2}\!\!\cite{book}.
We review middle convolution in \S.\,\ref{s:solution}.

\subsubsection{Basic ODEs and basic pairs}

The full set of admissible IR dualities acts on the ODEs. In a Seiberg orbit
of irreducible ODEs there is a simplest equation
called the \emph{basic ODE} of the orbit.
The AD-minimal dimension vectors $\beta\equiv\sum_\sigma N_\sigma\mspace{1mu} \alpha_\sigma$
of basic ODEs are characterized by two properties:
\begin{itemize}
\item $\beta\in \Sigma_\lambda$ and sincere;
\item the dimension $\beta$ cannot be further decreased by an \emph{admissible} Seiberg duality.
\end{itemize}
 A pair $(\Lambda,\beta)$ where $\Lambda$ is a star graph and
$\beta$ a sincere positive root is called a \emph{basic pair} iff $(\Lambda,\beta)$ are the
 graph and the AD-minimal dimension vector of
a basic ODE.
\medskip

For \emph{generic} exponents $w_{i,\ell}$ such that $\lambda(\beta)=0$,
 the basic pairs $(\Lambda,\beta)$ are:
\begin{itemize}
\item[(1)] $(A_1,\alpha_1)$ corresponding to the $1^\text{st}$ order ODE.
All indecomposable rigid ODEs are in its duality orbit and $p(\alpha_1)=0$;
\item[(2)] $(\Lambda,\delta)$ with $\Lambda$ an affine Dynkin graph and
$\delta$ the indivisible imaginary root: $p(\delta)=1$;
\item[(3)] $(\Lambda,\beta)$ with $\Lambda$ a Kac-Moody Dynkin graph of negative type
and $\beta\equiv\sum_\sigma N_\sigma\mspace{1mu}\alpha_\sigma$ a sincere element of the fundamental region $\Delta_0$ of the positive roots $\Delta^+$ of
$\Lambda$
\be
\Delta_0\overset{\rm def}{=}\big\{\beta\in \Delta^+,\; \langle \alpha_\sigma,\beta\rangle\leq0\ \ \forall\; \sigma\big\}
\ee  
which is either non-isotropic $q(\beta)<0$ or indivisible $\gcd(N_\sigma)=1$.
\end{itemize}

\section{Action of Seiberg duality on the solutions of ODE}\label{s:solution}

The highly non-trivial action of the central Seiberg duality on the solutions of the Fuchsian equation is given by Katz's middle convolution \cite{katz,convolution,convolution1,convolution2}\!\!\cite{book} as we are going to explain.

\subsection{Review of middle convolution} We quickly review middle convolution.

\subparagraph{Normal form of a Fuchsian ODE.}  Following tradition, we put the Fuchsian system
\eqref{ODE2} in a \emph{normal form} by a trivial twist (cf.\! \textbf{Remark \ref{trivialtwist}}).  First we fix the last regular puncture
$z_s$ at $\infty$ writing the Fuchsian system in the form
\be\label{norm1}
\frac{d}{dz}Y(x)=\sum_{i=1}^{s-1}\frac{A_i}{z-z_i}Y(z),\qquad z_i\in\C,\quad Y(z)=\begin{bmatrix} Y_1(z)\\ \vdots\\ Y_n(z)\end{bmatrix}
\ee
The residue matrix at infinity becomes $A_\infty=-\sum_{i=1}^{s-1}A_i$ by eq.\eqref{Afucchs}.
Then we make the replacements
\be
Y(z)\to Y(z)\prod_{i=1}^{s-1}(z-z_i)^{-w_{i,1}},\qquad A_i\to A_i-w_{i,1},
\ee
so that the normalized $A_i$'s have first eigenvalue zero (\emph{first} in some chosen order).
The first eigenvalue of the new $A_\infty$ is then 
\be\label{norm3}
\lambda_\star\equiv \sum_{i=1}^sw_{i,1}\equiv
\text{the FI coupling at the central node.}
\ee 

\subparagraph{Middle convolution.}
We stress that the Seiberg duality at the central node
$\star$ is admissible if only if the FI coupling $\lambda_\star\neq0$. 
Correspondingly, middle convolution makes sense only in this case,\footnote{\ For the meaning of this condition in the language of ODEs (i.e.\! in terms of properties of the Riemann-Liouville transform) see the original literature on the middle convolution.} and we assume this condition.
\medskip

We define the vector $U(z)$ with $(s-1)n$ components written in terms of $(s-1)$
 blocks $U_j(z)$ of size $n$:
\be\label{kkkjjjjjhg}
U(z)=\begin{bmatrix}U_1(z)\\ \vdots\\ U_{s-1}(z)\end{bmatrix}\qquad
\begin{aligned}\text{where}\quad &U_{i,k}(z)\overset{\rm def}{=}\int_\gamma \frac{Y_k(s)}{s-z_i}(z-s)^{\lambda_\star}\,ds\\
&i=1,2,\cdots\!,s-1,\quad k=1,\cdots\!,n.
\end{aligned}
\ee
For a good choice of the integration contour $\gamma$ \cite{book}
$U(z)$ is a solution to a Fuchsian system of order $(s-1)n$ 
\be\label{pppioza}
\frac{d}{dz}U(x)=\sum_{i=1}^{s-1}\frac{G_i}{z-z_i}U(z)
\ee
where $G_i$ is a $(s-1)n\times (s-1)n$ matrix which we write in terms of $n\times n$ blocks.
The $j,k$ block is 
\be
(G_i)_{jk}= \delta_{ij} \big(A_k+\delta_{ik}\,\lambda_\star\boldsymbol{1}\big),\qquad i,j,k=1,\cdots\!,s-1.
\ee
The ODE \eqref{pppioza} is \emph{highly reducible}. Therefore the entries of the vector $U(x)$ are solutions to {lower-order} \emph{reduced}
equations. Indeed there is a big subspace $V\subset \C^{(s-1)n}$ which is left invariant by all the $G_i$ matrices
\be\label{whatV}
V= \begin{bmatrix}\mathrm{ker}(A_1)\\ \mathrm{ker}(A_2)\\ \vdots\\
\mathrm{ker}(A_{s-1})\end{bmatrix}\oplus \mathrm{ker}(\sum_i G_i)\simeq 
\bigoplus_{i=1}^{s}\mathrm{ker}(A_i) 
\ee
since $\mathrm{ker}(\sum_i G_i)\simeq \mathrm{ker}(A_\infty)$. 
The
dimension of the invariant subspace is
\be
\dim V= sn-\sum_{i=1}^s N_{i,1}\equiv (s-1)n-(\sum_{i=1}^s N_{i,1}-n)\equiv (s-1)N_\star - N_\star^\prime
\ee
We write $P$ for the \emph{constant} $N_\star^\prime\times (s-1)n$ matrix
which yields the canonical projection
\be
P\colon \C^{(s-1)n}\to \C^{(s-1)n}/V\equiv W,\qquad P=\begin{bmatrix}P_1 & P_2 & \cdots & P_{s-1}\end{bmatrix}
\ee
whose blocks $P_j$ are $N^\prime_\star\times n$ matrices. By construction 
\be\label{solution}
Y^\prime(z)\overset{\rm def}{=} P\mspace{1mu}U(z)=\sum_{i=1}^{s-1}\int_\gamma \frac{P_i\mspace{1mu}Y(s)}{s-z_i}\,(z-s)^{\lambda_\star}\,ds
\ee
is a solution to the reduced Fuchsian ODE
of order $n^\prime\equiv N_\star^\prime$ with (possibly apparent) regular singularities 
at the same points $\{z_1,\cdots\!,z_{s-1},\infty\}$ as the original ODE
and new residue matrices $A_i^\prime$
\be\label{whattttA}
A_i^\prime\mspace{1mu} P = P G_i\quad\text{for } i=1,\cdots,s-1,\qquad A_\infty^\prime=-\sum_{j=1}^{s-1}A^\prime_j.
\ee

\subsection{Back to (central) Seiberg duality}
The crucial fact is
\begin{claim} The order $n^\prime$ Fuchsian ODE produced by
middle convolution is the one obtained by Seiberg duality at the central node $\star$.
Therefore, if $Y(z)$ is a solution to the original Fuchsian ODE $(\co^{\,n},\nabla)$,
a solution to the ODE  $(\co^{\,n^\prime},\nabla^\prime)$ obtained from $(\co^{\,n},\nabla)$ by the central Seiberg duality is given explicitly by
eq.\eqref{solution}.
\end{claim}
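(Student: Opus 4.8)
The plan is to reduce the \textbf{Claim} to a single statement: that middle convolution and the central reflection functor $\mathfrak{s}_\star$ of \S\,\ref{s:star} produce the \emph{same} Fuchsian system, i.e.\! the same tuple of residue matrices up to an overall $GL(N_\star^\prime,\C)$ conjugacy. Once this is shown, the second sentence is automatic: eq.\,\eqref{solution} already exhibits $Y^\prime(z)=P\,U(z)$ as a solution of the middle-convolution ODE \eqref{whattttA}, so it is \emph{a fortiori} a solution of the Seiberg-dual ODE. Thus the only content to establish is the identity ``middle convolution $=$ $\mathfrak{s}_\star$'' at the level of residue data.

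First I would make the reflection functor at $\star$ completely explicit on the module $X\in\mathsf{mod}\,\Pi^{\boldsymbol{\lambda}}(\boldsymbol{p})$ attached to the irreducible Fuchsian system $(\co^{\,n},\nabla)$ (Remark \ref{rmm4}). Reorienting $\star$ as a sink, the incident branch spaces are the $E_{i,1}$, so $X_\oplus=\bigoplus_{i=1}^s E_{i,1}$ with the maps $\mu=\sum_i\mu_i X^\ast_{i,1}$ and $\pi=-\lambda_\star^{-1}\sum_i X_{i,1}\pi_i$ satisfying $\pi\mu=\boldsymbol{1}$. Since $\mu\pi$ is then idempotent and $\mu$ is injective, the new central space is $X^\prime_\star=\mathsf{Im}(\boldsymbol{1}-\mu\pi)=\ker\pi$, of dimension $\sum_i N_{i,1}-N_\star=N_\star^\prime$. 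The reflected arrows are $X^\prime_{i,1}=\lambda_\star(\boldsymbol{1}-\mu\pi)\mu_i$ and $(X^\prime_{i,1})^\ast=\pi_i|_{X^\prime_\star}$, and reconstructing the ODE via \eqref{jjjuqqqz12z} gives the reflected residues $A_i^{\mathfrak{s}}=X^\prime_{i,1}(X^\prime_{i,1})^\ast+w^\prime_{i,1}$, with the shifted first exponents $w^\prime_{i,1}$ fixed by the FI reflection \eqref{seitransf}.

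Next I would line this up with the middle-convolution output. In the normal form \eqref{norm1}--\eqref{norm3} the invariant subspace \eqref{whatV} is exactly $V\simeq\bigoplus_{i=1}^s\ker A_i$, with the diagonal block $\ker(\sum_i G_i)\simeq\ker(A_\infty-\lambda_\star)$ playing the role of the $\infty$-branch; hence $W\equiv\C^{(s-1)n}/V$ has dimension $\sum_i N_{i,1}-n=N_\star^\prime=\dim\ker\pi$. The heart of the proof is to produce a natural isomorphism $\Phi\colon W\to X^\prime_\star=\ker\pi$, assembled from the projection blocks $P_i$ of \eqref{solution} and the inclusions $X_{i,1}$, and to check that it intertwines the two reconstructions: the defining relation $A_i^\prime P=P\,G_i$ of \eqref{whattttA} becomes, block by block, $A_i^{\mathfrak{s}}\,\Phi=\Phi\,A_i^\prime$. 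Granting $\Phi$, the two residue tuples $(A_i^\prime)$ and $(A_i^{\mathfrak{s}})$ are conjugate, the marked points $\{z_1,\dots,z_{s-1},\infty\}$ and their local classes $C_i^\prime$ are untouched by both operations, and the two Fuchsian systems coincide; this is the \textbf{Claim}.

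I expect the main obstacle to be precisely the construction and invertibility of $\Phi$, i.e.\! the bookkeeping at $z_s=\infty$. The star quiver treats the $s$ branches symmetrically, whereas middle convolution singles out $z_1,\dots,z_{s-1}$ and hides the $\infty$-branch inside the diagonal kernel $\ker(\sum_i G_i)$; matching $(s-1)$ copies of the full space $E_\star$ (the convolution space $\C^{(s-1)n}$) against the sum of branch spaces $\bigoplus_i E_{i,1}$ (the reflection space $X_\oplus$) requires the $\lambda_\star$-shift of \eqref{norm3} and the injectivity/surjectivity of $X_{i,1},X^\ast_{i,1}$ guaranteed by simplicity. A purely formal shortcut---observing that both operations realize the same root reflection $s_\star$ and the same reflected FI data, and invoking Crawley-Boevey's equivalence---pins down the result only for real roots, since for imaginary (non-rigid) $\alpha$ there are infinitely many simples of the given dimension vector. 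The explicit intertwining is therefore unavoidable; it is exactly the matrix identity worked out by Crawley-Boevey \cite{CB1,CB2,CB3} (and, for the analytic transform, Katz \cite{katz}), and the remaining verification $A_i^\prime P=P\,G_i\Leftrightarrow A_i^{\mathfrak{s}}\Phi=\Phi A_i^\prime$ is a direct, if lengthy, computation.
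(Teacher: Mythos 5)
Your proposal is correct and follows essentially the same route as the paper's proof: identify Katz's invariant subspace $V$ with $\bigoplus_i \C^n/E_{i,1}$ so that $W\simeq X^\prime_\star$, then check that the residues reconstructed from the reflected module via \eqref{jjjuqqqz12z} (in the defining gauge) satisfy the middle-convolution relation \eqref{whattttA}. The paper carries out the intertwining computation you defer in exactly the way you describe, so there is no substantive difference in approach.
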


\begin{proof}
Let us reinterpret in our quiver-theoretic  language the Katz definition of the invariant subspace
$V$ eq.\eqref{whatV}
\be
V\equiv\bigoplus_{i=1}^{s-1}\mathrm{ker}(A_i) \oplus \mathrm{ker}(\sum_i G_i)=
\bigoplus_{i=1}^{s-1} \C^n/E_{i,1}\oplus \C^n/E_{s,1}\simeq \bigoplus_{i=1}^s \C^n/E_{i,1},
\ee
where the last isomorphism shows that the definition is symmetric between all $s$
special points $z_i$. Moreover we see that
\be
W\simeq \left(\bigoplus_{i=1}^s E_{i,1}\right)\bigg/X_\star\simeq \left(\bigoplus_{i=1}^s E_{i,1}\right)\bigg/\mu\pi X_\oplus\equiv X^\prime_\star.
\ee

The residue matrices $A^\prime_i$ of the Seiberg dual ODE, set in their normal form \eqref{norm1}-\eqref{norm3},
are given by \eqref{jjjuqqqz12z}  (here $i=1,2,\cdots\!,s-1$, $w_{i,1}=0$, and no sum over repeated indices)
\be
A^\prime_i\equiv X^\prime_{i,1}X^{\prime\,\ast}_{i,1}=\lambda_\star\,\mu_{i,1}\pi_{i,1}+\sum_{j=1}^s \mu_{j,1}X^*_{j,1}\pi_{i,1}\circeq \lambda_\star\,\mu_{i,1}\pi_{i,1}+\sum_{j=1}^s \mu_{j,1}(A_j- \lambda_\ast\delta_{j,s})\pi_{i,1}
\ee
where we used notations and formulae from \S.\,\ref{s:sseibd} and $\circeq$ means
equality in the defining gauge \eqref{defgauge}. It is easy to see that this formula
coincides with \eqref{whattttA}.
\end{proof}

\begin{correspondence}  If we know the solutions to one irreducible Fuchsian ODE, we can write an integral representation for the
 solutions of all ODEs in its 
Seiberg orbit in terms of reiterated integrals of the form \eqref{solution} (Riemann-Liouville transforms).
\end{correspondence}

All classical integral representations for the solution of special Fuchsian equations, such as
the order-$n$ generalized hypergeometric equations, the Pochhammer equations, the one-variable
reductions of the Appell equations, \emph{etc.,} and also the modern additions to the list \cite{new},
were obtained by methods which are equivalent to a chain of Seiberg dualities.

\subsection{Seiberg dualities and connection formulae}
A theorem by Oshima \cite{AA2,AA2b} states that the connection coefficients
of two ODEs in the same Seiberg duality orbit differ by a factor
which is a ratio of products of Euler's Gamma functions which is explicitly given by the theorem.
In particular, when the ODE is rigid, hence dual to the 1st order equation which has
trivial
connection coefficients, the connection coefficient is given by the 
  Oshima ratio of products of Gamma functions.
 \medskip
 
 More generally, Seiberg duality gives an equivalence between the isomonodromic deformations of dual Fuchsian ODEs \cite{AA3}.
 
 \subsection{Motivic ODEs and Seiberg duality}
 
The local monodromies $\varrho_i$ of a Fuchsian ODE are quasi-unipotent if and only if
 the FI couplings $\lambda_\sigma$ are rational numbers.
 The Seiberg transformation
\eqref{seitransf} maps rational FI couplings into rational ones so
the quasi-unipotency of the local monodromies is preserved by Seiberg duality.
When $\lambda_\sigma\in\mathbb{Q}$ the (closure of the)
 accessory 
parameter space $\mathscr{A}$ is an algebraic variety defined over $\mathbb{Q}$
and a Seiberg duality is a morphisms $\mathscr{A}\to \mathscr{A}^{\,\prime}$
which is also defined over $\mathbb{Q}$. 

Recall that we say that an ODE is motivic if its solutions can be written as 
integrals of differential forms with algebraic coefficients. In particular a motivic ODEs
has quasi-unipotent local monodromies. From eq.\eqref{solution} we see that the
property of being motivic is preserved by Seiberg duality. This implies a special case
of the Deligne-Simposon conjecture: \emph{rigid Fuchsian ODEs are motivic.}

 \section{Examples}
 \label{s:exam}
 
 \subsection{ODEs of order 2}
 
 The minimal form of an order-2 ODE with $s\geq3$ non-apparent regular singularities
 corresponds to a Kac-Moody root of the form
\be\label{oreder2}
\begin{gathered}
 \xymatrix{*++[o][F-]{1}\ar@{-}[dr]& \cdots\cdots & *++[o][F-]{1}\ar@{-}[dl]\\
*++[o][F-]{1}\ar@{-}[r] & *++[o][F-]{2}\\
& *++[o][F-]{1}\ar@{-}[u]}
\end{gathered}
\ee
 where the graph has $s$ peripheral nodes with $N_{i,1}=1$ while $N_\star =2$.
 One has $p(s)=s-3$ and $\dim\ca=2(s-3)$. For $s\geq4$ the root \eqref{oreder2} lays in the fundamental region,
 and the ODE cannot be further simplified.
 \medskip
 
 The BPS quiver $\boldsymbol{Q}(\boldsymbol{p})_\textsc{BPS}$ describes
$\cn=2$ 4d  $SU(2)$ SYM coupled to $N_f\equiv s\geq3$ fundamental hypermultiplets. 
For $N_f>4$ this 4d Lagrangian theory is UV non-complete (but still makes sense as a low-energy effective
theory). The $\cn=2$ model has a flavor symmetry $\mathsf{Spin}(2N_f)$
and the magnetic monopoles\footnote{\ Recall that magnetic monopole is a bit of a misname. The physical electric charge is typically non-zero.} are in the spinorial representation: the state in \eqref{oreder2} has the  highest flavor weight.
Its \emph{conventional} electric charge is zero, but the physical electric charge is $e_\text{ph}\equiv -s$
(in a normalization where the $W$-boson has charge $2$).

\begin{correspondence}
Order-2 irreducible Fuchsian ODEs correspond to BPS dyons of $\cn=2$ $SU(2)$ SQCD with $N_f=s$
 of magnetic charge 2, \emph{physical} electric charge $-N_f$, and flavor spinor highest weight.  
 \end{correspondence}
 
 \subsubsection{Hypergeometric of the $2^\text{nd}$ order: Asymptotic Freedom}
We specialize the above story to $s=3$, i.e.\! $SU(2)$ with $N_f=3$.
Since the 4d QFT is asymptotically free, its Kac-Moody Lie algebra 
 is finite-dimensional, indeed the Lie algebra $D_4\equiv\mathfrak{so}(8)$.
All its roots are real, hence for generic exponents $w_{i,\ell}$
 the ODE is irreducible and rigid.
Therefore this ODE (the hypergeometric equation of the $2^\text{nd}$ order):  \textit{(i)} has a rigid monodromy,
 \textit{(ii)} its solutions have an integral representation, and 
 \textit{(iii)} their connection coefficients are products of $\Gamma$-functions,
 three facts discovered by Riemann in 1851.
 Let us check that they are an easy consequence of Seiberg duality.
The duality at the central node has the effect
$$
\begin{gathered}
 \xymatrix{& *++[o][F-]{1}\ar@{-}[d] &&&&&& *++[o][F-]{1}\ar@{-}[d]\\
*++[o][F-]{1}\ar@{-}[r] & *++[o][F-]{2}\ar@{-}[r]& *++[o][F-]{1} &\ar@{=>}[rr]^{\text{Seiberg duality}\atop \text{of SU(2) SYM}}
&&& *++[o][F-]{1}\ar@{-}[r] & *++[o][F-]{1}\ar@{-}[r]& *++[o][F-]{1}
&\ar@{=>}[rr]^{\text{contract}\atop \text{isomorphisms}}
&&&*++[o][F-]{1}}
\end{gathered}
$$
and after contracting the isomorphisms (cf.\! \textbf{Corollary \ref{corrr}}) we remain with the simple root $\alpha_\star$, i.e.\! with the $1^\text{st}$ order  equation which is described by the 1d free field theory or, in the 4d language,
 by a free $\cn=2$
 hypermultiplet. The Seiberg duality acts on the FI couplings as
$$
\begin{gathered}
 \xymatrix{& {(c-a-b)}\ar@{-}[d] &&&&&& (c-b)\ar@{-}[d]\\
(-c)\ar@{-}[r] & (a)\ar@{-}[r]& (b-a)&\ar@{=>}[rr]^{\text{Seiberg duality}\atop \text{of SU(2) SYM}}
&&& (a-c)\ar@{-}[r] & (-a)\ar@{-}[r]& (b)}
\end{gathered}
$$
 so that the dual $1^\text{st}$ ODE is
 \be
 \frac{dy}{dz}-\left(\frac{a-c}{z}+\frac{c-b}{z-1}\right)\!y=0\qquad\Rightarrow\qquad y(z)=C\, z^{a-c}\,(z-1)^{c-b},
 \ee
 and from eq.\eqref{solution} one solution to the original ODE is
 \be
Y(z)=\int_\gamma\frac{y(s)}{s-1}\,(s-z)^{\lambda_\star}\,ds= C\!\int_\gamma \frac{s^{a-c}(s-1)^{c-b-1}}{(s-z)^a}\,ds= C\!\int_{\gamma^\prime} \frac{t^{b-1}(1-t)^{c-b-1}\,dt}{(1-zt)^a}
 \ee
 which up to the normalization constant $C$ (and the appropriate choice of contour $\gamma^\prime$)
 is the standard integral representation of the hypergeometric function $F(a,b,c;z)$. 
 \medskip
 
 This result can be easily generalized
 
 \begin{correspondence} An irreducible ODE described by an asymptotically free {\rm(AF)} 4d $\cn=2$ gauge model 
 (i.e.\! $\beta_\text{YM}<0$) is \emph{rigid}, hence Seiberg dual to the $1^\text{st}$ order ODE.  
 \end{correspondence}
The sincere AD-minimal roots for ``asymptotically free'' ODEs
 are\footnote{\ We write the roots as in \textsc{Plance} V-VII of \cite{bourba} which can be used to check the statement.}
 \be\label{finite}
 \begin{smallmatrix}1&2&1\\&1&\end{smallmatrix}\qquad
 \begin{smallmatrix}1&2&3&2&1\\&&1&\end{smallmatrix}\qquad
 \begin{smallmatrix}1&2&4&3&2&1\\&&2&\end{smallmatrix}\qquad
  \begin{smallmatrix}1&3&5&4&3&2&1\\&&2&\end{smallmatrix}\qquad
   \begin{smallmatrix}2&4&6&4&3&2&1\\&&3&\end{smallmatrix}
 \ee
 In the classification of \cite{classification} the corresponding asymptotically-free 4d $\cn=2$ QFTs are called, respectively,
 $SU(2)$ with $N_f=3$, $\widetilde{E}_6$, $\widetilde{E}_7$, $\widetilde{E}_8$
 and $\widetilde{E}_8$.
 
 \begin{rem} Riemann's result is heuristically ``obvious''. When $E\simeq\co^{\,n}$ there is no topological obstruction for
 the world-line theory to flow to the trivial theory, and heuristically the magnetic coupling flows to zero in the IR, so
 it comes as no surprise that its IR effective theory is dual to the free one. 
 \end{rem}

\subsubsection{Heun equation}
 
 For $s=4$ eq.\eqref{oreder2} describes a BPS monopole of $SU(2)$ with $N_f=4$,
 a SCFT with $\beta_\text{YM}=0$. Eq.\eqref{oreder2} is the indivisible imaginary root $\delta$
 of the affine Lie algebra $\widehat{D}_4$, and $p(\delta)=1$.
 The corresponding ODE is  
 the \emph{Heun equation} \cite{heun}. 
 \medskip
 
 $\delta$ belongs to the fundamental region, so the Heun equation is a \emph{basic Fuchsian ODE}
 and its order or number of regular singularities cannot be further reduced.
 In facts the dimension vector $\delta$ is invariant  under the affine Weyl group hence
 all ODEs in its Seiberg orbit are Heun equations.

 However the group of admissible Seiberg dualities (which for generic FI couplings is isomorphic to the
affine Weyl group, hence \emph{infinite}) does act on the FI couplings $\lambda_\sigma$, that is,
on the exponents $\boldsymbol{w}$ of the equation, and also on the accessory parameters $a\in\ca$.
Hence from the solution for one set of parameters $(\boldsymbol{w},a)$ we get
the solution for \emph{infinitely many} other values of the parameters.
For the corresponding statements in a more
classical language and their relations with the theory of the Painlev\'e VI equation, see \cite{jap}.

Then Seiberg duality yields a set of functional equations for the Heun
connection coefficients. Write the action of the Seiberg duality in the form
$(\boldsymbol{w},a)\mapsto (\boldsymbol{w}^\prime,a^\prime)=(\phi(\boldsymbol{w}),f(\boldsymbol{w},a))$
and $C(\boldsymbol{w},a)$ for the connection coefficients. We get a 
functional equation of the schematic form
\be
C(\phi(\boldsymbol{w}),f(\boldsymbol{w},a))=\text{(a known ratio of products of $\Gamma$-functions)}\, C(\boldsymbol{w},a).
\ee
It is plausible that these functional equations are strong enough to determine the 
connection coefficients completely (in principle).
\medskip  

Again these fenomena appear in all ODEs described by the minimal imaginary root $\delta$
of an affine Lie algebra whose Dynkin graph is a star. There are 4 such situations
\be\label{affinebas}
 \begin{smallmatrix}&1\\1&2&1\\&1&\end{smallmatrix}\qquad\quad
  \begin{smallmatrix}1&2&3&2&1\\&&2\\ && 1\end{smallmatrix}\qquad\quad
   \begin{smallmatrix}1&2&3&4&3&2&1\\&&&2&\end{smallmatrix}\qquad\quad
 \begin{smallmatrix}1&2&3&4&5&6&4&2\\&&&&&3&\end{smallmatrix}
\ee
The corresponding $\cn=2$ SCFT are, respectively: $SU(2)$ with $N_f=4$,
$\widetilde{\tilde E}_6$, $\widetilde{\tilde E}_7$, and $\widetilde{\tilde E}_8$
(again in the notation of \cite{classification}).
 
\subsection{Higher order hypergeometrics and Pochhammer equations}

These are two classical $19^\text{th}$ century families of order $n$ rigid indecomposable ODEs which exist for all $n$.
The Thomae order-$n$ hypergeometric ODE \cite{thomae} corresponds to the AD-minimal real root
\be
\begin{gathered}
\xymatrix{&&&& *++[o][F-]{1}\ar@{-}[d]\\
*++[o][F-]{1}\ar@{-}[r]& *++[o][F-]{2}\ar@{-}[r] & *++[o][F-]{3}\ar@{-}[r]&\cdots\ar@{-}[r] &*++[o][F-]{n}\ar@{-}[r]
& \cdots \ar@{-}[r] & *++[o][F-]{3}\ar@{-}[r]& *++[o][F-]{2}\ar@{-}[r]& *++[o][F-]{1}}
\end{gathered}
\ee
For a very detailed description of its monodromy representation see \cite{hyperg}.

A Seiberg duality at the central node, followed by contraction of isomorphisms,
 transforms the hypergeometric ODE of order $n$ into the one of order $(n-1)$
\be
\begin{smallmatrix}&&&&1\\
1&2&\cdots&(n-1)&n&(n-1)&\cdots&2&1
\end{smallmatrix}\ \leadsto\  
\begin{smallmatrix}&&&&1\\
1&2&\cdots&(n-1)&(n-1)&(n-1)&\cdots&2&1
\end{smallmatrix}\ \leadsto\  
\begin{smallmatrix}&&&&1\\
1&2&\cdots&(n-2)&(n-1)&(n-2)&\cdots&2&1
\end{smallmatrix}
\ee

The order $n$ Pochhammer equation \cite{pochh} corresponds to the real root
\be
\begin{gathered}
 \xymatrix{*++[o][F-]{1}\ar@{-}[dr]& \cdots\cdots & *++[o][F-]{1}\ar@{-}[dl]\\
*++[o][F-]{1}\ar@{-}[r] & *++[o][F-]{n}\\
& *++[o][F-]{1}\ar@{-}[u]}
\end{gathered}
\ee
 where the number of peripheral nodes is $s=n+1$. A Seiberg duality at the central node make it into the $1^\text{st}$ order equation.
The Pochhammer equation corresponds to a dyon of $SU(2)$ with $N_f=n+1$ fundamentals with magnetic charge $n$,
 highest flavor weight, and \emph{physical} electric charge $-(n+1)$.

 \subsection{Simpson even/odd series}
 
The infinitely-many sincere \emph{real} positive roots of an affine Lie algebra are easy to write down explicitly: they have the form $\alpha+ r\delta$ for a real root $\alpha$ of the finite-dimensional Lie algebra of the same type.
Since $r$ can be any integer, we get an \emph{infinite sequence} of rigid ODEs of increasing order.
When this strategy is applied to $\widehat{D}_4$ we get the two Simpson series of rigid ODEs
of even resp.\! odd order \cite{sim1,new}.

For order $2n$ we get the series of rigid ODEs
\be\label{ssss1}
\begin{gathered}
 \xymatrix{&*++[o][F-]{\,n\mspace{-5mu}-\mspace{-5mu}1\,}\ar@{-}[d]\\
 *++[o][F-]{n}\ar@{-}[r] &
*++[o][F-]{2n}\ar@{-}[r] & *++[o][F-]{n}\\
& *++[o][F-]{n}\ar@{-}[u]}
\end{gathered}
\ee
for order $2n+1$ we get the series of rigid ODEs
\be\label{ssss2}
\begin{gathered}
 \xymatrix{&*++[o][F-]{n}\ar@{-}[d]\\
 *++[o][F-]{n}\ar@{-}[r] &
*++[o][F-]{\,2n\mspace{-5mu}+\mspace{-4mu}\!1\,}\ar@{-}[r] & *++[o][F-]{n}\\
& *++[o][F-]{n}\ar@{-}[u]}
\end{gathered}
\ee

\begin{correspondence} All Simpson rigid ODEs are described by
a BPS hypermultiplet
of 4d $\cn=2$ $SU(2)$ SYM with $N_f=4$ fundamentals carrying a magnetic charge
equal to the order of the ODE.
\end{correspondence}

Similar constructions give other infinite sequences of rigid ODEs
based on the other 3 affine Lie algebras whose graph is a star:
$\widehat{E}_6$, $\widehat{E}_7$, and $\widehat{E}_8$.
They correspond to magnetically charged hypermultiplets
of the three exceptional complete SCFTs constructed in \cite{classification}:
$\widetilde{\tilde E}_6$,  $\widetilde{\tilde E}_7$, and $\widetilde{\tilde E}_8$.
  
 \subsection{Rigid irreducible ODEs for small $n$}
 
 The irreducible rigid ODEs have been classified (using Kac-Moody techniques)
 for small $n$ in ref.\! \cite{AA2b}. From this reference we take table \ref{tttable}.

 \begin{table}
 \begin{tabular}{c@{\hskip0.5cm}cccccccccccccc}\hline\hline
$n$ & 2 & 3 & 4 & 5 & 6 & 7 & 8 & 9 & 10 & 11 & 12 & 13 & 14 & 15\\
$\#$\text{(rigid ODEs)} & 1 & 2 & 6 & 11 & 28 & 44 & 96 & 157 & 302 & 441 & 857 & 1177 & 2032 & 2841\\\hline\hline
\end{tabular}
\caption{\label{tttable}The number of irreducible rigid Fuchsian ODEs for small order $n$
(from ref.\cite{AA2b})}
 \end{table}

 \subparagraph{Order 3.} At order 3 there are just two irreducible rigid ODE, namely the hypergeometric
 equation, associated to a root of the finite-dimensional Lie algebra $E_6$ (cf.\ eq.\eqref{finite}) 
 and the Pochhammer one which is associated to
 the root $\delta+\alpha_\star$ of $\widehat{D}_4$.

 \subparagraph{Order 4.} At order 4, besides the hypergeometric and the Pochhammer, we have other 4 rigid indecomposable ODEs.
They were already discovered by Goursat in 1886 \cite{goursat} (for a modern account see \cite{account}).
Besides the hypergeometric, associated to the root $\delta-\alpha_s$ of $\widehat{E}_7$ with $\alpha_s$ the simple root in the short branch) and the Pochhammer equation, we have the ODEs in figure
\ref{rigid4} (the roman numeral is
Goursat's symbol for the equation). Except for Goursat type V (whose solutions can be written in terms of Appell functions \cite{NIST}), all other ODEs in the figure describe dyons of a UV-complete (AF or superconformal)
4d QFT.

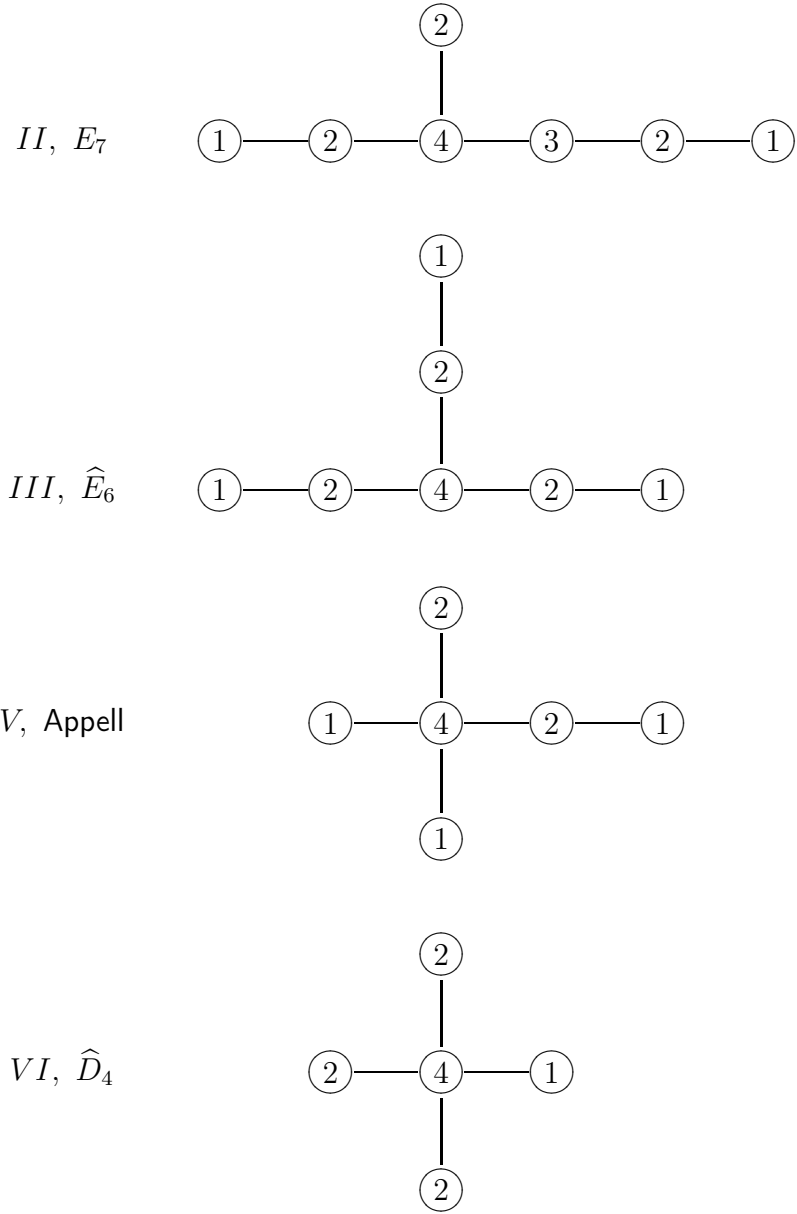
\begin{figure}
$$
\xymatrix{%&&&&1\ar@{-}[d]\\
%I,\ \hat{E}_7,\ \text{\sf hypergeometric:} &1\ar@{-}[r]&2\ar@{-}[r]&3\ar@{-}[r]&4\ar@{-}[r]&3\ar@{-}[r]&2\ar@{-}[r]&1\\
&&&*++[o][F-]{2}\ar@{-}[d]\\
II,\ {E}_7\hfill &*++[o][F-]{1}\ar@{-}[r]&*++[o][F-]{2}\ar@{-}[r]&*++[o][F-]{4}\ar@{-}[r]&*++[o][F-]{3}\ar@{-}[r]&*++[o][F-]{2}\ar@{-}[r]&*++[o][F-]{1}\\
&&&*++[o][F-]{1}\ar@{-}[d]\\
&&&*++[o][F-]{2}\ar@{-}[d]\\
III,\ \widehat{E}_6\hfill &*++[o][F-]{1}\ar@{-}[r]&*++[o][F-]{2}\ar@{-}[r]&*++[o][F-]{4}\ar@{-}[r]&*++[o][F-]{2}\ar@{-}[r]&*++[o][F-]{1}\\
&&&*++[o][F-]{2}\ar@{-}[d]\\
V,\ \text{\sf Appell}\hfill &&*++[o][F-]{1}\ar@{-}[r]&*++[o][F-]{4}\ar@{-}[r]&*++[o][F-]{2}\ar@{-}[r]&*++[o][F-]{1}\\
&&& *++[o][F-]{1}\ar@{-}[u]\\
&&&*++[o][F-]{2}\ar@{-}[d]\\
VI,\ \widehat{D}_4\hfill &&*++[o][F-]{2}\ar@{-}[r]&*++[o][F-]{4}\ar@{-}[r]&*++[o][F-]{1}\\
&&& *++[o][F-]{2}\ar@{-}[u]\\
%&1\ar@{-}[dr]&1\ar@{-}[d] &1\ar@{-}[dl]\\
%VII,\ \text{\sf Pochhammer:} 
%&1\ar@{-}[r]&4\ar@{-}[r]&1
}
$$
\caption{\label{rigid4}The Kac-Moody Dynkin graphs
and the corresponding real positive (sincere, AD-minimal) roots
corresponding to four of the six rigid irreducible order-4 Fuchsian equations.
The other two order-4 rigid ODEs not included in the figure 
are the order-4 Thomae hypergeometric (Goursat type I)
and the order-4 Pochhammer ODE (Goursat type VII).}
\end{figure}

\subsection{Seiberg orbits of ODEs}

Classifying Seiberg orbits of irreducible ODEs is the same as classifying basic ODEs, i.e.\!
equations which cannot be further simplified by IR dualities. A basic ODE is specified
by a quadruple $(\Lambda,\boldsymbol{N},\boldsymbol{\lambda},a)$
where
\begin{itemize}
\item $\Lambda$ is a star-shaped Dynkin graph
\item $\boldsymbol{N}=\sum_\sigma N_\sigma \alpha_\sigma$ is an element of $\Sigma_\lambda$
of the corresponding Kac-Moody Lie algebra which cannot be further reduced by Seiberg dualities;
\item $\boldsymbol{\lambda_\sigma}$ is a set of (possibly complex) numbers such that $\sum_\sigma N_\sigma\lambda_\sigma=0$;
\item $a$ is a point in the complexified Higgs branch $\ca$ (the accessory parameter space).
\end{itemize}

The size of a Seiberg orbit depends on the group of admissible Seiberg dualities.
For \emph{generic} exponents the group of the admissible dualities is maximal and coincides with the Weyl group of the Kac-Moody algebra $\mathsf{L}(\boldsymbol{p})$. For generic exponents the basic pairs $(\Lambda,\boldsymbol{N})$ are:
\begin{itemize} 
\item $(A_1,\alpha_1)$ the basic pair in the Seiberg orbit of rigid ODEs ($p(\alpha_1)=0$);
\item $(\Lambda,\beta)$ where $\beta$ is a positive root in the fundamental region, except
that when $\Lambda$ is affine the divisible roots $m \delta$, $m\geq2$ are not basic.
In this case $p(\beta)>0$.
\end{itemize}

On the other extremum, when all $A_i$'s are nilpotents, so that $\lambda_\sigma=0$ at all nodes,
there is no admissible Seiberg duality, and all \emph{irreducible} ODE with nilpotent $A_i$'s
are basic. 
It follows from \textbf{Theorem \ref{ttthmm}} that an irreducible ODE with nilpotent $A_i$ has a pair $(\Lambda,\boldsymbol{N})$
which is basic for generic exponents. Conversely all pairs $(\Lambda,\boldsymbol{N})$ which are basic for generic exponents are also basic when all exponents vanish \emph{with one exception} \cite{CB1}:
\begin{exce}
The Fuchsian ODE with nilpotent local monodromies and pair $(\Lambda,\beta)$
 is \emph{not} irreducible when the corresponding $\Lambda$ is obtained from an affine Dynkin diagram $\varGamma$ by adding a new vertex $w$ connected to an extending vertex of $\varGamma$, the restriction of $\beta$ to $\varGamma$ is a proper multiple of $\delta$, and $N_w=1$.
 \end{exce}
 Thus, while a pair $(\Lambda,\boldsymbol{N})$
which is basic for generic exponents may become reducible when we specialize
the $w_{i,\ell}$, this phenomenon is rather rare and ``typically'' the ODE will remain irreducible even after specialization. Then  the classification of Seiberg orbits for generic exponents suffices for almost all purposes. This classification was already given in \S.\,\ref{s:star}.
\medskip

An important observation is that there is only a \emph{finite number} of (generically) basic pairs $(\Lambda,\beta)$
with a given half-dimension $p(\beta)$ of $\ca$. For instance for $p(\beta)=1$ we have just 4
basic pairs corresponding to the minimal affine imaginary roots in eq.\eqref{affinebas}.

The number of basic pairs $(\Lambda,\beta)$ for small $p(\beta)$ can be read in ref.\cite{AA2b}:
see table \ref{tttable2}.

 \begin{table}
$$
 \begin{tabular}{c@{\hskip0.5cm}ccccccccccc}\hline\hline
$p(\beta)$ & 0 & 1 & 2 & 3 & 4 & 5 & 6 & 7 & 8 & 9 & 10 \\
$\#$\text{(basic pairs)} & 1 & 4 & 13 & 36 & 67 & 90 & 162 & 243 & 305 & 420 & 565\\\hline\hline
\end{tabular}
$$
\caption{\label{tttable2}The number of basic pairs $(\Lambda,\beta)$ with given $p(\beta)$
(taken from ref.\cite{AA2b})}
 \end{table}

 \section*{Acknowledgments}
 I have greatly benefited from long and frutiful discussions with  Michele Del Zotto, Mario Martone, 
 and Robert Moscrop. 
 
 \appendix
 
 \section{Appendix}
 
 We prove the following  
 
 \begin{lem} The AD matter systems defined by different polynomials $P_i(w)$ and/or
 different orders of their linear factors are related by inner IR dualities of the $i$-th AD matter system. 
 \end{lem}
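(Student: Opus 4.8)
The plan is to prove that any two presentations of the conjugacy class $c_i=[A_i]$ --- a polynomial $P_i(w)$ with $P_i(A_i)=0$ together with a chosen order of its linear factors, as in \eqref{juyqw123}--\eqref{filttt} --- can be joined, by internal IR dualities of the $i$-th branch alone, to a single normal form: the minimal polynomial $P_i^{\min}(w)$ of $A_i$ with a fixed reference ordering. Because the internal dualities are invertible (the reflection functors $\mathfrak{s}_{\bullet_{i,\ell}}$ are equivalences, and the two versions of move \textbf{(b1)} --- inserting versus deleting a trailing string of zeros --- are mutually inverse), reducing every presentation to the \emph{same} normal form immediately gives that any two presentations are related, which is the assertion. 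Throughout, none of these moves alters $A_i$ itself, so they genuinely are dualities internal to the $i$-th matter system.

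First I would dispose of reordering. By the \textbf{Lemma} of \S\,\ref{s:uu77776c|}, the internal Seiberg duality at $\bullet_{i,\ell}$ transposes the adjacent factors $(w-w_{i,\ell})$ and $(w-w_{i,\ell+1})$, and it is admissible exactly when its FI coupling $\lambda_{i,\ell}=w_{i,\ell+1}-w_{i,\ell}$ is nonzero, i.e. exactly when the two factors are distinct; transposing two equal factors is a no-op. Since adjacent transpositions generate the symmetric group, this shows that for a \emph{fixed} $P_i(w)$ every reordering of its factors is realized by a composition of admissible internal reflections (distinct neighbours are swapped by a reflection, equal neighbours for free).

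Next comes the reduction of the degree, by induction on $\deg P_i$. If $\deg P_i>\deg P_i^{\min}$ then $P_i=P_i^{\min}\cdot R$ with $\deg R\geq1$; pick a root $c$ of $R$, so $P_i/(w-c)$ is still divisible by $P_i^{\min}$ and hence still annihilates $A_i$. Using the previous step, reorder so that a single copy of $(w-c)$ is the last factor, the others coming first in any order. Since the product of those first $p_i-1$ factors is exactly $P_i/(w-c)$, the space at the last peripheral node is
\[
E_{i,p_i-1}=\big[\,P_i/(w-c)\,\big](A_i)\,E_{z_i}=0 ,
\]
so that node has dimension $0$: a trailing zero, which I delete by move \textbf{(b1)}. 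What remains is the length-$(p_i-1)$ branch of the polynomial $P_i/(w-c)$. Iterating drives the degree down to $\deg P_i^{\min}$, where $P_i$, being a monic multiple of $P_i^{\min}$ of the same degree, equals $P_i^{\min}$; a final reordering puts its factors into the reference order. (Equivalently, any redundant factor that instead yields an interior isomorphism arrow can be contracted directly by \textbf{Corollary \ref{corrr}}; the trailing-zero route is just the cheapest.)

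The hard part is purely the admissibility bookkeeping: I must check that every nontrivial transposition I invoke is performed at a node whose FI coupling is nonzero \emph{at that moment}. This holds because the coupling at $\bullet_{i,\ell}$ is always the difference of the two exponents currently occupying positions $\ell,\ell+1$, and I only ever call for a reflection when those exponents differ, commuting equal exponents for free. The one remaining thing to verify is that discarding a factor beyond the minimal polynomial really does surface as a removable ($0$- or isomorphism-) node, which is precisely what the commutativity identity $E_{i,p_i-1}=[P_i/(w-c)](A_i)\,E_{z_i}=0$ guarantees once the superfluous factor has been pushed to the end.
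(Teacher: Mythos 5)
Your proof is correct and follows essentially the same route as the paper's: reduce every presentation of $c_i$ to the minimal polynomial in a fixed reference order, using adjacent transpositions of linear factors (realized as admissible internal reflections when the two exponents differ, and as the identity when they coincide) together with deletion of the trailing zero nodes produced by redundant factors, and then invoke invertibility of these moves. The one caveat is that you cite the single-transposition Lemma of \S\,\ref{s:uu77776c|} as a black box, whereas the substantive core of the paper's appendix argument is precisely the verification of that Lemma --- the Jordan-block count showing that swapping $(w-w_{i,j})$ and $(w-w_{i,j+1})$ leaves every rank unchanged except $N_{i,j}\leadsto N_{i,j+1}+N_{i,j-1}-N_{i,j}$, i.e.\ acts as the Weyl reflection at $\bullet_{i,j}$, with the FI couplings transforming accordingly --- so a self-contained proof would need to include that computation, the paper itself having deferred the ``complete proof'' of that in-text Lemma to this very appendix.
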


\begin{proof} First of all let us show that reordering the linear factors of $P_i(z)$ amounts to an IR duality.
It is enough to check that switching the order of the $j$-th and $(j+1)$-th linear factors is an admissible
IR duality. Clearly the ranks $N_{i,\ell}$
for $\ell\neq j,j+1$ will not be affected by the switch. $E_{i,j-1}$ is the $\C$-space 
 \begin{equation}
E_{i,j-1}\equiv \prod_{a<j}(A_i -w_{i,a})\C^k,\qquad \dim E_{i,j-1}\equiv N_{i,j-1}.
 \end{equation}
$E_{i,j-1}\subset \C^k$ is an $A_i$-invariant subspace, so the linear endomorphism 
 \begin{equation}
A_i|_{E_{i,j-1}}\colon E_{i,j-1}\to E_{i,j-1}
 \end{equation}
 is well-defined.
We write $N^{\prime}_{i,a}$ for the gauge ranks after the switch of the two linear factors. We have 
 \begin{equation}
\begin{gathered}
N_{i,j}=\dim (A_i-w_{i,j})E_{i,j-1},\qquad N_{i,j+1}=\dim (A_i-w_{i,j+1})(A_i-w_{i,j})E_{i,j-1}\\
N_{i,j}^\prime=\dim (A_i-w_{i,j+1})E_{i,j-1},\qquad N_{i,j+1}^\prime=\dim (A_i-w_{i,j})(A_i-w_{i,j+1})E_{i,j-1}.
\end{gathered}
 \end{equation}
If $w_{i,j}=w_{i,j+1}$ the order switch acts as the identity. Otherwise, let $n(w)$ be
 the number of Jordan blocks of $A_i|_{E_{i,j-1}}$ with eigenvalue $w$. One has
 \begin{align}
 & N_{i,j}= \dim E_{i,j-1}-n(w_{i,j}),&&N_{i,j+1}= \dim E_{i,j-1}- n(w_{i,j})-n(w_{i,j+1})\\
 & N^\prime_{i,j}= \dim E_{i,j-1}-n(w_{i,j+1}),&&N^\prime_{i,j+1}= \dim E_{i,j-1}- n(w_{i,j})-n(w_{i,j+1})
 \end{align} 
 so that only the rank $N_{i,j}$ changes
 \begin{equation}\label{iopr}
 N_{i,j}\leadsto N^\prime_{i,j}= N_{i,j+1}+N_{i,j-1}-N_{i,j}
 \end{equation} 
 since $\dim E_{i,j-1}\equiv N_{i,j-1}$. Eqn.\eqref{iopr} is just the Seiberg duality ($\equiv$ Weyl reflection)
 at the $j$-th node of the $i$-th branch. We have $\lambda_{i,j}=-w_{i,j}+w_{i,j+1}\neq0$,
 and the Seiberg duality at $\alpha_{i,j}$ is an admissible IR duality. The new FI couplings are
 \begin{equation}
 \begin{aligned}
 \lambda^\prime_{i,j-1}&= \lambda_{i,j-1}+\lambda_{i,j}= -w_{i,j-1}+w_{i,j+1}\\
 \lambda^\prime_{i,j}&= -\lambda_{i,j}= -w_{i,j+1}+w_{i,j}\\
 \lambda^\prime_{i,j+1}&= \lambda_{i,j+1}+\lambda_{i,j}= -w_{i,j}+w_{i,j+2}\\
 \lambda^\prime_{i,\ell}&= \lambda_{i,\ell}\qquad \text{for }a\neq j-1,j,j+1,
 \end{aligned}
 \end{equation}
 which agree with the switch $w_{i,j}\leftrightarrow w_{i,j+1}$.
 We conclude that reordering the linear factors corresponds to an inner admissible
 IR duality.
 
 Next we show that replacing the polynomial $P_i(z)$ by a multiple $Q(z)P_i(z)$
 amounts to a admissible IR duality. By the previous result, by an IR
 duality we can transport all linear factors of $Q(z)$ to the left of the linear factors of $P_i(z)$.
 At this point the only effect of $Q(z)$ is to add at the end of the ``matter sector'' linear quiver
 a string of $\deg Q(z)$, zeros since 
 \be
 E_{i,\deg P_i}=E_{i,\deg P_i+1}=\cdots =E_{i,\deg P_i+\deg Q-1}=0.
 \ee 
 We are instructed to cancel the nodes of the quiver which do not belong to the support of $\mathbf{dim}\,E\equiv [E]$.
We remain with the quiver and root of the original polynomial $P_i(z)$ which may be chosen to be the minimal one with no loss. We conclude that the various choices do not
 affect the ``matter system'' modulo admissible IR dualities. 
 \end{proof}

\end{document}